\newcommand{\MMS}{\mathrm{MMS}}
\newcommand{\Var}{\mathrm{Var}}
\newcommand{\numberedAllocation}[1]{P^{(#1)}}
\newcommand{\numberedBundle}[2]{P^{(#1)}_{#2}}
\newtheorem{theorem}{Theorem}[section]
\newtheorem{lemma}[theorem]{Lemma}
\newtheorem{proposition}[theorem]{Proposition}
\newtheorem{corollary}[theorem]{Corollary}
\newtheorem{definition}[theorem]{Definition}
\newtheorem{example}[theorem]{Example}
\newcommand{\papertitle}{On Best-of-Both-Worlds Fairness\\ 
via Sum-of-Variances Minimization} 
\title{\papertitle}
\author{Moshe Babaioff\thanks{Hebrew University of Jerusalem (HUJI). Email: moshe.babaioff@mail.huji.ac.il}
\and Yuval Grofman\thanks{Hebrew University of Jerusalem (HUJI). Email: yuval.grofman.46@gmail.com}}
\date{\today}
\begin{document}
\maketitle

\begin{abstract}

We consider the problem of fairly allocating a set of indivisible goods among agents with additive valuations.  Ex-ante fairness (proportionality) can trivially be obtained by giving all goods to a random agent.  Yet, such an allocation is very unfair ex-post.  This has motivated the ``Best-of-Both-Worlds (BoBW)'' approach, seeking a randomized allocation that is ex-ante  proportional and is supported only on ex-post  fair allocations (e.g., on allocations that are envy-free-up-to-one-good (EF1), or give some constant fraction of the maximin share (MMS)). 
{It is commonly pointed out that the distribution  that allocates all goods to one agent at random fails  to be ex-post fair as it ignores the variances of the values of the agents.}
We examine the approach of trying to mitigate this 
problem by (explicitly) minimizing the sum-of-variances 
of the values of the agents, subject to ex-ante 
proportionality. We study the ex-post fairness properties 
of the resulting distributions.
In support of this approach, observe that such an 
optimization will indeed deterministically output 
a proportional allocation  if such exists. 
We show that when valuations are identical, this approach indeed guarantees fairness ex-post: all allocations in the support are envy-free-up-to-any-good (EFX), and thus guarantee every agent at least $4/7$ of her maximin share (but not her full MMS). On the negative side, 
we show that this approach completely fails 
when valuations are not  identical:  even in the simplest setting of only two agents and two goods, when the additive valuations are not identical, there is positive probability of allocating both goods to the same agent. Thus, the supporting ex-post allocation might not be EF1 (and thus also not EFX), and might not give an agent any constant fraction of her MMS. Finally, we present similar negative results for other natural minimization objectives that are based on variances or standard deviations.
\end{abstract}

\newpage
\section{Introduction}


{A fundamental problem in fair division is the problem of fairly partitioning a set $\mathcal{M}$ of indivisible goods 
among a set $\mathcal{N}$ of $n$ equally-entitled agents, when each agent $i$ has an additive valuation function $v_i:2^{\mathcal{M}} \to \mathbb{R}_{\ge0}$ over the goods.
Two central fairness notions are envy-freeness and proportionality: 
An allocation is \emph{envy-free} if each agent prefers her own bundle to the bundle of any other agent, and it is \emph{proportional} if each agent values her bundle at least at $1/n$-fraction of her value for the set of all goods.  
When goods are divisible, envy-free allocations always exist, and thus proportional allocations also exist.
Unfortunately, when goods are indivisible, proportional allocations may not exist, as evident from simple settings with two agents and an odd number of identical goods. 
}

The literature on fair allocation of indivisible goods addresses the indivisibility problem by defining  
several relaxations of the fairness notions of envy-freeness and proportionality. 
Relaxations of envy-freeness include 
envy-free-up-to-one-good (EF1) and 
envy-free-up-to-any-good (EFX). 
The notion of EF1~\cite{LiptonMMS04,Budish11}, 
requires envy to be eliminated by removing one good 
from the envied bundle. 
The stronger notion of EFX
\cite{caragiannis2019unreasonable} requires envy to be eliminated whenever any (positive-valued) good is removed from the envied bundle.
 Relaxations of the proportional share include the celebrated \emph{maximin‐share (MMS)},
 introduced by Budish~\cite{Budish11}. The MMS of an agent is the maximum value that she can guarantee herself by partitioning the goods to $n$ bundles, knowing that she will get the least valued one. 
 The MMS is an upper bound on the value that the worse-off agent will get when all agents have identical valuations. 
Ideally, every agent should get at least her MMS, but even in settings with only three agents with (different) additive valuations, an allocation that gives every agent her MMS might not exist~\cite{ProcacciaWang14} (although it is guaranteed to exist when all valuations are identical).  We further discuss these fairness notions in \cref{sec:related}.
 
Consider a setting with two agents, Alice and Bob, and 11 identical goods.  
An allocation that gives 5 goods to Alice and 6 to Bob 
is EFX (thus EF1) and also gives every agent her MMS, and thus ex-post fair. 
Yet, although Alice's ex-post allocation is fair, she may still argue that she was treated unfairly. 
To take this problem to the extreme, 
consider the case where there is only a single good, rather than 11. 
All three ex-post fairness notions of EFX, EF1 and MMS, are satisfied by deterministically giving the good to Bob. Yet,  
Alice would argue that this is unfair
as she never receives the good; Instead, 
it would be more fair to flip an unbiased coin to decide who will get the good.
Compared with the deterministic allocation 
that assigns
the good to Bob, 
this procedure has the additional advantage of 
giving each agent her proportional share in 
expectation, {and thus is fair ex-ante}.

Obtaining ex-ante proportionality is trivial regardless of the number of goods: simply allocate all goods to a random agent. 
But would agents consider this randomized procedure fair? 
An agent who ends up with nothing when the other gets all goods, might argue that it is not.
Going back to the example with two agents and 11 identical goods, an alternative procedure is to uniformly randomize over which agent gets 5 goods and which gets 6. Not only is this distribution ex-ante proportional, 
but additionally,
every allocation in the support is also fair ex-post, in that it is EFX and guarantees each agent her MMS. 
{Best-of-Both-Worlds distributions such as these obtain desirable fairness notions both ex-ante and ex-post}. 
Multiple recent papers have studied Best-of-Both-Worlds procedures under various fairness notions 
\cite{Aziz2019probabilistic, Freeman2020, Aziz2020, Babaioff2022, Hoefer2022, AzizGG23, FeldmanMNP24,
AleksandrovAGW15, HalpernPPPS20, BabaioffEF21}, we discuss them in \cref{sec:related}.

But why would agents find the uniform distribution over 5 and 6 goods superior to the  uniform distribution over 11 and 0, although both give the same expected value? One intuitive reason is that for the first distribution, each agent has lower variance for her value of the outcome, giving her greater predictability of her final value.  
This suggests a general approach for picking distributions that are fair: pick
a distribution that ``minimizes variances'', subject to being ex-ante proportional. In this paper we aim to formalize this approach and study the ex-post fairness properties of the resulting distributions.

\subsection{Sum-of-Variances Minimization and Our Contributions}
We are interested in formalizing the intuitive approach of picking 
a distribution that ``minimizes variances'', subject to being ex-ante proportional. Given a distribution over allocations, each of the agents has her own variance of her valuation. How should the variances of the different agents be aggregated?  Our main focus in this paper is 
on minimizing the sum-of-variances of the values of the agents\footnote{Which is equivalent to minimizing the average of the variances.}, subject to the distribution being ex-ante proportional.
(In  \cref{sec:beyond} we discuss other objectives).\footnote{Note that the alternative objective of  \emph{minimizing the variance of the sum of values over ex-ante proportional distributions} completely fails to give any ex-post fairness. When all valuations are identical, the sum is constant, and thus any allocation can appear in the support, including giving all goods to a random agent{, which is ex-post unfair}.} 
In support of this approach, observe that such an optimization will indeed deterministically output a proportional  allocation if such exists. 

Note that 
minimizing the sum-of-variances over proportional allocations 
{may} 
result in a \emph{set} of {optimal} candidate distributions, rather than in a unique distribution. All our results will hold for \emph{every} distribution in this set. Thus, our results hold regardless of which distribution is selected.

We also note that one can alternatively consider a variant of the 
sum-of-variances 
objective, in which instead of considering the (absolute) value of the allocation to the agents (the value $v_i(A_i)$ for agent $i$), we consider the \emph{fraction} of the value obtained out of the value of the entire set of items (the ratio $v_i(A_i)/v_i(\mathcal{M})$ for agent $i$). That variant might seem more natural to some.  
{Valuations are considered \emph{normalized} if all agents value the set $\mathcal{M}$ the same. 
When valuations are normalized,}
the problem of minimizing the second objective (w.r.t. the fractions) 
is equivalent to the problem of minimizing the 
sum-of-variances (w.r.t. the values), and thus the set of solutions is identical.
Since all our results are for normalized valuations, they extend to the SoVoR objective 
{regardless of whether the valuations are normalized or not.}

We start our investigation by considering agents with an identical additive valuation.
When there are only two agents, then the optimization indeed results in  distributions that all are supported only on allocations that are MMS ex-post (and thus also EFX).
An immediate implication of this result is that finding an 
ex-ante proportional 
sum-of-variances-minimizing distribution is NP‐hard (\cref{prop:np-hard}).

We then move to consider $n\geq 3$ agents with identical additive valuations. 
While for two agents any ex-ante proportional sum‐of‐variances-minimizing distribution is MMS-fair ex-post (\cref{cor:identical2-mms-exact}), we show that such a guarantee does not hold for $n\geq 3$ agents.
That is, we show that in some settings with $n=3$ agents with \emph{identical} additive valuations, 
any ex-ante proportional sum‐of‐variances-minimizing distribution is not MMS-fair ex-post. 
This is particularly disappointing given that an ex-ante proportional distribution that is MMS-fair ex-post {is guaranteed to} 
exist for any setting with $n$ agents with \emph{identical} additive valuations{: take an  MMS allocation of the identical valuation, and give each agent in turn a random bundle from that allocation.}
\footnote{{When valuations are not identical we know that for more than two agents it is impossible to guarantee every agent her MMS \cite{ProcacciaWang14}, and the best we can hope for is constant MMS approximation.}}
Nevertheless, we show that any ex-ante proportional sum‐of‐variances-minimizing distribution is EFX ex-post, which implies {a constant approximation of the MMS ex-post} 
($2/3$ for three agents, and $4/7$ for more agents 
\cite{AmanatidisBirmpasMarkakis2018}). 
Finally, we show that the MMS approximation obtained for three agents is no better than $275 / 304 \approx 90.4\%$. 

We summarize the results for the case of identical valuations and $n\geq 3$ agents: 
\begin{theorem}
[Fairness of Sum‐Of‐Variances Minimizer, identical valuations]\label{thm:intro-identical}
Consider a setting with $n\geq 3$ agents that have the same additive valuation $v$ over $\mathcal{M}$. 
Let $D$ 
be an ex-ante proportional sum‐of‐variances-minimizing distribution. 
Then \(D\) is EFX ex‐post and thus $4/7$-MMS ex-post ({improved to} 
$2/3$-MMS when $n=3$).
Moreover, the MMS approximation for $n=3$ is at most $275 / 304 \approx 90.4\%$.
\end{theorem}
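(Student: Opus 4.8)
The plan is to reduce the randomized sum-of-variances problem to a purely combinatorial question about a single partition, and then analyze that deterministic object. The first observation is that with identical valuations the ex-ante proportionality constraint is rigid: since $\sum_i v(A_i) = v(\mathcal{M})$ for every allocation in the support, taking expectations gives $\sum_i \E_D[v(A_i)] = v(\mathcal{M})$, and combined with each $\E_D[v(A_i)] \ge v(\mathcal{M})/n$ this forces $\E_D[v(A_i)] = v(\mathcal{M})/n$ for every agent $i$. Hence every mean is fixed and
\[
\sum_i \Var_D[v(A_i)] \;=\; \E_D\left[\sum_i v(A_i)^2\right] \;-\; \frac{v(\mathcal{M})^2}{n},
\]
so minimizing the sum-of-variances is equivalent to minimizing the \emph{linear} functional $\E_D[\sum_i v(A_i)^2]$. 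Writing $c(A) := \sum_i v(A_i)^2$ and $c_{\min} := \min_P \sum_j v(P_j)^2$ (the minimum over all partitions $P$ of $\mathcal{M}$ into $n$ bundles), every feasible $D$ satisfies $\E_D[c(A)] \ge c_{\min}$, and this is attained by \emph{symmetrizing} a minimizing partition (uniformly randomizing which agent receives which bundle), which is ex-ante proportional and has cost exactly $c_{\min}$. Therefore every optimal $D$ has cost $c_{\min}$, and being an average of terms $c(A) \ge c_{\min}$ it must be supported \emph{only} on allocations minimizing $\sum_i v(A_i)^2$. This is the crux of the argument: it replaces the statement ``every allocation in the support of every optimal $D$'' by the single deterministic question ``is a sum-of-squares-minimizing allocation fair?''

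Next I would show that any allocation minimizing $\sum_i v(A_i)^2$ is EFX, via a local-exchange argument. Suppose $A$ were not EFX: some agent $i$ envies some agent $j$ even after removing a positive-valued good $g \in A_j$, i.e.\ $v(A_i) < v(A_j) - v(g)$. Moving $g$ from $A_j$ to $A_i$ changes the cost by
\[
2\,v(g)\bigl(v(A_i) - (v(A_j) - v(g))\bigr) \;<\; 0,
\]
strictly decreasing $\sum_i v(A_i)^2$ and contradicting minimality. Hence every allocation in the support is EFX, which is the first claim. The MMS consequences then follow immediately from the known relationship between EFX and MMS approximation for additive valuations (\cite{AmanatidisBirmpasMarkakis2018}): EFX implies $4/7$-MMS in general and $2/3$-MMS when $n=3$; since this holds allocation-by-allocation, it holds ex-post for $D$.

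Finally, for the upper bound I would exhibit an explicit $3$-agent instance with identical additive valuations in which the sum-of-squares-minimizing partition is unique up to relabeling, yet its least-valued bundle is only a $275/304$ fraction of the MMS. Concretely I would choose integer good values so that (i) the $L^2$-balanced partition selected by sum-of-squares minimization differs from the maximin partition, and (ii) the resulting min-bundle-to-MMS ratio equals $275/304$. Since this minimizer is forced (up to permutation) into the support of \emph{every} optimal $D$, it certifies that no MMS guarantee better than $275/304$ can hold for three agents.

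I expect the last step to be the main obstacle. The reduction and the EFX exchange argument are short and robust, but the upper bound requires engineering a concrete instance whose sum-of-squares minimizer is simultaneously \emph{unique} up to relabeling (so that it cannot be avoided by a cleverly chosen optimal $D$) and strictly worse than its MMS, and then verifying by finite case analysis both that this partition is indeed the unique $L^2$-minimizer and that the instance's MMS is as claimed.
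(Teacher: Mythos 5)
Your reduction and your EFX argument are correct, and they are essentially the paper's own proof in a different (slightly cleaner) dress. Since
\[
\| v(A) - (\textnormal{PS})^n \|^2 \;=\; \sum_i v(A_i)^2 \;-\; \frac{v(\mathcal{M})^2}{n},
\]
your statement ``every optimal $D$ is supported only on allocations minimizing $\sum_i v(A_i)^2$'' is, up to an additive constant, exactly the paper's structural characterization (\cref{lem:variance-cyclic} and \cref{cor:identical3-sov-minimizing}): optimal distributions are supported only on allocations whose value vector minimizes the squared Euclidean distance to $(\textnormal{PS})^n$, proved by the same symmetrization trick (the paper randomizes over the $n$ cyclic shifts of a minimizing partition, you randomize over bundle assignments; both yield an ex-ante proportional distribution attaining the minimum). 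Your exchange argument for EFX differs in one detail: you move the offending good $g$ to the \emph{envying} agent and compute the exact cost change $2\,v(g)\bigl(v(A_i) - (v(A_j) - v(g))\bigr) < 0$, whereas the paper (\cref{thm:identical3-efx}) moves $g$ to the \emph{poorest} agent and invokes strict convexity of a one-variable quadratic. Both are valid; yours is the shorter computation. The MMS consequences via \cite{AmanatidisBirmpasMarkakis2018} are handled identically in both.

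The genuine gap is the ``moreover'' clause. The constant $275/304$ cannot be deduced from the structure of the problem: it is the ratio produced by one concrete instance, and your proposal stops exactly where that proof begins --- ``I would choose integer good values so that \dots the resulting min-bundle-to-MMS ratio equals $275/304$.'' You correctly identify what the instance must satisfy (every sum-of-squares minimizer must shortchange some agent; uniqueness up to relabeling is one sufficient way to force this), but you never exhibit it, and without the instance and its finite verification the upper bound is simply unproven. For comparison, the paper's instance (\cref{example:3identical-lower-bound}) has six goods valued $825, 552, 528, 384, 360, 351$: the unique MMS allocation is $(\{a,f\},\{b,e\},\{c,d\})$ with value vector $(1176, 912, 912)$, so $\MMS = 912$, while the unique (up to permutation) sum-of-squares minimizer is $(\{d,e,f\},\{b,c\},\{a\})$ with value vector $(1095, 1080, 825)$; hence every optimal distribution gives some agent exactly $825 = \tfrac{275}{304}\cdot 912$. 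Establishing both ``unique MMS allocation'' and ``unique $L^2$ minimizer'' requires enumerating all partitions of the six goods into three bundles (the paper found and checked the example by computer search). That enumeration --- not the reduction or the exchange lemma --- is the substance of the last claim, and it is the part your proposal leaves open.
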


{ 
{The proof of the lower bound is shown in} 
two main steps: 
Initially, we present a structural characterization 
of {all allocations} 
in the support of 
any ex-ante proportional sum-of-variances-minimizing 
distribution. 
{We consider mapping every distribution to a point in $\mathbb{R}_{\geq 0}^n$ that corresponds to the expected values of the $n$ agents, 
and identifying every allocation with the vector 
of values it yields to the agents.
For example, when all agents have valuation $v$, the \emph{ex-ante proportional-share vector} is mapped to the $n$ dimensional vector with all entries being the proportional share $v(\mathcal{M})/n$.}
We show that the sum-of-variances of an 
ex-ante proportional distribution can be expressed as 
the weighted average\footnote{ 
    Where the weight of each allocation is 
    determined by its probability. 
}
of the squared Euclidean distances 
of the value vectors of 
its supporting allocations from the {ex-ante proportional-share vector.}
This is shown to {
imply that every 
ex-ante proportional sum-of-variances-minimizing distribution is 
supported only on allocations {with value vectors} that minimize this 
distance
{(this holds as, given any allocation, one can 
find an ex-ante proportional distribution with every 
allocation in the support having the same distance: 
pick one of the $n$ cyclic permutations of such 
an allocation uniformly at random).} 
Then, building on this characterization, we prove 
that every such distance-minimizing allocation must be 
EFX, and therefore every
ex-ante proportional sum-of-variances
minimizing distribution 
is ex-post EFX. This, in turn, immediately yields a  
constant-factor approximation to the MMS 
\cite{AmanatidisBirmpasMarkakis2018}. 

To show that the MMS approximation is not close to $1$, 
we provide a concrete counterexample 
with three agents. We analyze the counterexample using the 
structural characterization discussed above 
and show that for every distance-minimizing allocation there exists an agent which receives 
a bundle valued only at $275 / 304$ fraction of her $\textnormal{MMS}$, thus the  MMS approximation is at most $275 / 304 \approx 90.4\%$. 
}

We then move to consider agents with {non-identical} 
valuations and prove a strong negative result. {In \cref{prop:nonidentical2-impossibility} we show} 
that once valuations are not identical, picking an ex-ante proportional sum‐of‐variances-minimizing distribution {fails to guarantee} 
ex-post fairness: 
even in the {(simplest)} 
setting of only two agents and two goods, {there exists an instance where} 
the additive valuations are not identical (yet normalized), {and yet} every  ex-ante proportional sum‐of‐variances-minimizing distribution 
places positive probability on an allocation that allocates both goods to the same agent. 
Thus, the supporting ex-post allocation might not be 
EF1 (and thus also not EFX), 
and might not give an agent any fraction of her MMS. 
These negative results hold for a setting with only two agents, although in such settings an MMS allocation 
always exists, and such allocations are also EFX.

We summarize the results for the case of non-identical valuations: 

\begin{theorem}[Failure for Non‐Identical Agents]\label{thm:intro-non-identical}
{When agents may have} 
non-identical additive valuations, there is  an instance for which 
the support of every  ex-ante proportional sum‐of‐variances-minimizing distribution 
includes an allocation which is not EF1 ex-post, and it does not provide any constant-factor 
approximation to the MMS ex-post. 

The claim holds even for instances with only two  agents with additive valuations over two goods and the valuations are normalized (both agents have the same value for the set of all goods).
\end{theorem}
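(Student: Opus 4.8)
The plan is to establish \Cref{thm:intro-non-identical} (via \Cref{prop:nonidentical2-impossibility}) with one explicit $2$-agent, $2$-good instance, showing that \emph{every} ex-ante proportional sum-of-variances minimizer must put positive weight on a \emph{monopoly} allocation (one agent gets both goods). Take Alice with $v_1(g_1)=3/5,\ v_1(g_2)=2/5$ and Bob with $v_2(g_1)=9/10,\ v_2(g_2)=1/10$; both value $\mathcal{M}$ at $1$ (normalized) and the valuations differ, so each agent's proportional share is $1/2$. Identifying each allocation with its value vector $(\text{Alice},\text{Bob})$, the four deterministic allocations are $u^{(1)}=(1,0)$ and $u^{(2)}=(0,1)$ (the monopoly allocations), $u^{(3)}=(3/5,1/10)$ (Alice$\to g_1$, Bob$\to g_2$) and $u^{(4)}=(2/5,9/10)$ (Alice$\to g_2$, Bob$\to g_1$). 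First I would note that none is proportional: the monopoly allocations give one agent $0$, and each split allocation leaves one agent strictly below $1/2$; hence no deterministic proportional allocation exists and the minimum sum-of-variances is strictly positive.

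Writing a distribution as $p=(p_1,\dots,p_4)$ on the simplex with mean $\mu=\sum_k p_k u^{(k)}$, I would use $\Var(\text{Alice})+\Var(\text{Bob})=\sum_k p_k\lVert u^{(k)}\rVert^2-\lVert\mu\rVert^2$, with ex-ante proportionality being $\mu_1\ge 1/2$ and $\mu_2\ge 1/2$. The first key step is to pin down the proportional distributions that \emph{avoid} the monopoly allocations, i.e.\ with $p_1=p_2=0$: there $\mu_1=2/5+p_3/5$, so $\mu_1\ge1/2\iff p_3\ge1/2$, while $\mu_2=9/10-(4/5)p_3$, so $\mu_2\ge1/2\iff p_3\le1/2$. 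Thus the \emph{only} such distribution is the equal split $p_3=p_4=1/2$, whose mean is $(1/2,1/2)$ and whose sum-of-variances equals $(1/10)^2+(4/10)^2=17/100$.

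The second key step is to exhibit a \emph{feasible} distribution that uses a monopoly allocation and strictly beats $17/100$. The mixture $p_1=1/6,\ p_4=5/6$ has mean $(1/2,3/4)$, hence is ex-ante proportional, and its sum-of-variances is $\tfrac16\cdot\tfrac{13}{16}+\tfrac56\cdot\tfrac{13}{400}=13/80$. Since $13/80<17/100$ (as $1300<1360$), the equal split is not optimal. As the feasible set is a nonempty compact polytope and the objective is continuous, a minimizer exists and has value at most $13/80<17/100$; but every proportional distribution avoiding the monopoly allocations has value exactly $17/100$. Therefore every minimizer places positive probability on $A^{(1)}$ or $A^{(2)}$. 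To finish, I would read off the ex-post failure: a supported monopoly allocation gives one agent nothing while the other holds both goods, and since each good is strictly positive to each agent, deleting any single good still leaves a positively-valued good, so the deprived agent envies even up to one good — the allocation is not EF1 (a fortiori not \EFX). As that agent gets value $0$ while her $\MMS$ is positive (the singleton partition guarantees her $\min(v_i(g_1),v_i(g_2))>0$), she receives a $0$ fraction of her $\MMS$, ruling out any constant-factor guarantee.

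The hard part is \emph{choosing} the instance, because mixing in monopoly allocations usually \emph{hurts}. For the symmetric family $p_1=p_2=s,\ p_3=p_4=1/2-s$ one gets $\mu=(1/2,1/2)$ and sum-of-variances $s\bigl(1-2(\alpha^2+\beta^2)\bigr)+(\alpha^2+\beta^2)$, where $\alpha=v_1(g_1)-\tfrac12$ and $\beta=v_2(g_1)-\tfrac12$; this increases in $s$ because $\alpha^2+\beta^2<1/2$ for every non-identical normalized pair (the bound $1/2$ is attained only at identical valuations). The gain therefore cannot be symmetric: it must come from an \emph{asymmetric} use of a single monopoly allocation, which only pays off when the two valuations are sufficiently disparate (here $v_2(g_1)=9/10$ versus $v_1(g_1)=3/5$). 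I would locate such candidates by noting that the sum-of-variances is a \emph{concave} function of $p$ (linear minus the convex quadratic $\lVert\mu\rVert^2$), so its minimum over the feasible polytope is attained at a vertex, i.e.\ a distribution supported on at most two allocations — which is exactly why comparing the two-allocation mixture $\{A^{(1)},A^{(4)}\}$ against the equal split suffices.
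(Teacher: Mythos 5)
Your proposal is correct and follows essentially the same strategy as the paper's proof of \cref{prop:nonidentical2-impossibility}: pick a concrete two-agent, two-good normalized instance, show the unique ex-ante proportional distribution avoiding monopoly allocations is the fair coin-flip over the two split allocations, exhibit an ex-ante proportional mixture of a split allocation with a monopoly allocation whose sum-of-variances is strictly smaller, and conclude every minimizer is supported on a monopoly allocation, which is neither EF1 nor any constant fraction of MMS (the paper uses values $(4,96)/(38,62)$ and the mixture $(25/31,6/31)$ where you use $(3/5,2/5)/(9/10,1/10)$ and $(1/6,5/6)$, but the logic is identical). One cosmetic caveat: your closing remark that concavity forces minimizers onto vertices ``supported on at most two allocations'' is slightly off (vertices with both proportionality constraints tight can have support of size three, as in the paper's \cref{cor:extreme-points}), but this remark is only motivational and your actual argument never relies on it.
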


The theorem shows that sum-of-variances minimization 
(subject to ex-ante proportionality) fails to ensure ex-post fairness. 
But one may wonder if selecting another objective function over the agents' variances would solve the problem. 
We show that multiple other natural objective functions similarly fail to ensure ex-post fairness.  
Specifically, {we prove the same negative result when the objective is to minimize the maximal variance of any 
agent}.\footnote{The same result also holds when the 
objective is to minimizing the maximal standard deviation, 
as the optimization problem is equivalent.}
{In addition, we establish that minimizing the sum of 
standard deviations--a natural variant 
of the sum-of-variances objective--also fails to 
guarantee ex-post fairness.}
Finally, we analyze minimizing the ``variance of variances'' and minimizing the ``standard deviation of the standard deviations''. Intuitively, these latter two objectives can be thought of as functions which aspire to keep the variances (standard deviations) 
of the agents as ``concentrated'' as possible.    
To formalize this, we introduce a random variable $X$, which, with uniform probability, takes as its value the variance (standard deviation) of a given agent. The minimization objective is then the variance (standard deviation) of $X$. 
Our analysis reveals that all of the above  minimization objectives inherit the same negative results as the sum-of-variances objective.


\subsection{Additional Related Work}\label{sec:related}
There is extensive research on fair allocation, much more than we can survey in this paper. Below we focus on papers most related to our work. 
The reader is referred to the books 
\cite{brams1996fair,moulin2004fair} and surveys 
\cite{AzizSurvey2022, fairSurvey2022} for general background.

\textbf{Best Of Both Worlds.} 
A recent line of work in fair division focuses on designing 
``Best-of-Both-Worlds (BoBW)'' algorithms, which aim to simultaneously achieve 
strong ex-ante and ex-post fairness guarantees for randomized allocations 
\cite{Aziz2019probabilistic, Freeman2020, Aziz2020, Babaioff2022, Hoefer2022, AzizGG23, FeldmanMNP24,
AleksandrovAGW15, HalpernPPPS20, BabaioffEF21}. 
This research direction was initiated by Aziz~\cite{Aziz2019probabilistic}, 
who proposed a probabilistic framework for combining fairness 
notions across various settings, including voting, allocation, and matching. 
Building on this foundation, Freeman et al.~\cite{Freeman2020} were the 
first to formally coin the term “Best-of-Both-Worlds” in the resource 
allocation context, and they devised a polynomial-time algorithm that outputs a distribution over allocations 
which is envy-free in expectation (ex-ante EF) and each allocation in the support is envy-free up to 
one good (ex-post EF1). 
{Aziz~\cite{Aziz2020} has presented a simpler combinatorial algorithm with the same  guarantees.}
Babaioff et al.~\cite{Babaioff2022}
extend the Best‑of‑Both‑Worlds framework 
to include share‑based ex‑post fairness. 
Specifically, they present a deterministic 
polynomial‑time algorithm that produces a distribution 
over allocations that is ex‑ante proportional  
and ensures that each realized allocation gives 
{each} 
agent at least $1 / 2$ of her maximin share ($1 / 2$-MMS ex-post). 
Other works focus 
on different structured valuation classes,
such as
matroid rank valuations \cite{BabaioffEF21}, 
binary additive 
valuations \cite{AleksandrovAGW15, HalpernPPPS20},
subadditive valuations \cite{FeldmanMNP24}
and multi-demand valuations \cite{Hoefer2022},
or consider the case of arbitrary entitlements 
\cite{AzizGG23, Hoefer2022}.
While all these works aim to explicitly design BoBW algorithms, 
we examine the extent to which BoBW results can be derived from indirect optimization of some natural global functions of the distributions.

\textbf{Relation to Nash Social Welfare (NSW) 
Optimization.} As our study can be viewed as an attempt to 
ensure ex-post fairness via indirect optimization of some 
global function, it is somewhat similar to the approach 
taken when maximizing Nash Social Welfare (NSW) over 
allocations, which is also a global optimization goal. 
For NSW maximization, this approach can be viewed as 
successful, as for agents with additive valuations it 
ensures EF1 ex-post as well as Pareto optimality 
\cite{caragiannis2019unreasonable}. 
Contrary to the success of NSW maximization,
our result shows that when the additive valuations are not 
identical, the approach of minimizing the sum-of-variances 
over {all ex-ante proportional} distributions 
does not provide good fairness ex-post.  

\textbf{MMS Approximation.} 
The maximin share (MMS) was introduced by Budish \cite{Budish11} to handle the indivisibility of goods.  
It has been shown that MMS allocations need not exist even for the case of three 
agents with additive valuations  
\cite{ProcacciaWang14}. 
This led to the study of approximate MMS allocations.
A long line of works delivers 
steadily improving constant-factor guarantees; 
\cite{ProcacciaWang14, BarmanM17, GhodsiHSSY17, Garg2019, 
AmanatidisMNS17, GargT21, AkramiGST23}
with the current best result by \cite{AkramiG24}
proving the existence of $(\frac{3}{4} + \frac{3}{3836})$-MMS 
allocations. 
Regarding the upper bound, 
\cite{FeigeST21} constructed 
an example with $n = 3$  agents and 9 goods
in which no allocation gives every agent more than
$39/40$ of her MMS and for 
$n\geq 4$ their construction 
gives an example in which no allocation is better 
than $(1 - n^{-4})$-MMS. 
{It has been shown \cite{AmanatidisBirmpasMarkakis2018}
that EFX implies $2/3$-MMS for $n\le3$ 
and $4/7$-MMS for $n\ge4$. We 
use this to drive MMS approximation 
results for agents with identical values under 
sum-of-variances minimization.}

Beyond additivity, 
there exist works studying 
{MMS approximations in} different settings,
such as XOS valuations \cite{AkramiMSS23},
unequal entitlements \cite{FarhadiGHLPSSY19}, 
and chores \cite{HuangL21}.

\textbf{EF1/EFX.} 
{
The concept of EF1 was first implicit 
in the work of Lipton et al.~\cite{LiptonMMS04}, 
who proved that an EF1 allocation always exists 
for $n$ agents with arbitrary monotone valuations, 
and was later formally 
introduced by Budish~\cite{Budish11}.

The stronger notion of EFX was introduced by
Caragiannis et al. \cite{caragiannis2019unreasonable}. 
Plaut and Roughgarden
\cite{PlautR20} proved that
for two agents with monotone valuations an EFX allocation always exists.
Chaudhury et al.~\cite{EFX3agents}
showed that for three agents with additive valuations, 
an EFX allocation always exists.
It remains a major open problem 
whether EFX allocations
exist for four or more agents with additive valuations,
although progress has been made in various relaxations,
such as EFX with charity 
\cite{ChaudhuryKMS21}, 
allowing at most one unallocated good 
for four agents
\cite{AlmostEFX4agents}, 
limiting agents to only 
two or three valuations types 
\cite{Mahara23, HVGN025}, 
and others related papers 
\cite{DeligkasEKS24,AmanatidisBFHV21,GargM23,AmanatidisMN20,AmanatidisFS24}.
}

}

\subsection{Paper Organization}
The remainder of the paper is organized as follows.
\cref{sec:preliminaries} introduces notation, standard fairness definitions (MMS, EF1 and EFX), and formalizes the sum‐of‐variances objective.
\cref{sec:identical2} studies two agents with identical additive valuations, showing that ex-ante proportional sum‐of‐variances-minimizing distributions must be supported only on MMS allocations. This implies that computing such a distribution is NP‐hard. \cref{sec:identical3} considers settings with multiple identical agents:
we first show that any allocation in the support must be an allocation
with a value vector of minimal distance to the ex-ante proportional-share 
vector.
We then show that such distributions might not be supported on MMS allocations, yet they are supported only on EFX allocations, which imply MMS approximation results.  
\cref{sec:nonidentical2} presents our main negative result, showing that once valuations are not identical, ex-ante proportional 
sum-of-variances-minimizing distributions 
fundamentally fail to guarantee ex-post fairness: 
the supporting ex-post allocation might not be EF1 (and thus also not EFX), and might not give an agent any constant-fraction of her MMS.
\cref{sec:beyond} discusses variants of the objective function and shows similar negative results.  
\cref{sec:conclusion} concludes. 

\section{Preliminaries}\label{sec:preliminaries}

\subsection{The Model and Basic Notations}
We consider the allocation of a set $\mathcal{M}=\{1,2,\dots,m\}$ of $m$ indivisible goods to a set $\mathcal{N}=\{1,2,\ldots,n\}$ of $n$ agents. 
Each agent $i\in \mathcal{N}$ has an additive valuation function $v_i:2^{\mathcal{M}}\to\mathbb{R}_{\ge0}$, 
satisfying $v_i(S)=\sum_{g\in S} v_i(\{g\})$ for all $S\subseteq \mathcal{M}$ (and $v_i(\emptyset)=0$). 
We use $\vec{v}= (v_1,v_2, \ldots, v_n)$ to denote a vector of valuations for the $n$ agents, 
with agent $i\in \mathcal{N}$ having valuation $v_i$. 
We say that the \textit{valuations are normalized to $C \in \mathbb{R}_{\ge0}$}, if  
for all $i\in \mathcal{N}$ it holds that $v_i(\mathcal{M}) = C$.   
An \emph{allocation $(A_1,\dots,A_n)$ of $\mathcal{M}$ to $n$ agents}
is a partition of the set of goods $\mathcal{M}$  
into $n$ {disjoint} bundles 
($A_i \cap A_l = \emptyset$ for every  $i \neq l$),
{such that} 
no good remains unallocated ($\bigcup_{i\in\mathcal{N}} A_i = \mathcal{M}$). We let  $\Pi(n, \mathcal{M})$ denote the set of 
all allocations of the goods $\mathcal{M}$ to 
$n$ agents. 
For convenience, for any $K \in \mathbb{R}$ and $n$ agents, 
we denote the length $n$ vector $(K, K, \ldots, K) \in \mathbb{R}^n$ by $(K)^n$.
{Throughout the paper, we use $i$ to 
index an agent, and use $j$ to index 
an allocation in the support of a distribution.}
Finally, throughout the paper we only use the Euclidean (L2) norm, which we denote by $||\cdot||$.

\subsection{Fairness Criteria}

\begin{definition}[Proportional Share (PS)]
  For an agent with valuation $v$, her proportional share (PS) in an $n$-agents setting is defined to be
  \[ \textnormal{PS}(v, n, \mathcal{M}) = v(\mathcal{M}) / n. \]
  For a set $\mathcal{N}$ of agents 
  with vector of valuations $\vec{v}= (v_1,v_2, \ldots, v_n)$ over the set of goods $\mathcal{M}$,   
  we say that an allocation $(A_1,\dots,A_n)$ is \emph{proportional} 
  if $v_i(A_i) \ge \text{PS}(v_i, n, \mathcal{M})$ for every $i\in\mathcal{N}$. 
\end{definition}
  When valuations are normalized to $C$ (that is, $v_i(\mathcal{M})= C$ for every $i$), the proportional share of every agent is $C / n$, and we denote it by $ \text{PS}$.
We note that valuations are normalized in any 
setting with identical valuations (as the ones in 
\cref{sec:identical2} and \cref{sec:identical3}).

Budish~\cite{Budish11} has suggested the notion of MMS as a relaxation of proportionality:
\begin{definition}[Maximin‐Share (MMS)]
Given a set of goods $\mathcal{M}$ and a number of agents $n$, the \emph{maximin share (MMS)} of an agent with valuation $v$, is defined to be:
\[ \textnormal{MMS}(v, n, \mathcal{M}) = \max_{(A_1,\dots,A_n)\in\Pi(n, \mathcal{M})} \min_{l\in[n]} v_i(A_l). \]
An allocation $(A_1,\dots,A_n)$ is \emph{MMS‐fair} (or simply MMS) if $v_i(A_i)\ge\textnormal{MMS}(v_i, n, \mathcal{M})$ for all $i\in \mathcal{N}$. 
Moreover, for $\rho\in(0,1]$ we say that an allocation is 
\emph{$\rho$-MMS-fair} (or is a \emph{$\rho$-approximation} to the MMS)
 if $v_i(A_i)\ge\rho\cdot\textnormal{MMS}(v_i, n, \mathcal{M})$ for every 
 $i\in \mathcal{N}$. {When for $\rho>0$ an allocation is 
$\rho$-MMS-fair then we say that it provides a \emph{constant-factor approximation} to the MMS ex-post.}
\end{definition}
When the context of the number of agents $n$, the set of goods $\mathcal{M}$, and the vector of valuations $\vec{v}= (v_1,v_2, \ldots, v_n)$ is clear, we simply write
$\textnormal{MMS}_i = \textnormal{MMS}(v_i, n, \mathcal{M})$ for the maximin share of agent $i$ with valuation $v_i$ over $\mathcal{M}$.
Moreover, when the valuations are identical, we simply write $\textnormal{MMS}$.

EF1 was suggested in \cite{LiptonMMS04} (and 
formally defined by Budish~\cite{Budish11}) 
as a relaxation of envy-freeness:
\begin{definition}[EF1]
An allocation $(A_1,\dots,A_n)$ is envy‐free-up-to-one-good (EF1) if for every pair $i,l\in \mathcal{N}$, 
\emph{there exists} a good $g\in A_l$,
\[ v_i(A_i) \ge v_i(A_l\setminus\{g\}). \]
\end{definition}

EFX was suggested in \cite{caragiannis2019unreasonable} 
as a relaxation of envy-freeness and a strengthening of 
EF1:
\begin{definition}[EFX]
An allocation $(A_1,\dots,A_n)$ is envy‐free-up-to-any-good (EFX) if for every pair $i,l\in \mathcal{N}$ and \emph{every good} $g\in A_l$,
\[ v_i(A_i) \ge v_i(A_l\setminus\{g\}). \]
\end{definition}
Amanatidis et al.
\cite{AmanatidisBirmpasMarkakis2018} showed that
for identical valuations, EFX implies $2 / 3$-MMS for $n = 2,3$ and 
$4 / 7$-MMS for $ n \geq 4$.

\subsection{Allocations and Distributions}

Recall that $\Pi(n, \mathcal{M})$ denotes the set of all allocations of the goods $\mathcal{M}$ to the agents
$\mathcal{N}$.  
We define the \emph{value-vector of an allocation 
$A=(A_1,\dots,A_n)$ by valuations $(v_1,v_2,\ldots, v_n)$} 
 to be $\bigl(v_1(A_1),\dots,v_n(A_n)\bigr)$. When all valuations are identical and equal to $v$ we denote this value-vector by $v(A)$.
A distribution $D$ over allocations is a probability mass function $D:\Pi(n, \mathcal{M})\to[0,1]$ 
where $\sum_{A\in\Pi(n, \mathcal{M})} D(A)=1$. 
{We} 
denote the set of all distributions over $\Pi(n, \mathcal{M})$ by $\Delta(\Pi(n, \mathcal{M}))$.
The support of $D$ is defined as $\textnormal{Supp}(D) = \{A \in \Pi(n, \mathcal{M}) \mid D(A)>0\}$.
Denote the size of the support of $D$ by $k$. 
We use $\numberedAllocation{j}$ to denote the $j$-th allocation in the support, 
so $ \textnormal{Supp}(D) = 
\left\{ \numberedAllocation{j} \right\}_{j=1}^{k}$. We denote the probability that 
$D$ assigns to allocation 
$\numberedAllocation{j} \in \textnormal{Supp}(D)$ 
by $p_j$ (the probability is 0 for any other allocation). 

The \emph{expected value of agent $i$ 
under $D$} is $\mu_i(D)
= \mathbb{E}_{(A_1, \ldots, A_n) \sim D}[v_i(A_i)]=\sum_{A=(A_1,\dots,A_n)}D(A)\,v_i(A_i)$. 
The \emph{variance of agent $i$ under $D$} is 
$
\Var_{(A_1, \ldots, A_n) \sim D}[v_i(A_i)] = \sum_{A=(A_1,\dots,A_n)}D(A)\,(v_i(A_i) - \mu_i(D)
)^2$. 
{
The \emph{Standard Deviation of agent $i$ under $D$} is
$\sigma_i(D) = \sqrt{\Var_{(A_1, \ldots, A_n) \sim D}[v_i(A_i)]}$.}

For valuation vector $\vec{v}= (v_1,v_2, \ldots, v_n)$, a distribution $D$ is \emph{ex‐ante proportional} if $\mu_i(D)
\ge\text{PS}(v_i, n, \mathcal{M})$ for every $i\in \mathcal{N}$, 
and we denote the set of all ex-ante proportional distributions for $\vec{v}$ by $PROP_{\vec{v}}$.
For any ex-post property (as EFX), we say that  distribution $D$ satisfies this property if the property holds for any allocation in the support of the distribution $D$.

\subsection{Sum‐of‐variances Objective}
Consider a setting with $n$ agents with a valuation vector $\vec{v}= (v_1,v_2, \ldots, v_n)$ over a set of items $\mathcal{M}$.
For a distribution $D$ over the set $\Pi(n, \mathcal{M})$ of allocations of $\mathcal{M}$ to $n$ agents, 
define the sum-of-variances (SoV) objective $\Phi_{\vec{v}}(\cdot)$
to be: 
\[ \Phi_{\vec{v}}(D) = \sum_{i=1}^n \Var_{(A_1, \ldots, A_n) \sim D}[v_i(A_i)]. \]
We consider minimizing this objective under the constraint that $D\in PROP_{\vec{v}}$, that is, $D$ is ex‐ante proportional for $\vec{v}$.

A distribution $D$ is called an 
\emph{ex-ante proportional sum-of-variances-minimizing 
distribution (for $\vec{v}$)} if the distribution $D$ is both ex-ante proportional for $\vec{v}$, 
and minimizes the sum-of-variances objective $\Phi_{\vec{v}}$ over all ex-ante proportional distributions for $\vec{v}$. 
We use $\mathcal{D}^{\Phi}_{\vec{v}}$ to denote the set of all ex-ante proportional sum-of-variances 
minimizing distributions.
  Formally, for valuation vector $\vec{v}= (v_1,v_2, \ldots, v_n)$, 
  recall that $PROP_{\vec{v}}$ is the set of ex-ante proportional 
  valuations. We define the set 
  $\mathcal{D}^{\Phi}_{\vec{v}}$ of ex-ante proportional 
  sum-of-variances-minimizing distributions
  as follows: 
  \begin{align*}
    \mathcal{D}^{\Phi}_{\vec{v}} = \left\{ D\in \Delta(\Pi(n, \mathcal{M}))\mid D\in PROP_{\vec{v}}\text{, and } \Phi_{\vec{v}}(D) \leq \Phi_{\vec{v}}(D') \text{ for every } D'\in PROP_{\vec{v}} \right\}.
  \end{align*}
Observe that the set $\mathcal{D}^{\Phi}_{\vec{v}}$ is not empty. 
Indeed, the set $PROP_{\vec{v}}$ of ex-ante proportional distributions is compact and 
non-empty. 
Moreover, 
there exists a minimizer $D^\ast \in \mathcal{D}^{\Phi}_{\vec{v}}$ of $\Phi_{\vec{v}}$ over the set 
$PROP_{\vec{v}}$,
as the sum-of-variances objective $\Phi_{\vec{v}}$ is continuous on the non-empty compact set $PROP_{\vec{v}}$.
Finally, when ${\vec{v}}$  is clear from the context we use $\Phi$ to denote  $\Phi_{\vec{v}}$, and use $\mathcal{D}$ to denote $\mathcal{D}^{\Phi}_{\vec{v}}$.

We note that one can also consider the sum-of-variances-of-ratios (SoVoR) objective, a variant of the sum-of-variances (SoV) objective. 
In this variant, instead of considering the objective 
for the (absolute) value of the allocation to the agents 
(the value $v_i(A_i)$ for agent $i$), we consider the 
objective for  the fraction of the value obtained 
(out of the value of the entire set of items) by the 
agents (the fraction $v_i(A_i)/v_i(\mathcal{M})$ for 
agent $i$).  

Thus, the \emph{sum-of-variances-of-ratios (SoVoR)} 
objective would be
\begin{align*}
\tilde{\Phi}_{\vec{v}}(D) = \sum_{i=1}^n \Var_{(A_1, \ldots, A_n) \sim D}\left[\frac{v_i(A_i)}{v_i(\mathcal{M})}\right], 
\end{align*}
where $\Var_{(A_1, \ldots, A_n) \sim D}\left[\frac{v_i(A_i)}{v_i(\mathcal{M})}\right]
= \sum_{A=(A_1,\dots,A_n)}D(A)\cdot \left(\frac{v_i(A_i)}{v_i(\mathcal{M})} - \tilde{\mu}_i(D)\right)^2$ is the variances of the ratios, and $\tilde{\mu}_i(D) = \mathbb{E}_{(B_1, \ldots, B_n) \sim D}\left[\frac{v_i(B_i)}{v_i(\mathcal{M})}\right]$ is the expected value of the ratios.

When valuations are normalized ($v_l(\mathcal{M})=v_i(\mathcal{M})$ for every $i,l\in \mathcal{N}$) the problem of minimizing the 
sum-of-variances-of-ratios (SoVoR) is equivalent to the 
problem of minimizing the sum-of-variances (SoV), 
and the set of solutions is identical. 
As our results are all for normalized valuations, 
they apply for both the SoV and the SoVoR objectives.

Finally, throughout the paper we 
use $\textnormal{Ber}(p)$ to denote a Bernoulli random 
variable with parameter $p$ (the random variable 
$\textnormal{Ber}(p)$ has value of 1 with probability 
$p$, and 0 otherwise). We 
occasionally compute 
the variance of a random variable $X$ 
which receives the value 
$b$ with probability $p$, 
and the value $a$ otherwise. 
Thus, the random variable 
distributes according to $X \sim (b - a) 
\cdot \textnormal{Ber}(p) + a$. 
Recall that for such a random variable, $\Var[X] = p (1-p) \cdot {(b - a)^2}$. 

\section{Two Agents with Identical Additive Valuations}\label{sec:identical2}

In this section, we focus on the simple case of two agents with identical additive valuations. 
We observe  that for this simple case, every ex-ante proportional 
sum-of-variances-minimizing distribution $D\in \mathcal{D}$ is MMS-fair ex-post.

Throughout this section, let $v$ be the additive valuation that is common to both agents.
As agents are identical, they have the same maximin share, and we denote it by $\MMS$. Similarly, they have the same proportional share, denoted by  $\text{PS} = v(\mathcal{M})/2$.

{Observe that for $n$ players with an identical additive valuation,
every ex-ante proportional distribution $D$ provides each agent exactly the proportional share in expectation,
i.e. $\mu_i(D) 
= \text{PS} = v(\mathcal{M}) / n$ for every $1\leq i\leq n$.} 
We first show that any allocation in the support of an ex-ante proportional sum‐of‐variances-minimizing distribution is ``difference minimizing'':

\begin{definition}[Difference-Minimizing Allocation]\label{def:diff-minimizing}
An allocation $S=(S_1,S_2)$ is \emph{difference minimizing} 
for a valuation $v$ over $\mathcal{M}$, if it minimizes 
$|v(S_1)-v(S_2)|$ over all allocations of $\mathcal{M}$ 
into two bundles.
\end{definition}

\begin{restatable}[Support on Difference‐Minimizing Allocations]{lemma}{suppdiffmin}
\label{lem:diff-minimizing}
Consider a setting with two agents with the same additive valuation $v$ over a set of goods $\mathcal{M}$.  
Let $D\in \mathcal{D}$ be any  ex-ante proportional sum‐of‐variances-minimizing distribution over allocations of $\mathcal{M}$
to two agents.
Then every allocation in the support of $D$ is difference minimizing.
\end{restatable}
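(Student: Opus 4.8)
The plan is to reduce the statement to a one-dimensional distance-minimization problem and then show that any minimizer of $\Phi$ must be supported only on the distance-minimizing allocations. Write $C := v(\mathcal{M})$. Since both agents share the valuation $v$, every allocation $(A_1,A_2)$ satisfies $v(A_1)+v(A_2)=C$, so the two agents' values are perfectly negatively correlated; and by the observation already recorded above, every ex-ante proportional distribution $D$ has $\mu_1(D)=\mu_2(D)=\text{PS}=C/2$. First I would use these common means to rewrite the objective: since both variances are centered at $C/2$,
\[
\Phi(D)=\sum_{j} p_j\Bigl[\bigl(v(\numberedBundle{j}{1})-\tfrac{C}{2}\bigr)^2+\bigl(v(\numberedBundle{j}{2})-\tfrac{C}{2}\bigr)^2\Bigr]=\sum_j p_j\,\bigl\|\,v(\numberedAllocation{j})-(C/2,\,C/2)\,\bigr\|^2 ,
\]
so that $\Phi(D)$ is exactly the probability-weighted average of the squared Euclidean distances of the supporting value-vectors from the proportional-share vector $(C/2,C/2)$.

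Next I would identify the distance minimizers with the difference-minimizing allocations. Substituting $v(A_2)=C-v(A_1)$, each squared distance collapses to $2\bigl(v(A_1)-C/2\bigr)^2=\tfrac12\bigl(v(A_1)-v(A_2)\bigr)^2$. Hence an allocation minimizes its distance to $(C/2,C/2)$ if and only if it minimizes $|v(A_1)-v(A_2)|$, i.e.\ precisely when it is difference-minimizing (\cref{def:diff-minimizing}). Let $d_{\min}$ denote this common minimum squared distance over all allocations.

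Finally I would combine a lower bound with an attainability argument. As $\Phi(D)$ is a weighted average of squared distances each at least $d_{\min}$, every feasible $D$ satisfies $\Phi(D)\ge d_{\min}$. To confirm that $d_{\min}$ is genuinely the optimum, I would exhibit a feasible distribution attaining it: take any difference-minimizing allocation $S=(S_1,S_2)$ and randomize uniformly between $S$ and its swap $(S_2,S_1)$. Swapping the two coordinates preserves the distance to the symmetric point $(C/2,C/2)$, so both supported value-vectors lie at distance $d_{\min}$, while $\mu_1=\mu_2=\tfrac12\bigl(v(S_1)+v(S_2)\bigr)=C/2$, making the distribution ex-ante proportional with $\Phi=d_{\min}$. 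Thus $\min_{D\in PROP_{\vec v}}\Phi(D)=d_{\min}$. Now for any $D\in\mathcal{D}$ we have $\Phi(D)=d_{\min}$; being a convex combination of quantities all $\ge d_{\min}$ whose value equals $d_{\min}$, every allocation receiving positive probability must attain distance exactly $d_{\min}$, i.e.\ be difference-minimizing, which is the claim.

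The one step that is not automatic — and the main obstacle — is verifying that $d_{\min}$ is the actual optimum rather than merely a lower bound. A single difference-minimizing allocation need not be ex-ante proportional on its own (that would force a perfectly balanced split $v(S_1)=v(S_2)=C/2$, which may fail to exist), so the uniform swap is exactly the device that produces a \emph{feasible} minimizer at distance $d_{\min}$; once feasibility of the optimum is established, the ``no slack in the average'' argument forces difference-minimality on the entire support.
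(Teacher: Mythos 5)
Your proposal is correct and follows essentially the same route as the paper's proof: ex-ante proportionality pins both agents' means at exactly $\mathrm{PS}$, the objective collapses to a weighted average of $\tfrac12\bigl(v(A_1)-v(A_2)\bigr)^2$, and the uniform swap of a difference-minimizing allocation is precisely the device the paper uses to exhibit a feasible optimum. The only differences are presentational: you argue directly via a lower bound plus a no-slack-in-the-average step while the paper argues by contradiction (phrasing the identity as $\Phi(D)=2\,\Var_{(A_1,A_2)\sim D}[v(A_1)]$), and your distance-from-the-proportional-share-vector framing is exactly the characterization the paper deploys for the $n$-agent case in \cref{sec:identical3}.
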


\textit{Proof deferred to \cref{app:proofs}.}

It is well known that when partitioning to two bundles, 
a difference‐minimizing allocation for $v$ implies that 
the value of every bundle is at least the MMS for $v$ 
\cite{Korf10}. 

\begin{restatable}[Difference‐Minimizing Allocation $\implies$ MMS‐Fair]{lemma}{diffminimizingmmsfair}
\label{lem:diff-to-mms}
Any difference‐minimizing allocation $S=(S_1,S_2)$ satisfies $v(S_1)\ge\MMS$ and $v(S_2)\ge\MMS$. 
\end{restatable}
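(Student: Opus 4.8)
The plan is to show that any difference-minimizing allocation $S=(S_1,S_2)$ gives each bundle value at least $\MMS$. First I would recall the definition of MMS for two agents: $\MMS = \max_{(A_1,A_2)} \min\{v(A_1),v(A_2)\}$. So there exists some allocation $(A_1^*,A_2^*)$ achieving this maximum, meaning $\min\{v(A_1^*),v(A_2^*)\}=\MMS$, and thus both bundles of the optimal MMS partition have value at least $\MMS$.

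The key observation I would exploit is that for a fixed total value $v(\mathcal{M})$, the quantity $\min\{v(S_1),v(S_2)\}$ and the quantity $|v(S_1)-v(S_2)|$ are tightly linked. Writing $T=v(\mathcal{M})=v(S_1)+v(S_2)$, we have the identity $\min\{v(S_1),v(S_2)\} = \tfrac{T}{2} - \tfrac{1}{2}|v(S_1)-v(S_2)|$. From this identity it follows immediately that minimizing $|v(S_1)-v(S_2)|$ is exactly the same as maximizing $\min\{v(S_1),v(S_2)\}$. Hence a difference-minimizing allocation is precisely an allocation maximizing the minimum bundle value, which is exactly an MMS-defining partition.

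Putting these together: if $S=(S_1,S_2)$ is difference minimizing, then by the identity above it maximizes $\min\{v(S_1),v(S_2)\}$ over all two-bundle partitions, so $\min\{v(S_1),v(S_2)\}$ equals the defining value of $\MMS$. Therefore both $v(S_1)\ge\MMS$ and $v(S_2)\ge\MMS$, as claimed. I would lay this out by first stating the identity, then noting the equivalence of the two optimization problems, and finally reading off the bound on each bundle.

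I do not anticipate a serious obstacle here, as the argument is essentially a one-line algebraic identity once the right reformulation is chosen; the only care needed is the bookkeeping that the total value is fixed across all partitions of $\mathcal{M}$ (which holds since every good is allocated), so that comparing $\min$ and $|{\cdot}|$ is legitimate. The main thing to get right is stating the identity cleanly and making explicit that maximizing the minimum bundle value over all partitions is by definition $\MMS$, so that a difference-minimizing allocation inherits both lower bounds simultaneously rather than just on the smaller bundle.
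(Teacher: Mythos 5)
Your proposal is correct. The core fact it rests on---that with the total $T=v(\mathcal{M})$ fixed across all partitions, the minimum bundle value and the bundle-value difference determine each other via $\min\{v(S_1),v(S_2)\}=\tfrac{T}{2}-\tfrac{1}{2}|v(S_1)-v(S_2)|$---is exactly the observation underlying the paper's proof, but you package it differently: the paper argues by contradiction, comparing a hypothetical difference-minimizing allocation with $v(S_1)<\MMS$ against an MMS-achieving partition $(B_1,B_2)$ and deriving a larger difference, whereas you state the identity directly and conclude that the two optimization problems (minimizing the difference, maximizing the minimum) are literally the same. Your route is slightly stronger and cleaner: it establishes the full equivalence, i.e., that the difference-minimizing allocations are precisely the MMS-defining partitions, from which the lemma's two inequalities follow by reading off $\min\{v(S_1),v(S_2)\}=\MMS$; the paper's contradiction argument only extracts the one direction needed. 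Both are one-observation proofs, and your bookkeeping caveat (every good is allocated, so $T$ is constant across partitions) is the only hypothesis either argument needs.
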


\textit{For completeness we  present the proof of this known result in \cref{app:proofs}.}

An immediate corollary of \cref{lem:diff-minimizing} 
and \cref{lem:diff-to-mms} is that 
any ex-ante proportional sum‐of‐variances-minimizing 
distribution is supported only on difference‐minimizing 
allocations, and thus is MMS‐fair ex‐post. 
Moreover, as 
MMS-fairness implies EFX, any allocation in the support is EFX ex-post.

\begin{corollary}[Ex‐Post MMS‐Fairness of Sum‐Of‐Variances Minimizer]\label{cor:identical2-mms-exact}
Consider a setting with two agents with the same additive valuation $v$ over a set of goods $\mathcal{M}$.
Every allocation in the support of an ex-ante proportional sum‐of‐variances-minimizing distribution $D\in \mathcal{D}$ is MMS‐fair ex‐post (and thus also EFX ex-post).
\end{corollary}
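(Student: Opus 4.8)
The plan is to obtain the corollary by chaining the two preceding lemmas and then supplying the one implication that does not come for free, namely that a difference-minimizing allocation for two identical agents is EFX. First I would fix an arbitrary $D\in\mathcal{D}$ and invoke \cref{lem:diff-minimizing} to conclude that every allocation $(A_1,A_2)$ in $\textnormal{Supp}(D)$ is difference minimizing for $v$. Then \cref{lem:diff-to-mms} immediately yields $v(A_1)\ge\MMS$ and $v(A_2)\ge\MMS$, so every supported allocation is MMS-fair ex-post. This settles the first half of the statement with no real work; the content is entirely in the ``and thus also EFX'' clause.

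For the EFX part I would argue directly from difference-minimization via an exchange argument. Relabel the bundles so that $v(A_1)\le v(A_2)$ and set $d=v(A_2)-v(A_1)\ge 0$. The agent holding the larger bundle $A_2$ never envies $A_1$, so (using the standard convention, made explicit in the introduction, that EFX quantifies only over positive-valued goods) EFX reduces to a single requirement: every positive-valued good $g\in A_2$ must satisfy $v(A_1)\ge v(A_2)-v(g)$, i.e. $v(g)\ge d$. Suppose for contradiction that some positive-valued $g\in A_2$ has $v(g)<d$. Moving $g$ from $A_2$ to $A_1$ produces bundles whose value difference has magnitude $|d-2v(g)|$; since $0<v(g)<d$, a two-case check gives $|d-2v(g)|<d$, so the new allocation strictly reduces $|v(A_1)-v(A_2)|$, contradicting difference-minimization. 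Hence every positive-valued good in $A_2$ has value at least $d$, and $(A_1,A_2)$ is EFX.

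The subtlety I would be most careful about—and the reason I phrase the EFX argument in terms of difference-minimization rather than abstractly in terms of the MMS guarantee—is that ``MMS-fair $\Rightarrow$ EFX'' is false in general and holds here only because the two-agent identical-valuation structure collapses all three notions: since $\min(v(A_1),v(A_2))=v(\mathcal{M})/2-\tfrac{1}{2}|v(A_1)-v(A_2)|$, maximizing the minimum coincides with minimizing the difference, so MMS-fairness, difference-minimization, and the EFX condition above all line up. I would also flag that the restriction to positive-valued goods is genuinely needed: a difference-minimizing allocation may place a zero-valued good in the larger bundle, and under an ``every good'' reading the smaller agent could envy after its removal. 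With that convention in place, the only remaining details are the routine case analysis establishing $|d-2v(g)|<d$.
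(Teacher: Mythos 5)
Your proposal is correct, and its first half is exactly the paper's argument: chain \cref{lem:diff-minimizing} with \cref{lem:diff-to-mms} to get MMS-fairness ex-post. Where you genuinely diverge is the EFX clause. The paper dispatches it with the single assertion that MMS-fairness implies EFX, which it never proves; that implication is valid here (for two agents with identical valuations, an EFX violation lets one move the offending good and obtain a two-bundle partition whose minimum strictly exceeds $\MMS$, contradicting the definition of $\MMS$ as a max-min), but, as you correctly stress, it is specific to this setting and false for $n\ge 3$. You instead prove EFX directly from difference-minimization via the exchange bound $|d-2v(g)|<d$ for $0<v(g)<d$. The two routes are morally the same exchange argument; yours buys self-containment and is precisely the two-agent specialization of the convexity argument the paper later uses for general $n$ in \cref{thm:identical3-efx}, where the exchange decreases the distance to $(\textnormal{PS})^n$ rather than the bundle-value difference, while the paper's route is shorter but leaves its key implication implicit. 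Your caveat about zero-valued goods is also well taken and is a gap in the paper's own statement: under the formal EFX definition of \cref{sec:preliminaries}, which quantifies over \emph{every} good, a difference-minimizing (hence MMS-fair) allocation can place a zero-valued good in the larger bundle and fail EFX --- e.g., values $3,1,1,0$ with allocation $(\{3,0\},\{1,1\})$, which can indeed appear in the support of a sum-of-variances-minimizing distribution --- so the corollary's parenthetical ``and thus also EFX'' needs the positive-value convention that the paper only makes explicit in a footnote to \cref{thm:identical3-efx}.
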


We observe that if given an additive  valuation $v$ one can efficiently compute an ex-ante proportional sum‐of‐variances-minimizing distribution $D\in \mathcal{D}$ for two agents 
with valuation $v$, then one can efficiently compute an MMS 
allocation for an instance with two agents, both with 
valuation $v$ (by looking at any allocation in the support 
of $D$).
An immediate corollary from the hardness of MMS computation 
is the following:

\begin{corollary}[NP‐Hardness]\label{prop:np-hard}
  Consider the computation problem of finding, 
  for an additive valuation $v$, a distribution $D$ 
  which minimizes $\Phi_{(v,v)}(D)$ over all ex-ante 
  proportional distributions 
  (computing it for a  
  setting with two agents with the same valuation $v$). 
  This problem is NP-hard.
\end{corollary}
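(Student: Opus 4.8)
The plan is to give a (Turing) reduction from the classical \textsc{Partition} problem, which is NP-complete: given positive integers $a_1,\dots,a_m$, decide whether there is a subset $S\subseteq\mathcal{M}$ with $\sum_{g\in S}a_g=\tfrac12\sum_{g}a_g$. The whole point is that \cref{lem:diff-minimizing} and \cref{lem:diff-to-mms} let us convert \emph{any} optimal distribution into a single deterministic object of interest, namely an MMS allocation for two identical agents, whose computation already encodes \textsc{Partition}. So if such a distribution could be found efficiently, \textsc{Partition} could be decided efficiently.

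Concretely, given a \textsc{Partition} instance I would form the two-agent instance in which both agents share the additive valuation $v$ defined by $v(\{g\})=a_g$; this construction is clearly polynomial. Suppose there were a polynomial-time algorithm that, on input $v$, outputs some $D\in\mathcal{D}^{\Phi}_{(v,v)}$ (minimizing $\Phi_{(v,v)}$ over all ex-ante proportional distributions). I would take any allocation $\numberedAllocation{1}=(\numberedBundle{1}{1},\numberedBundle{1}{2})$ in $\textnormal{Supp}(D)$. By \cref{lem:diff-minimizing} this allocation is difference-minimizing, and by \cref{lem:diff-to-mms} it satisfies $v(\numberedBundle{1}{1}),v(\numberedBundle{1}{2})\ge\MMS$. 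Since the two bundle values sum to $v(\mathcal{M})$, minimizing $|v(\numberedBundle{1}{1})-v(\numberedBundle{1}{2})|$ is exactly the same as maximizing $\min\{v(\numberedBundle{1}{1}),v(\numberedBundle{1}{2})\}$, so in fact $\min\{v(\numberedBundle{1}{1}),v(\numberedBundle{1}{2})\}=\MMS$. Because $\MMS\le v(\mathcal{M})/2$ always holds, the \textsc{Partition} instance is a ``yes'' instance if and only if $\MMS=v(\mathcal{M})/2$, equivalently if and only if $v(\numberedBundle{1}{1})=v(\numberedBundle{1}{2})$. The latter is a single comparison, checkable in polynomial time from the extracted allocation, which would decide \textsc{Partition} in polynomial time and establish NP-hardness.

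I expect the only delicate point to be bookkeeping about what it means to ``efficiently compute'' a distribution and to ``read off'' a support element: the argument needs the output to be presented so that at least one allocation in $\textnormal{Supp}(D)$ can be recovered in polynomial time (e.g., an explicit list of support allocations with their probabilities, or the ability to draw one sample). Under any such reasonable output model the reduction goes through verbatim, since we never use the probabilities $p_j$ at all---only membership in the support together with the two structural lemmas. A secondary point worth stating explicitly is that this reduction simultaneously re-derives the (known) NP-hardness of computing the MMS for two agents with identical additive valuations, so the result can equally be phrased as ``an immediate corollary of the hardness of MMS computation'', as anticipated in the text preceding the statement.
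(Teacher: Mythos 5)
Your proposal is correct and takes essentially the same route as the paper: both arguments extract an arbitrary allocation from the support of an efficiently computed optimal distribution, use the structural guarantee (via \cref{lem:diff-minimizing} and \cref{lem:diff-to-mms}) that this allocation is difference-minimizing and hence MMS-fair, and conclude that efficient computation of the distribution would yield efficient computation of a two-agent MMS allocation. The only difference is one of detail: the paper invokes the NP-hardness of MMS computation as a known black box, whereas you unfold it into an explicit reduction from \textsc{Partition} (checking whether the extracted bundles have equal value), which is a valid and self-contained way to finish the argument.
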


\section{Multiple Agents with Identical Additive Valuations}\label{sec:identical3}
In this section we consider $n\geq 3$ agents with identical additive valuations. 
For the case of two agents, we have observed that any ex-ante proportional sum‐of‐variances-minimizing distribution is MMS-fair ex-post (\cref{cor:identical2-mms-exact}). 
In this section we show that such a guarantee does not hold for $n\geq 3$ agents.
That is, we show that in some settings with $n\geq 3$ agents with \emph{identical} additive valuations, 
any ex-ante proportional sum‐of‐variances-minimizing distribution is not MMS-fair ex-post. 
This is particularly disappointing given that an ex-ante proportional distribution that is MMS-fair ex-post does exist for any setting with $n$ agents with \emph{identical} additive valuations (an MMS allocation exists when valuations are identical, 
and randomly picking any one of the $n$ cyclic permutations of such allocation is an ex-ante proportional distribution).
Nevertheless, we show that any ex-ante proportional sum‐of‐variances-minimizing distribution is EFX ex-post, which implies MMS approximation ex-post ($2/3$ for three agents, and $4/7$ for more agents \cite{AmanatidisBirmpasMarkakis2018}).\footnote{Note that as all valuations are identical and as we allocate all items, any allocation is also Pareto optimal.} 
Finally, we show that the MMS approximation obtained for three agents is no better than $275 / 304 \approx 90.4\%$. 

  We begin by presenting a structural characterization 
  of the allocations that can appear in the support of 
  any ex-ante proportional sum-of-variances-minimizing distribution. 
  We consider mapping every distribution to a point in $\mathbb{R}_{\geq 0}^n$ that corresponds to the expected values of the $n$ agents. 
  For example, when all agents have valuation $v$, the \emph{ex-ante proportional-share vector} is mapped to the $n$ dimensional vector with all entries being the proportional share $v(\mathcal{M})/n$.
  We show that the sum-of-variances of an 
  ex-ante proportional distribution can be expressed as 
  the weighted average of the 
  squared Euclidean distances (L2 norm) of 
  its supporting allocations from the ex-ante proportional-share vector  $(\text{PS})^n$.
  {Moreover, 
  for any allocation with distance $d$ from $(\text{PS})^n$ there exists an ex-ante proportional distribution
  which is supported only on allocations that have distance $d$ from $(\text{PS})^n$   (by randomly picking one
  of the $n$ cyclic permutations of this allocation)}.
  Thus,
  we have that every ex-ante proportional
  sum-of-variances-minimizing distribution is supported only on allocations that minimize   the squared distance.\footnote{ 
  {
    As the square function is monotone,
    minimizing the square of the distance 
    to $(\textnormal{PS})^n$ 
    is equivalent to minimizing 
    the distance itself.
    Thus, any ex-ante proportional sum-of-variances-minimizing  
    distribution is 
    supported only on allocations that minimize 
    the Euclidean distance.
    }
  }
  Building on this characterization, we prove 
  that every such distance-minimizing  
  allocation must be 
  EFX, and therefore every
  ex-ante proportional sum-of-variances-minimizing 
  distribution 
  is EFX ex-post. This, in turn, immediately yields a positive 
  constant-factor approximation to the MMS 
  \cite{AmanatidisBirmpasMarkakis2018}.
  On the negative side, we show that sum-of-variances 
  minimization does 
  not necessarily guarantee MMS-fairness once the 
  number of agents exceeds two, by providing a 
  concrete counterexample with three agents. This 
  example also establishes an explicit upper bound 
  on the MMS approximation achievable ex-post.

\subsection{Structural Characterization of the Support}

Recall that for $n$ players with an identical additive valuation,
$\text{PS} = v(\mathcal{M})/n$ denotes the proportional share of every agent. Also recall that 
every ex-ante proportional distribution $D$ provides each agent with exactly the proportional share in expectation,
i.e. $\mathbb{E}_{(A_1,\ldots, A_n)\sim D}[v(A_i)] = \text{PS} = v(\mathcal{M}) / n$ for every $1\leq i\leq n$.

We next show that $\Phi(D)$ can be expressed as weighted 
sum of squared distances (with L2 norm) of the supporting allocations from the proportional share 
vector $(\text{PS})^n$.

\begin{lemma}[SoV Characterization]\label{lem:variance-cyclic}
Consider a setting with $n$ agents that have the same additive valuation $v$ over $\mathcal{M}$. 
If a distribution $D$ is ex-ante proportional for $\vec{v}=(v,v,\ldots,v)$ then: 

\begin{align*}
  \Phi(D) = \sum_{j=1}^{k} p_j \cdot || v(\numberedAllocation{j}) - (\textnormal{PS})^n ||^2, 
\end{align*} 

where $\textnormal{Supp}(D) = \{ \numberedAllocation{j} \}_{j=1}^{k}$,
and $p_j>0$ is the probability that $D$ assigns to $\numberedAllocation{j}$.
 
\end{lemma}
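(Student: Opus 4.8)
The plan is to prove this by unwinding the definitions of the sum-of-variances objective and the variance of each agent, and then exchanging the order of the two summations (over agents and over allocations in the support). The crucial preliminary observation, which I would invoke first, is the fact recalled immediately before the lemma: for $n$ agents sharing a single additive valuation $v$, \emph{every} ex-ante proportional distribution $D$ satisfies $\mu_i(D) = \textnormal{PS} = v(\mathcal{M})/n$ for \emph{every} agent $i$, with equality rather than mere inequality. This is because the $n$ expected values sum to $v(\mathcal{M}) = n\cdot\textnormal{PS}$ (all goods are allocated), while ex-ante proportionality forces each of them to be at least $\textnormal{PS}$; summing forces each to equal $\textnormal{PS}$ exactly. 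This equality is what lets me replace the (a priori distribution-dependent) centering term $\mu_i(D)$ in each variance by the fixed constant $\textnormal{PS}$.

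With that in hand, I would write out $\Phi(D) = \sum_{i=1}^n \Var_{A\sim D}[v(A_i)]$ and substitute the definition of the variance, using $\mu_i(D)=\textnormal{PS}$, to get $\Phi(D) = \sum_{i=1}^n \sum_{j=1}^k p_j\,\bigl(v(\numberedBundle{j}{i}) - \textnormal{PS}\bigr)^2$, where the inner sum ranges over the $k$ allocations $\numberedAllocation{j}$ in $\textnormal{Supp}(D)$ with probabilities $p_j$. Since both sums are finite, I can exchange their order to obtain $\Phi(D) = \sum_{j=1}^k p_j \sum_{i=1}^n \bigl(v(\numberedBundle{j}{i}) - \textnormal{PS}\bigr)^2$. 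The final step is to recognize the inner sum $\sum_{i=1}^n \bigl(v(\numberedBundle{j}{i}) - \textnormal{PS}\bigr)^2$ as exactly the squared Euclidean ($L2$) distance between the value-vector $v(\numberedAllocation{j}) = \bigl(v(\numberedBundle{j}{1}),\dots,v(\numberedBundle{j}{n})\bigr)$ and the proportional-share vector $(\textnormal{PS})^n$, which is precisely $\|v(\numberedAllocation{j}) - (\textnormal{PS})^n\|^2$. Substituting this yields the claimed identity.

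I do not expect any genuine obstacle here; the argument is a direct computation. The only point that requires care—and the one step whose omission would invalidate the proof—is the use of the \emph{exact} identity $\mu_i(D)=\textnormal{PS}$ rather than the inequality $\mu_i(D)\ge\textnormal{PS}$. If one only knew the inequality, the centering terms would differ across agents and across distributions, and the clean collapse into squared distances from the single fixed point $(\textnormal{PS})^n$ would fail. Everything else (swapping finite sums, reading off the definition of the $L2$ norm) is routine and needs no additional hypotheses beyond the identical-valuation and ex-ante-proportionality assumptions already in force.
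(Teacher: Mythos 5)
Your proof is correct and follows essentially the same route as the paper's: replace each agent's mean $\mu_i(D)$ by $\textnormal{PS}$ (the paper states this exact-equality observation just before the lemma, while you also supply its short summing argument), expand the variances, swap the finite sums, and identify the inner sum as the squared Euclidean distance. No gaps.
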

\begin{proof}
  As $D$ is ex-ante proportional, we have that $\mathbb{E}_{(A_1,\ldots, A_n)\sim D}[v(A_i)] = \textnormal{PS}$, 
  for every $i\in \mathcal{N}$.
  Thus, for all $i \in \mathcal{N}$ we have that:
  \begin{align*}
    \textnormal{Var}_{(A_1,\ldots, A_n)\sim D}[v(A_i)] 
    &= \sum_{j=1}^{k} p_j \cdot \left( v\left(\numberedBundle{j}{i}\right) - \textnormal{PS} \right)^2, \\
  \end{align*}
  where $\numberedBundle{j}{i}$ is the bundle allocated to agent $i$ in allocation $\numberedAllocation{j}$. 
  Thus,
  \begin{align*}
    \Phi(D) = \sum_{i=1}^{n} \textnormal{Var}_{(A_1,\ldots, A_n)\sim D}[v(A_i)] 
    &= \sum_{i=1}^{n} \sum_{j=1}^{k} p_j \cdot \left( v\left(\numberedBundle{j}{i}\right) - \textnormal{PS} \right)^2, \\
    &= \sum_{j=1}^{k} p_j \cdot  \sum_{i=1}^{n} \left( v\left(\numberedBundle{j}{i}\right) - \textnormal{PS} \right)^2, \\
    &= \sum_{j=1}^{k} p_j \cdot || v\left(\numberedAllocation{j}\right) - (\textnormal{PS})^n ||^2,
  \end{align*}
  and the claim follows.
\end{proof}

\begin{corollary}[Distance Minimizing Support]\label{cor:identical3-sov-minimizing}
Consider a setting with $n$ agents that have the same additive valuation $v$ over $\mathcal{M}$. 
Let $D\in \mathcal{D}$ be an ex-ante proportional sum‐of‐variances-minimizing distribution (minimizes $\Phi(D)$ subject to ex-ante proportionality). 
  Then, for every allocation $P \in \textnormal{Supp}(D)$ we have that:
  \begin{align*}
    || v(P) - (\textnormal{PS})^n ||^2 
    = \min_{P' \in \Pi(n, \mathcal{M})} || v(P') - (\textnormal{PS})^n ||^2.
  \end{align*} 
  Moreover,
  \begin{align*}
    \Phi(D) = \min_{P' \in \Pi(n, \mathcal{M})} || v(P') - (\textnormal{PS})^n ||^2.
  \end{align*}
\end{corollary}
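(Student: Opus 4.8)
The plan is to combine the identity from \cref{lem:variance-cyclic} with a sandwich argument on $\Phi(D)$. Write $d^\ast = \min_{P'\in\Pi(n,\mathcal{M})} \|v(P') - (\textnormal{PS})^n\|^2$ for the minimal squared distance. Since $D$ is ex-ante proportional, \cref{lem:variance-cyclic} expresses $\Phi(D) = \sum_{j=1}^k p_j \, \|v(\numberedAllocation{j}) - (\textnormal{PS})^n\|^2$ as a convex combination (the $p_j$ are positive and sum to one) of squared distances, each of which is at least $d^\ast$ by definition of the minimum. This immediately yields the lower bound $\Phi(D) \ge d^\ast$.

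For the matching upper bound I would exhibit an ex-ante proportional distribution whose sum-of-variances value is exactly $d^\ast$, and then invoke the minimality of $D$. Concretely, fix an allocation $P^\ast$ attaining $d^\ast$ and let $D^\ast$ put probability $1/n$ on each of the $n$ cyclic permutations of $P^\ast$. Two facts need checking. First, $D^\ast$ is ex-ante proportional: under the uniform cyclic shift every agent receives each of the $n$ bundles of $P^\ast$ with probability $1/n$, so $\mu_i(D^\ast) = \frac1n \sum_{l} v(P^\ast_l) = v(\mathcal{M})/n = \textnormal{PS}$ for every $i$. Second, since $(\textnormal{PS})^n$ has all coordinates equal, the squared distance $\|v(\cdot) - (\textnormal{PS})^n\|^2$ is invariant under permuting the bundles, so every cyclic permutation of $P^\ast$ has squared distance exactly $d^\ast$; applying \cref{lem:variance-cyclic} to the (proportional) distribution $D^\ast$ then gives $\Phi(D^\ast) = d^\ast$. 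As $D$ minimizes $\Phi$ over all ex-ante proportional distributions and $D^\ast$ is one such distribution, we conclude $\Phi(D) \le \Phi(D^\ast) = d^\ast$.

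Combining the two bounds gives $\Phi(D) = d^\ast$, which is the second displayed equation of the corollary. For the first (pointwise) claim I would feed this equality back into the convex-combination expression: $\sum_{j} p_j \, \|v(\numberedAllocation{j}) - (\textnormal{PS})^n\|^2 = d^\ast$ is a convex combination of terms each at least $d^\ast$ with strictly positive weights $p_j$, and such a combination can equal $d^\ast$ only if every term equals $d^\ast$. Hence each $P \in \textnormal{Supp}(D)$ satisfies $\|v(P) - (\textnormal{PS})^n\|^2 = d^\ast$, i.e.\ it is distance-minimizing. I do not expect a genuine obstacle here; the one point that requires care is the permutation-invariance of the squared distance to $(\textnormal{PS})^n$, which is precisely what lets the cyclic construction achieve the minimum and keeps the whole argument free of any structural assumption about the optimal allocation.
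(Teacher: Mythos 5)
Your proof is correct and follows essentially the same route as the paper: both rest on the identity of \cref{lem:variance-cyclic} together with the uniform-cyclic-shift construction, whose permutation-invariant squared distance yields an ex-ante proportional distribution attaining the minimal value $d^\ast$. The only difference is organizational --- the paper establishes the pointwise (support) claim first by contradiction and then deduces $\Phi(D) = d^\ast$, whereas you prove the sandwich $\Phi(D) = d^\ast$ first and then extract the pointwise claim from the convex combination --- but the mathematical content is identical.
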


\begin{proof}
  Suppose towards contradiction that there exists some
  allocation $P \in \textnormal{Supp}(D)$
  such that $P$ does not minimize $|| v(P) - (\textnormal{PS})^n ||^2$.
  Let $\numberedAllocation{min} = \textnormal{argmin}_{P' \in \Pi(n, \mathcal{M})} || v(P) - (\textnormal{PS})^n ||^2$. 
  Using $\numberedAllocation{min}$ we define 
  the distribution $D'$ which receives the 
  $n$ cyclic 
  shifts of $\numberedAllocation{min}$ with probability 
  $1 / n$ each.\footnote{Formally, 
  for allocation $A = (A_1, A_2, \ldots, A_{n})$, the cyclic shift  by $k$ positions is defined as   $A^{(k)} = (A_{1+(k \mod n)}, A_{1+((k + 1) \mod n)}, \ldots, A_{1+ ((k + n - 1) \mod n)})$. The $n$ cyclic shifts of $A$ are the $n$ allocations $A^{(1)}, \ldots, A^{(n)}$.}
  Note that $D'$ is clearly ex-ante proportional.  
  By \cref{lem:variance-cyclic}, we have that:
  \begin{align*}
    \Phi(D) =
     \sum_{j=1}^{k} p_j \cdot || v(\numberedAllocation{j}) - (\textnormal{PS})^n ||^2 
    > || v(\numberedAllocation{min}) - (\textnormal{PS})^n ||^2 = \Phi(D'),
  \end{align*}
  where $ \textnormal{Supp}(D) = \left\{ \numberedAllocation{j} \right\}_{j=1}^{k}$ and $p_j$ is the probability that $D$ 
  assigns to allocation $\numberedAllocation{j}$.  
  A contradiction arises due to the fact that $D$ is an ex-ante proportional sum‐of‐variances-minimizing distribution
  and the first part of the claim follows.
  The second part follows immediately from \cref{lem:variance-cyclic} (as $\sum_{j=1}^k p_j=1$),
  combined with the first part of the claim.
\end{proof}

{
Let us illustrate this for the case of $n=3$ agents with an identical valuation $v$, 
with value of $\mathcal{M}$ being normalized to 1 ($v(\mathcal{M}) = 1$).
Observe that each value-vector $v(A)$ 
of an allocation $A = (A_1, A_2, A_3)$
by valuations $\vec{v} = (v,v,v)$ 
is a vector $v(A) = (v(A_1), v(A_2), v(A_3))$
such that $v(A_1) + v(A_2) + v(A_3) = v(\mathcal{M}) = 1$.  
This allows us to visualize the value-vectors as 
points on the 2-dimensional simplex $\Delta^2$ in $\mathbb{R}_{\geq 0}^3$
as done in \cref{fig:3Dtriangle}.
As the sum-of-variances of an ex-ante proportional distribution
only depends on the distances of the value-vectors
of its supporting allocations
(and the probabilities assigned to them) from the 
proportional share vector $(\text{PS})^3= (1/3,1/3,1/3)$,
we visualize 
the value-vectors of the allocations according to their Euclidean
distance from 
$(\text{PS})^3$.
Moreover, the support of an ex-ante proportional sum-of-variances-minimizing distribution
will necessarily only contain allocations which minimize this distance, allocations
which correspond to the innermost circle in\footnote{ 
  We note that the figure is only illustrative and suits 
  the example in \cref{prop:identical3-not-exact}.
  In particular, there can be MMS-fair allocations with 
  different distances from the center. 
} \cref{fig:3Dtriangle}.}

We can prove a similar claim when the support is restricted to only include allocations that are MMS-fair ex-post. 
This will allow us to show that 
every ex-ante proportional
sum-of-variances-minimizing distribution is not MMS-fair ex-post.

\begin{corollary}\label{cor:identical3-sov-minimizing-mms}
    Consider a setting with $n$ agents that have the same additive valuation $v$ over $\mathcal{M}$. 
    Let $D\in \mathcal{D}$ be an ex-ante proportional distribution which minimizes $\Phi(D)$ 
    over distributions in $\mathcal{D}$ {\bf that are ex-post MMS-fair}.
  Then, for every allocation $P \in \textnormal{Supp}(D)$ we have that:
  \begin{align*}
    || v(P) - (\textnormal{PS})^n ||^2 
    = \min_{P' \textnormal{ is MMS}} || v(P') - (\textnormal{PS})^n ||^2.
  \end{align*} 
  Moreover,
  \begin{align*}
    \Phi(D) = \min_{P' \textnormal{ is MMS}} || v(P') - (\textnormal{PS})^n ||^2.
  \end{align*}
\end{corollary}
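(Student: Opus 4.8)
The plan is to mirror the argument used to prove \cref{cor:identical3-sov-minimizing}, replacing the unconstrained minimization over $\Pi(n,\mathcal{M})$ by minimization over the MMS-fair allocations, and to verify that the one extra ingredient the previous proof relied on---building a cheap ex-ante proportional competitor by cyclically shifting a distance-minimizing allocation---stays inside the restricted feasible set of distributions that are both ex-ante proportional and ex-post MMS-fair. Concretely, I would set $d^\star = \min_{P' \textnormal{ is MMS}} \| v(P') - (\textnormal{PS})^n \|^2$ and fix an MMS-fair allocation $\numberedAllocation{min}$ attaining it. Since \cref{lem:variance-cyclic} expresses $\Phi$ of any ex-ante proportional distribution purely through the squared distances $\| v(\numberedAllocation{j}) - (\textnormal{PS})^n \|^2$ of its supporting allocations, the whole statement reduces to a claim about these distances, exactly as in \cref{cor:identical3-sov-minimizing}.

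First I would argue towards a contradiction: suppose some $P \in \textnormal{Supp}(D)$ has $\| v(P) - (\textnormal{PS})^n \|^2 > d^\star$. I would then form the competitor distribution $D'$ that assigns probability $1/n$ to each of the $n$ cyclic shifts of $\numberedAllocation{min}$. The crucial new observation---and the only place this proof differs from that of \cref{cor:identical3-sov-minimizing}---is that $D'$ is itself ex-post MMS-fair, hence admissible in the restricted minimization: because all agents share the single valuation $v$, MMS-fairness of an allocation is invariant under permuting which agent receives which bundle, so every cyclic shift of the MMS-fair allocation $\numberedAllocation{min}$ is again MMS-fair. The distribution $D'$ is also ex-ante proportional, since each agent receives each of the $n$ bundles of $\numberedAllocation{min}$ with probability $1/n$, giving expected value $v(\mathcal{M})/n = \textnormal{PS}$; and, as $(\textnormal{PS})^n$ is permutation-symmetric and the cyclic shifts have value-vectors that are permutations of $v(\numberedAllocation{min})$, each shift sits at squared distance exactly $d^\star$, so $\Phi(D') = d^\star$.

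It then remains to derive the contradiction and the ``moreover'' clause. Applying \cref{lem:variance-cyclic} to $D$, I would use that every allocation in $\textnormal{Supp}(D)$ is MMS-fair (as $D$ is ex-post MMS-fair) and therefore has squared distance at least $d^\star$, while the distinguished allocation $P$ has squared distance strictly exceeding $d^\star$ and positive probability $p_P$; this yields
\begin{align*}
\Phi(D) = \sum_{j=1}^{k} p_j \, \| v(\numberedAllocation{j}) - (\textnormal{PS})^n \|^2 > d^\star = \Phi(D'),
\end{align*}
contradicting the optimality of $D$ over ex-post MMS-fair ex-ante proportional distributions. Hence every $P \in \textnormal{Supp}(D)$ has squared distance exactly $d^\star$, and substituting this back into \cref{lem:variance-cyclic} (using $\sum_{j} p_j = 1$) gives $\Phi(D) = d^\star$, the ``moreover'' part. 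I expect the main thing to get right to be not a calculation but the feasibility check in the middle step---confirming that the cyclic-shift competitor remains within the MMS-constrained feasible set---since that invariance is precisely what licenses reusing the \cref{cor:identical3-sov-minimizing} machinery in the constrained setting.
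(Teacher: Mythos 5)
Your proposal is correct and matches the paper's intended argument: the paper omits this proof precisely because it is the proof of \cref{cor:identical3-sov-minimizing} adapted to the restricted feasible set, which is exactly what you carry out. You also correctly pinpoint the one genuinely new ingredient---that cyclic shifts of an MMS-fair allocation remain MMS-fair under identical valuations, so the cyclic-shift competitor stays feasible---which is the only point where the adaptation could have failed.
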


The proof of this claim is similar to that of \cref{cor:identical3-sov-minimizing},
and we omit it here.

\subsection{Counterexample: SoV Minimization does not imply MMS Ex‐Post}
While we have seen  
that for two agents with identical additive valuations 
every ex-ante proportional 
sum-of-variances-minimizing distribution is 
MMS‐fair ex‐post (\cref{cor:identical2-mms-exact}), 
we next present an example showing this is not 
the case once moving to three agents. To show this, we present an explicit 
example with three agents and identical additive 
valuations for which every ex-ante proportional 
sum-of-variances-minimizing distribution is not MMS‐fair 
ex‐post.

{
  Essentially, by \cref{cor:identical3-sov-minimizing-mms},
 ex-ante proportional sum-of-variances-minimizing
  distributions are
  MMS-fair ex-post if and only 
  if there exists an MMS allocation
  with a value-vector that minimizes 
  the distance to the proportional share vector. However, 
  since the MMS is defined in terms of maximizing the 
  least-valued bundle, it is natural that other 
  allocations—ones that need not be MMS—can achieve 
  the smallest distance from the proportional share vector,
  as illustrated in \cref{fig:3Dtriangle}.
  This suggests that sum-of-variances-minimization does 
  not, in general, yield MMS-fairness ex-post. 
  This distinction also clarifies why, in the special 
  case of two agents with identical additive valuations, 
  the ex-ante proportional 
  sum-of-variances-minimizing distributions are 
  MMS-fair ex-post: in that setting, the allocations
  with a value-vector
  minimizing the distance to the proportional share 
  vector coincide exactly with those that maximize 
  the minimum bundle value.
}

\begin{proposition}[Failure of Exact MMS Ex‐Post]
  \label{prop:identical3-not-exact}
There exists an instance of three
agents with identical additive valuations over six goods,
where every ex-ante proportional sum‐of‐variances-minimizing distribution 
$D\in \mathcal{D}$
(minimizes $\Phi(D)$ subject to ex-ante proportionality)  is not MMS‐fair ex‐post.
\end{proposition}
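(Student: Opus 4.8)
The plan is to reduce everything to the structural result already in hand. By \cref{cor:identical3-sov-minimizing}, every allocation in the support of an ex-ante proportional sum-of-variances-minimizing distribution has a value-vector that minimizes the Euclidean distance to the proportional-share vector $(\text{PS})^n$. Hence a distribution $D\in\mathcal{D}$ fails to be MMS-fair ex-post as soon as some allocation in its support is not MMS, and \emph{every} such $D$ fails whenever \emph{no} distance-minimizing value-vector corresponds to an MMS-fair allocation. Since the only allocations attaining the maximin are those whose least bundle equals $\MMS$, my goal reduces to exhibiting a single instance (three agents, one common additive valuation $v$ over six goods) for which every value-vector of minimal distance to $(\text{PS})^3$ has least coordinate strictly below $\MMS$ — equivalently, for which no maximin partition attains the minimal distance. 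Note that a distance-minimizer that is not maximin automatically has least bundle $<\MMS$, so it suffices to arrange $\mathrm{argmin}\,\|v(P)-(\text{PS})^3\|\neq\mathrm{argmax}\,\min_i v(P_i)$.

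The mechanism I would engineer is the following tension between the two objectives. For three bundles summing to $v(\mathcal{M})=3\,\text{PS}$, the distance-minimizer is the partition whose bundle values are jointly closest to $\text{PS}$, while the maximin partition is the one whose smallest bundle is largest. These coincide in most instances because balancing the smallest bundle upward also pulls the value-vector toward the center. To separate them I would design the goods so that \emph{raising} the minimum bundle to its maximum possible value forces the two larger bundles to become very unequal — a partition of the form $(m_Q,t_1,t_2)$ with $t_1\ll t_2$, which is far from $(\text{PS})^3$ — whereas a distinct partition $(m_P,s,s)$ with $m_P<m_Q$ but with the two top bundles split almost evenly sits strictly closer to $(\text{PS})^3$. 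Concretely, comparing $\|(m_P,s,s)-(\text{PS})^3\|^2$ with $\|(m_Q,t_1,t_2)-(\text{PS})^3\|^2$, the loss incurred on the minimum coordinate (replacing $m_Q-\text{PS}$ by the more negative $m_P-\text{PS}$) is outweighed by the gain from equalizing the two large bundles; I would pick six explicit good values realizing both partitions so that the balanced-top allocation is the unique distance-minimizer while the spread allocation is the unique maximin. This is exactly the picture illustrated by the innermost circle of \cref{fig:3Dtriangle} being attained only by non-MMS allocations.

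Having fixed the values, the verification is a finite computation: enumerate all partitions of the six goods into three bundles (there are only a handful of bundle-size patterns, namely $(2,2,2)$, $(3,2,1)$ and $(4,1,1)$), compute $\MMS=\max_P\min_i v(P_i)$, identify the partition(s) minimizing $\|v(P)-(\text{PS})^3\|$, and check that every such minimizer has least bundle strictly below $\MMS$. Combined with \cref{cor:identical3-sov-minimizing}, this shows every $D\in\mathcal{D}$ places all its mass on non-MMS allocations, proving the proposition.

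The main obstacle will be the construction itself, not the reduction: I must choose the good values so that \emph{no} near-balanced partition sneaks in. Generically, the most $L_2$-balanced partition is also maximin (any allocation close to $(\text{PS})^3$ automatically has a large minimum bundle), so a careless choice of values admits a partition that is simultaneously distance-minimizing \emph{and} MMS-fair, which would make some $D\in\mathcal{D}$ MMS-fair and defeat the claim. Avoiding this requires the achievable bundle sums to have a gap around $\text{PS}$, so that attaining the maximin is only possible through an unbalanced, distance-increasing split; the value-vectors of minimal distance are then pushed onto allocations whose least bundle is just below $\MMS$. The need to tune the values so precisely is also the reason the separation — and hence the MMS-approximation gap — is only slight (here roughly $275/304$), and it is what makes constructing and certifying the explicit example the delicate part of the argument.
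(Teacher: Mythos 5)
Your reduction is sound and is exactly the paper's: by \cref{cor:identical3-sov-minimizing}, every allocation in the support of any $D\in\mathcal{D}$ has a value-vector of minimal Euclidean distance to $(\textnormal{PS})^3$, so it suffices to exhibit one instance in which every distance-minimizing allocation has least bundle strictly below $\MMS$ (equivalently, no maximin partition attains the minimal distance). Your observation that a distance-minimizer that is not a maximin partition automatically has least bundle $<\MMS$ is also correct, and your verification plan (finite enumeration of partitions, compute $\MMS$, compute distances, compare) is the same as the paper's.

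The genuine gap is that you never produce the instance. The proposition is an existence claim, and everything beyond the reduction --- which is a two-line consequence of \cref{cor:identical3-sov-minimizing} --- lives in the concrete construction. You describe the qualitative tension you would engineer (maximin attainable only via an unbalanced top pair, while a more $L_2$-balanced partition with a smaller minimum exists), and this does match the shape of the paper's example, where the unique MMS allocation has value-vector $(1176,912,912)$ while the unique distance-minimizer is $(1095,1080,825)$. But you give no six numbers, and you yourself concede that generically the distance-minimizer \emph{is} maximin and that the construction is ``the delicate part of the argument.'' That delicacy is precisely the mathematical content being claimed: it is not evident a priori that the two argmins can be separated at all for three agents, and the paper's values ($825,552,528,384,360,351$, found by computer search) together with the exhaustive check constitute the actual proof. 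A proof that says ``I would pick values realizing this tension'' asserts the proposition rather than proving it; as written, the proposal is a correct strategy outline with the witness --- and hence the proof --- missing. (A minor secondary point: your enumeration should also cover partitions with empty bundles, i.e.\ size patterns such as $(3,3,0)$ and $(4,2,0)$, though these are easily discarded since an empty bundle forces distance at least $\textnormal{PS}$ from $(\textnormal{PS})^3$.)
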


\begin{proof}
 The proof follows from analyzing the following example
 {(which was found using a computer simulation):}
 \begin{example}\label{example:3identical-lower-bound}
 Consider the case of three agents
  and let $v:\mathcal{M} \to \mathbb{R}_{> 0}$ be the additive  
  valuation function over the set $\mathcal{M}=\{a,b,c,d,e,f\}$ of $6$ items, defined as follows:
  \begin{align*}
    v(a) = 825, \quad 
    v(b) = 552, \\
    v(c) = 528, \quad
    v(d) = 384, \\
    v(e) = 360, \quad
    v(f) = 351. \\
  \end{align*}
  \end{example}

  For three agents and the valuation $v$, there is only one MMS-fair allocation of $\mathcal{M}$
  which is
  $ \numberedAllocation{\textnormal{MMS}} = ( \left\{ a,f \right\}, \left\{ b,e \right\}, \left\{ c,d \right\})$
  where $v(\numberedAllocation{\textnormal{MMS}}) = (1176, 912, 912)$ (so the MMS is 912),
  and $|| v(\numberedAllocation{\textnormal{MMS}}) - ( \textnormal{PS}, \text{PS}, \text{PS}) ||^2 = 46664$.
  Iterating over all possible allocations shows that 
  the only distance minimizing allocation (up to permutations) is 
  $ \numberedAllocation{min} = ( \left\{ d,e,f \right\}, \left\{ b,c \right\}, \left\{ a \right\})$
  where $v(\numberedAllocation{min}) = (1095, 1080, 825)$
  and $|| v(\numberedAllocation{min}) - ( \text{PS}, \text{PS}, \text{PS}) ||^2 = 46050$.
  Thus, every 
  ex-ante proportional sum‐of‐variances-minimizing distribution $D\in \mathcal{D}$ 
  only has permutations of $\numberedAllocation{min}$ in its support.
  Finally, notice that
  $\min_{i\in [3]} v(\numberedBundle{min}{i}) = 825 < 912 = \textnormal{MMS}$
  and thus every ex-ante proportional sum‐of‐variances-minimizing distribution 
  $D\in \mathcal{D}$ 
  is not MMS-fair ex‐post.
\end{proof}

\begin{figure}[htbp]
  \centering
  \includegraphics[width=0.8\textwidth]{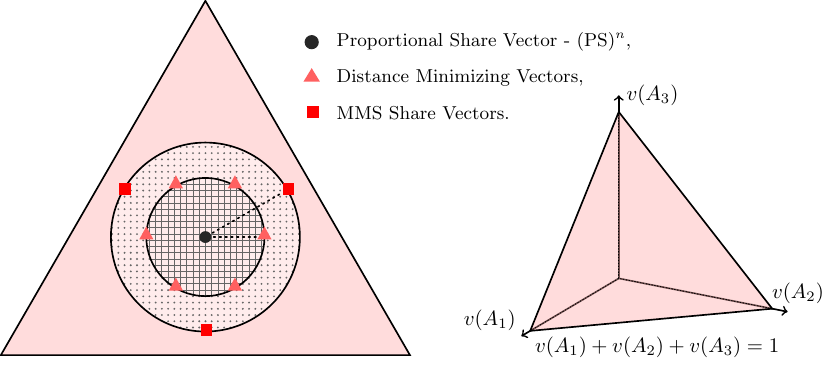}
  \caption{Visualization of \cref{example:3identical-lower-bound} (used to prove  \cref{prop:identical3-not-exact}), showing allocations value vectors and their distances from the proportional share vector. Not to scale. Only relevant allocations are shown. }
  \label{fig:3Dtriangle}
\end{figure}

\cref{fig:3Dtriangle} illustrates the value-vectors of different possible allocations 
in \cref{example:3identical-lower-bound} (used to prove \cref{prop:identical3-not-exact}),
along with their distances from the proportional
share vector\footnote{ 
  Because two bundles in the MMS-fair allocation have the same value,
  there are only three possible permutations of the MMS-fair allocation. 
  For the distance-minimizing allocation,
  there are six possible permutations.
} $(\textnormal{PS})^3$.
Crucially, we selectively chose which 
allocations to display in the figure,
{we only display the distance-minimizing allocations
and the MMS-fair allocations, 
and, for clarity, omit drawing allocations that are irrelevant for the argument.}
As in \cref{example:3identical-lower-bound} 
there exists a unique MMS-fair allocation
(up to permutations), we can divide the triangle $\Delta^2$ 
into three regions. 
(1) The first is the inner region (disk), 
which is marked with a grid. 
This region contains no value-vectors, as 
a value-vector in this region
would imply the existence of an allocation 
with a value-vector 
which is closer to $( \textnormal{PS})^3$
than that of 
the distance-minimizing allocation.
(2) The second region is the ``middle'' region (ring), 
which is dotted.  
This region contains all value-vectors 
which are closer to $( \textnormal{PS})^3$ than
the MMS-fair allocation,\footnote{ 
  As there is a unique MMS-fair allocation
  (up to permutations) all MMS-fair allocations
  have the same distance from $( \textnormal{PS})^3$.
}
the existence of allocations in this region
implies that every ex-ante proportional
sum-of-variances-minimizing distribution
is not MMS-fair ex-post. 
Finally, there exists the outer region, which is 
solid. 
This region contains the value-vector of every other allocation.

\subsection{EFX ex-post and MMS Approximation}

{
We have seen that for three agents, 
an ex-ante proportional distribution which minimizes 
the sum-of-variances might not be MMS-fair ex-post. 
Yet, we next show that for any $n\geq 2$, any 
ex-ante proportional sum-of-variances-minimizing 
distribution is EFX ex-post. 
By prior results \cite{AmanatidisBirmpasMarkakis2018} this 
implies a constant MMS approximation guarantees 
(fraction of $2/3$ for three agents, and of $4/7$ for more agents).
Additionally, we use the previously shown example to 
provide an upper bound 
of $ 275 / 304 \approx 0.904$ 
on the best possible MMS approximation ex-post.}

\begin{theorem}[EFX Ex‐Post]\label{thm:identical3-efx}
Consider a setting with $n\geq 2$ agents that have the same additive valuation $v$ over $\mathcal{M}$, 
and all goods have positive value.\footnote{One can also allow zero-valued items but in that case, 
adjust the definition of EFX to only require envy to 
be eliminated when positive-valued items are removed 
(as in the original definition of \cite{caragiannis2019unreasonable}). 
Alternatively, the claim without this restriction 
(to positive-valued items) will hold if assuming that 
the ex-ante proportional distribution sum-of-variances-minimizing distribution always puts all zero-valued items in the least valuable set.} 
Let $D\in \mathcal{D}$ be an ex-ante proportional sum‐of‐variances-minimizing distribution. 
Then, \(D\) is EFX ex‐post.
\end{theorem}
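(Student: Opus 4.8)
The plan is to reduce the theorem to a purely combinatorial statement about single allocations and then settle it with a one-good exchange argument. By \cref{cor:identical3-sov-minimizing}, every allocation $P$ in the support of an ex-ante proportional sum-of-variances-minimizing distribution $D\in\mathcal{D}$ minimizes $\|v(P)-(\textnormal{PS})^n\|^2$ over all allocations in $\Pi(n,\mathcal{M})$. Since $\sum_i v(P_i)=v(\mathcal{M})=n\cdot\textnormal{PS}$ is the same for every allocation, expanding the squared norm gives $\|v(P)-(\textnormal{PS})^n\|^2=\sum_i v(P_i)^2-n\,\textnormal{PS}^2$, so minimizing the distance is equivalent to minimizing the sum of squared bundle values $\sum_i v(P_i)^2$. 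Hence it suffices to prove that any allocation $A=(A_1,\dots,A_n)$ that minimizes $\sum_i v(A_i)^2$ is EFX.

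Next I would suppose towards contradiction that such a minimizer $A$ is not EFX. Because all valuations equal $v$, a violation of EFX means there exist agents $i,l$ and a good $g\in A_l$ with $v(A_i)<v(A_l\setminus\{g\})=v(A_l)-v(g)$. I fix this pair and this good and consider the modified allocation $A'$ obtained by transferring $g$ from $A_l$ to $A_i$, i.e.\ $A'_i=A_i\cup\{g\}$, $A'_l=A_l\setminus\{g\}$, and $A'_t=A_t$ for all other $t$. This $A'$ is again a valid allocation in $\Pi(n,\mathcal{M})$, so to reach a contradiction it is enough to show $\sum_t v(A'_t)^2<\sum_t v(A_t)^2$.

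Finally I would compute the change exactly. Only the two bundles $A_i,A_l$ change, and by additivity
\[
\sum_t v(A'_t)^2-\sum_t v(A_t)^2=\bigl(v(A_i)+v(g)\bigr)^2+\bigl(v(A_l)-v(g)\bigr)^2-v(A_i)^2-v(A_l)^2=2\,v(g)\,\bigl(v(A_i)-(v(A_l)-v(g))\bigr).
\]
The EFX violation gives $v(A_i)-(v(A_l)-v(g))<0$, and since every good has positive value we have $v(g)>0$; hence the right-hand side is strictly negative, contradicting the minimality of $A$. Therefore every distance-minimizing allocation is EFX, and by the reduction above so is every allocation in the support of $D$, proving the theorem for all $n\ge2$.

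I do not expect a genuine obstacle here. The only points requiring care are recognizing that distance minimization coincides with minimization of $\sum_i v(A_i)^2$ (which is what makes the exchange strictly improve the global objective), and observing that positivity of goods is exactly what is needed to render the improving move strict. The potential edge case of a singleton bundle $A_l$ is handled automatically, since then $v(A_l)-v(g)=0\le v(A_i)$ and no EFX violation can arise.
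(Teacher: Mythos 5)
Your proof is correct, and it follows the same skeleton as the paper's: invoke \cref{cor:identical3-sov-minimizing} to reduce to showing that every distance-minimizing allocation is EFX, then derive a contradiction by transferring a single good out of the envied bundle. The difference lies in how the strict improvement is established. The paper keeps the squared-distance objective, sorts the bundles so that $B_n$ is the least-valued one, moves the offending good $g$ into $B_n$, and argues via strict convexity of the one-variable function $f(x)=(x-\textnormal{PS})^2+(1-V-x-\textnormal{PS})^2+C$ together with the interleaving inequality $v(B_n)<\min(v(B'_n),v(B'_i))\le\max(v(B'_n),v(B'_i))<v(B_i)$. You instead first observe that, since $\sum_i v(P_i)=n\cdot\textnormal{PS}$ is the same for all allocations, minimizing $\|v(P)-(\textnormal{PS})^n\|^2$ is equivalent to minimizing $\sum_i v(P_i)^2$; then the transfer of $g$ to the \emph{envious} agent $i$ (no sorting needed, any violating pair works) changes the objective by exactly
\[
2\,v(g)\,\bigl(v(A_i)-(v(A_l)-v(g))\bigr)<0,
\]
which is strictly negative precisely because of the EFX violation and $v(g)>0$. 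Your route buys a shorter, purely algebraic verification that dispenses with the convexity argument and the w.l.o.g.\ ordering of bundles; the paper's convexity formulation, on the other hand, makes geometrically explicit why pushing the two affected coordinates toward each other shrinks the distance to the proportional-share vector, which matches the visual intuition of \cref{fig:3Dtriangle}. Both are complete; your edge-case remark about singleton bundles is correct but not needed, since the contradiction hypothesis already rules it out.
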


\begin{proof}
  Assume w.l.o.g that $v(\mathcal{M}) = 1$.
  Let $B \in \textnormal{Supp}(D)$
  be some allocation in the support of $D$
  and assume for the sake of contradiction that $B$ is not EFX.
  Due to \cref{cor:identical3-sov-minimizing},
  $B$ must minimize the distance to the 
  $( \textnormal{PS})^n$
  over all allocations in $\Pi(n, \mathcal{M})$. 
  Denote $B = (B_1, \ldots, B_n)$
  and assume w.l.o.g that 
  $v(B_1) \geq v(B_2) \geq ... \geq v(B_n)$.
  As the allocation is not EFX, there exists an agent $i \in \mathcal{N}$
  and good $g \in B_i$ such that 
  $v(B_i \setminus \left\{ g \right\}) > v(B_n)$.
  We create a new allocation 
  $B' = (B_1, \ldots, B_{i-1}, B'_i, B_{i+1},\ldots, B_{n-1}, B'_n)$,
  where $B'_i = B_i \setminus \left\{ g \right\}$
  and $B'_n = B_n \cup \left\{ g \right\}$ (while for $l\neq i,n$ we have $B'_l=B_l$). 
  We prove that $B'$ has a shorter distance to $(\textnormal{PS})^n$, in contradiction 
  to \cref{cor:identical3-sov-minimizing}. 
  Let $V = 1-v(B_i)-v(B_n)$  
  and let 
  $C = \sum_{l\in [n]\setminus \{i,n\}} (v(B_l) - \textnormal{PS})^2$. 
  We define:
  \begin{align*}
    f(x) 
    &= || (v(B_1), \ldots, v(B_{i-1}), x, v(B_{i+1}), \ldots, v(B_{n-1}), 1 - V - x) - ( \textnormal{PS})^n ||^2 \\
    &= (x - \textnormal{PS})^2 + (1 - V - x - \textnormal{PS})^2 + C
  \end{align*}
  
  Notice that $v(B_i) + v(B_n) = v(B'_i) + v(B'_n) = 1 - V$. 
  It holds that $v(B_n) <v(B_n)+ v({g})= v(B'_n)$ and $v(B'_i) =v(B_i)- v({g})< v(B_i)$, 
  by the assumptions that $v(\{g\})>0$ (all items have positive value). 
  Additionally, we assume by contradiction that $v(B'_i)= v(B_i \setminus \left\{ g \right\}) > v(B_n)$, 
  thus $v(B_i) > v(B_n) + v(\{g\}) = v(B'_n)$. We conclude that 
  $ v(B_n) < \min\left(v(B'_n), v(B'_i)\right) $ and $\max\left(v(B'_n ), v(B'_i )\right)< v(B_i)$, so:
  \begin{align}
    \label{eq:inequality}
    v(B_n) 
    < \min\left(v(B'_n), v(B'_i)\right) 
    \leq \max\left(v(B'_n ), v(B'_i )\right)
    < v(B_i). 
  \end{align}
  
  Notice that $f(v(B_i)) = f(v(B_n))$ and $f(v(B'_i)) = f(v(B'_n))$.
  As the function $f(x)$ is strictly convex 
  and due to \cref{eq:inequality},
  we have that 
  $f(v(B'_i)) =\frac{f(v(B'_i)) + f(v(B'_n))}{2}  < \frac{f(v(B_i)) + f(v(B_n))}{2} =f(v(B_i))$. 
  Thus,
  \begin{align*}
    || v(B') - (\textnormal{PS})^n ||^2 
    = f(v(B'_i)) 
    < f(v(B_i)) 
    = || v(B) - (\textnormal{PS})^n ||^2,
  \end{align*}
  which is a contradiction to \cref{cor:identical3-sov-minimizing}. 
  The claim follows.
\end{proof}

It was shown  \cite{AmanatidisBirmpasMarkakis2018}  that an EFX‐fair ex‐post 
allocation is ex‐post \(\tfrac{2}{3}\)\nobreakdash-MMS‐fair for three agents, 
and is ex‐post \(\tfrac{4}{7}\)\nobreakdash-MMS‐fair for $n\geq 4$. We thus 
derive immediately that while  ex-ante proportional sum‐of‐variances-minimizing distribution might not be ex-post MMS-fair (\cref{prop:identical3-not-exact}), it does provide the corresponding  MMS approximation guarantees 
when agents have identical additive valuations.
\begin{corollary}[MMS Ex‐Post Guarantees]\label{cor:identical3-mms}
Consider a setting with $n$ agents that have the same additive valuation $v$ over $\mathcal{M}$. 
Let $D\in \mathcal{D}$ be an  ex-ante proportional sum‐of‐variances-minimizing distribution. 
As $D$ is EFX‐fair ex‐post, the following maximin‐share (MMS) approximation guarantees hold (by \cite{AmanatidisBirmpasMarkakis2018}):
\begin{enumerate}
  \item For three  agents, \(D\) is ex‐post \(\tfrac{2}{3}\)\nobreakdash-MMS‐fair.
  \item For \(n \ge 4\)  agents, \(D\) is ex‐post \(\tfrac{4}{7}\)\nobreakdash-MMS‐fair.
\end{enumerate}
\end{corollary}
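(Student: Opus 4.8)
The plan is to derive this statement as an immediate consequence of two facts already in hand: the ex-post EFX guarantee of \cref{thm:identical3-efx}, and the known implication of \cite{AmanatidisBirmpasMarkakis2018} that, for identical additive valuations, EFX entails a constant-factor MMS approximation with ratio $2/3$ for $n\le 3$ and $4/7$ for $n\ge 4$. Since both ingredients are established earlier, there is essentially no new combinatorial work to do; the only point requiring care is that the MMS-approximation implication is applied allocation-by-allocation and must be lifted to an ex-post statement about the whole distribution $D$.

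First I would fix an arbitrary allocation $A=(A_1,\dots,A_n) \in \textnormal{Supp}(D)$ in the support of the sum-of-variances-minimizing distribution. By \cref{thm:identical3-efx} the distribution $D$ is EFX ex-post, which by definition means that every allocation in its support is EFX; in particular $A$ is EFX. I would then invoke the result of \cite{AmanatidisBirmpasMarkakis2018}: for a single common valuation $v$, any EFX allocation gives every agent at least $\tfrac{2}{3}\,\MMS$ when $n\le 3$, and at least $\tfrac{4}{7}\,\MMS$ when $n\ge 4$. Applying this to $A$ yields $v(A_i)\ge \rho\cdot \MMS$ for every agent $i$, with $\rho=2/3$ for $n=3$ and $\rho=4/7$ for $n\ge 4$.

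Finally, since $A$ was an arbitrary allocation in the support, the same bound $v(A_i)\ge \rho\cdot\MMS$ holds for every supporting allocation; by the definition of an ex-post property for a distribution (a property holds ex-post if it holds for every allocation in the support), this establishes that $D$ is $\rho$-MMS-fair ex-post, splitting into the two stated cases. The one subtlety worth flagging is that the cited implication is stated specifically for identical valuations, which is exactly the hypothesis here, so it applies verbatim. There is no genuine obstacle remaining: the substantive argument—bounding the Euclidean distance to $(\textnormal{PS})^n$ via strict convexity to force every supporting allocation to be EFX—was already carried out in the proof of \cref{thm:identical3-efx}, and this corollary merely records the downstream numerical consequences.
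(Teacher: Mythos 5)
Your proposal is correct and follows exactly the paper's own route: combine the ex-post EFX guarantee of \cref{thm:identical3-efx} with the implication of \cite{AmanatidisBirmpasMarkakis2018} that EFX yields $\tfrac{2}{3}$-MMS for $n\le 3$ and $\tfrac{4}{7}$-MMS for $n\ge 4$, applied allocation-by-allocation over the support. The paper treats this as immediate, and your only addition---explicitly lifting the per-allocation bound to the ex-post property of $D$---is a careful spelling-out of the same argument.
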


The following proposition provides an upper bound 
of $ 275 / 304 \approx 0.904$ 
to the approximation ratio 
which an ex-ante proportional sum‐of‐variances-minimizing distribution can provide to the MMS ex-post.

\begin{proposition}[MMS in-approximation]\label{prop:identical3-upper-bound}
There exists an instance with three agents with identical additive valuations over 
  six goods such that every ex-ante proportional sum-of-variances-minimizing distribution $D \in \mathcal{D}$ 
  gives some agent no more than $275 / 304 \approx 90.4\%$ of her MMS ex-post.
\end{proposition}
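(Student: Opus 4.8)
The plan is to reuse the \emph{same} instance already introduced in \cref{example:3identical-lower-bound}, rather than constructing a new one. The key observation driving this is that the target ratio $275/304$ is precisely $825/912$: the quotient of the minimum bundle value of the distance-minimizing allocation and the maximin share of that instance. Hence no fresh construction is needed, and the entire argument reduces to reading off the worst MMS approximation directly from the structural analysis that was already carried out for \cref{prop:identical3-not-exact}.

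First I would recall the two structural facts about the instance of \cref{example:3identical-lower-bound} that were established in the proof of \cref{prop:identical3-not-exact} (which itself rests on \cref{cor:identical3-sov-minimizing}): (i) the maximin share equals $912$, witnessed by $\numberedAllocation{\textnormal{MMS}} = (\{a,f\},\{b,e\},\{c,d\})$ with value vector $(1176,912,912)$; and (ii) the unique distance-minimizing allocation up to permutation is $\numberedAllocation{min} = (\{d,e,f\},\{b,c\},\{a\})$ with value vector $(1095,1080,825)$. By \cref{cor:identical3-sov-minimizing}, every ex-ante proportional sum-of-variances-minimizing distribution $D\in\mathcal{D}$ is supported \emph{only} on the permutations of $\numberedAllocation{min}$.

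Next I would read off the worst bundle value. Since $\min_{i\in[3]} v(\numberedBundle{min}{i}) = \min\{1095,1080,825\} = 825$, every allocation in $\textnormal{Supp}(D)$, being a permutation of $\numberedAllocation{min}$, contains a bundle valued at exactly $825$ that is handed to some agent. As $\textnormal{MMS}=912$ and $825/912 = 275/304$, that agent receives precisely $(275/304)\cdot\textnormal{MMS}$ of her maximin share. Because this occurs in \emph{every} allocation in the support of \emph{every} $D\in\mathcal{D}$, no ex-ante proportional sum-of-variances-minimizing distribution can guarantee any agent more than $275/304\approx 90.4\%$ of her MMS ex-post, which is exactly the asserted bound.

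The main obstacle is not conceptual but verificational: one must confirm, as in \cref{prop:identical3-not-exact}, that $\numberedAllocation{min}$ is genuinely the \emph{unique} distance minimizer up to permutation and that the MMS-fair allocation $\numberedAllocation{\textnormal{MMS}}$ lies strictly farther from $(\textnormal{PS})^3$. This is a finite enumeration over all partitions of the six goods into three bundles (carried out by computer search), and it is precisely this uniqueness that pins $\mathcal{D}$ down to value vector $(1095,1080,825)$ and forbids any escape to a better-balanced allocation. Once this enumeration is in hand, the remaining step, the arithmetic $825/912 = 275/304$, is immediate.
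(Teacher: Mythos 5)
Your proposal is correct and matches the paper's own proof essentially verbatim: both reuse \cref{example:3identical-lower-bound}, invoke the structural fact (via \cref{cor:identical3-sov-minimizing} and the enumeration from \cref{prop:identical3-not-exact}) that every distribution in $\mathcal{D}$ is supported only on permutations of $\numberedAllocation{min}$ with value vector $(1095,1080,825)$, and conclude from $\min_i v(\numberedBundle{min}{i}) = 825$ and $\textnormal{MMS} = 912$ that the approximation is at most $825/912 = 275/304$. Your added remark that the argument hinges on the (computer-verified) uniqueness of the distance minimizer is accurate and is exactly the dependency the paper relies on as well.
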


\begin{proof}
Consider the instance in \cref{example:3identical-lower-bound} (presented in the proof of \cref{prop:identical3-not-exact}). 
It was shown that $\textnormal{MMS} = 912$. 
Additionally, it was shown that every allocation in the support of an ex-ante proportional sum‐of‐variances-minimizing distribution $D \in \mathcal{D}$ is a permutation of 
$\numberedAllocation{min} = ( \left\{ d,e,f \right\}, \left\{ b,c \right\}, \left\{ a \right\})$
  where $v(\numberedAllocation{min}) = (1095, 1080, 825)$.
  As $\min_{i\in [3]} v(\numberedBundle{min}{i}) = 825$,  
  we conclude that any ex-ante proportional sum‐of‐variances-minimizing distribution $D$
  provides an approximation of 
  at most\footnote{{
    We remark that this counterexample is unlikely to be the example with the worst MMS approximation,  
    and additional work could establish better bounds.
  }} 
  $825/912 = 275 / 304$ to the $\MMS$ ex-post.
\end{proof}

\section{Main Result: Sum-of-Variances Minimization has Poor Fairness}
\label{sec:nonidentical2}

In this section we present our main negative result. We show that when valuations are not identical, there are settings in which any  ex-ante proportional sum‐of‐variances-minimizing distribution is not EF1 ex-post (thus also not EFX ex-post), and does not provide any  constant-factor   $\MMS$ approximation ex-post.
This absolute failure to provide any fairness ex-post guarantee occurs even in the most basic setting with only two agents with normalized additive valuations over only two goods. As the valuations are normalized, this negative result  holds with respect to the sum-of-variances-of-ratios (SoVoR) objective $\tilde{\Phi}(D)$ as well. 

\begin{theorem}[Failure for Non‐Identical Agents]\label{prop:nonidentical2-impossibility}
There exists an instance with two agents with additive valuations over two goods,
where both agents have a non-zero valuation for each good
and the valuations are normalized,
such that the following holds: 
the support of every  ex-ante proportional sum‐of‐variances-minimizing distribution 
includes an allocation which provides one of the agents an empty bundle. 
Thus,  any ex-ante proportional sum‐of‐variances-minimizing distribution 
$D \in \mathcal{D}$  
is not EF1 ex-post and provides no constant-factor approximation to the MMS ex-post. 
\end{theorem}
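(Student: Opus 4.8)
The plan is to prove the statement by exhibiting a single explicit instance and reducing the optimization to a comparison of a handful of distributions. With $n=2$ agents and $\mathcal{M}=\{1,2\}$ there are exactly four allocations: the two \emph{split} allocations $B=(\{1\},\{2\})$ and $C=(\{2\},\{1\})$ in which each agent gets one good, and the two \emph{monopoly} allocations $A=(\{1,2\},\emptyset)$ and $E=(\emptyset,\{1,2\})$ in which one agent gets the empty bundle. Normalizing $v_1(\mathcal{M})=v_2(\mathcal{M})=1$ and writing $v_1=(a,1-a)$ and $v_2=(b,1-b)$ with $a\neq b$ and $a,b\in(0,1)$, the value-vectors are $A=(1,0)$, $E=(0,1)$, $B=(a,1-b)$, $C=(1-a,b)$, and each agent's proportional share is $1/2$. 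I would fix a concrete \emph{same-side} instance in which no deterministic proportional allocation exists, for example $a=9/10$ and $b=6/10$ (equivalently $v_1=(9,1)$, $v_2=(6,4)$ normalized to $10$).

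The decisive observation is that avoiding an empty bundle is an extremely restrictive constraint. An allocation gives some agent $\emptyset$ exactly when it is $A$ or $E$, so a distribution is empty-bundle-free iff it is supported on $\{B,C\}$. On that segment ex-ante proportionality pins the distribution down completely: writing $p_B+p_C=1$, the requirement $\mu_1=(1-a)+(2a-1)p_B\ge 1/2$ forces $p_B\ge 1/2$ (since $a>1/2$), while $\mu_2=b+(1-2b)p_B\ge 1/2$ forces $p_B\le 1/2$ (since $b>1/2$), so $p_B=p_C=1/2$ is the \emph{only} feasible empty-bundle-free distribution. Consequently, to show that \emph{every} minimizer uses $A$ or $E$ it suffices to beat this one distribution: I would exhibit a single ex-ante proportional distribution that uses a monopoly allocation and has strictly smaller sum-of-variances. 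Concretely I take $D'$ supported on $\{B,E\}$ with $p_B=1/(2b)$ and $p_E=1-1/(2b)$, chosen so that $\mu_2=1/2$ and $\mu_1=a/(2b)\ge 1/2$. Using the two-point variance formula $\Var=p(1-p)\,\delta^2$ (where $\delta$ is the gap between the two attained values), a short computation gives $\Phi(D')=\tfrac{(2b-1)(a^2+b^2)}{4b^2}$, to be compared with $\Phi$ of the even split, which equals $(a-\tfrac12)^2+(b-\tfrac12)^2$; for the chosen numbers these are $0.1625<0.17$.

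Assembling the argument: since the even split is the unique empty-bundle-free feasible distribution and is strictly beaten by $D'$, it is not a minimizer, hence no empty-bundle-free distribution lies in $\mathcal{D}$, i.e. every minimizer places positive probability on $A$ or $E$. The fairness failure then reads off a monopoly allocation in the support: the agent receiving $\emptyset$ still envies the other agent after either good is removed from her two-good bundle (both goods have positive value), so the allocation is not EF1 and hence not EFX, and that agent gets value $0$ while her maximin share $\min\{a,1-a\}>0$ is strictly positive, ruling out any $\rho$-MMS guarantee with $\rho>0$. Because the valuations are normalized, the identical statement holds verbatim for the SoVoR objective $\tilde{\Phi}$.

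I expect the main obstacle to be locating a suitable instance together with the improving distribution, not the bookkeeping. The claim is genuinely false for many instances: whenever $a,b$ lie on opposite sides of $1/2$ a deterministic proportional split exists and is the zero-variance minimizer, and even in the same-side regime, if $a,b$ are both close to $1/2$ the forced even split already beats every monopoly-using distribution. The content is therefore to select a same-side instance where concentrating probability on one agent's favored split together with the opposing monopoly allocation \emph{strictly} lowers the aggregate variance below the value forced by proportionality on $\{B,C\}$. The concavity of $\Phi$ in the probability vector (each variance is a linear term minus a convex square, hence concave, so the minimum over the proportional polytope is attained at a vertex) is the tool I would use to enumerate the finitely many candidate vertices and certify which one wins.
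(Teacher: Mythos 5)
Your proposal is correct and follows essentially the same route as the paper's proof: you pick a normalized two-good instance where both agents prefer the same good, observe that the unique ex-ante proportional distribution avoiding empty bundles is the 50--50 split over the two one-good-each allocations, and beat it with a distribution supported on one split allocation plus one monopoly allocation whose probabilities make the disadvantaged agent's proportionality constraint tight (the paper uses values $(4,96)$ and $(38,62)$ with probabilities $25/31$, $6/31$; you use $(9,1)$ and $(6,4)$ with $5/6$, $1/6$). Your numerical verification ($0.1625 < 0.17$) and the ensuing EF1/MMS failure argument are sound, so the proof goes through.
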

\begin{proof}
We prove that the claim holds for the following example:   
\begin{example}\label{example:lower-bound}
Consider a setting with two agents with the following additive valuations over the set $\mathcal{M} =\{a,b\}$ of two goods:
  \begin{align*}
    &v_1(a)=4, v_1(b)=96; \\
    &v_2(a)=38, v_2(b)=62.
  \end{align*}    
  In this setting there are four possible allocations of $\mathcal{M}$ into two bundles, 
  we enumerate them with their corresponding value-vectors:
  \begin{align*}
    \numberedAllocation{1} &=(\{a\},\{b\})\to(96,38), &  \numberedAllocation{2} &=(\{b\},\{a\})\to(4,62), \\
    \numberedAllocation{3} &=(\{a,b\},\emptyset)\to(100,0), &  \numberedAllocation{4} &=(\emptyset,\{a,b\}) \to(0,100). 
  \end{align*}

\end{example}
To prove the claim we analyze the ex-ante proportional sum‐of‐variances-minimizing distributions for this example.
  Note that the valuations are normalized  ($v_1(\mathcal{M})=v_2(\mathcal{M})=100$) 
  and that the MMS of every agent is positive 
  ($\textnormal{MMS}_1=4>0,\textnormal{MMS}_2=38 > 0$). 

  Consider the distribution\footnote{ 
    In fact, in \cref{prop:nonidentical2-sov-minimizing} we show that $D_{min}$ is the unique ex-ante proportional sum‐of‐variances-minimizing distribution. 
  } $D_{min}$ 
  which receives the allocations $\numberedAllocation{1}$ and 
  $\numberedAllocation{4}$ with probability $p_1 = 25 / 31$ and $p_4 = 6 / 31$, respectively.   
  This distribution is ex-ante proportional
  as 
  {$\mathbb{E}_{(A_1, A_2) \sim D_{min}}[v_1(A_1)] = 96 \cdot \frac{25}{31} > 77$}
  and 
  {$\mathbb{E}_{(A_1, A_2) \sim D_{min}}[v_2(A_2)]= 50$.}
  Moreover,
  \begin{align*}
    \Phi(D_{min}) &= \Var_{      {(A_1, A_2)}
      \sim D}[v_1(
        {A_1}
        )] 
        + \Var_{
          {(A_1, A_2)}
          \sim D}[v_2(
            {A_2}
            )] \\
    &= 
    \Var\left[(96 - 0) \cdot \textnormal{Ber}(25 / 31)\right] 
    + \Var\left[(100 - 38) \cdot 
    \textnormal{Ber}(6 / 31)\right] \\
    &= \frac{25}{31} \cdot \frac{6}{31}((96 - 0)^2 + (100 - 38)^2) < 2039 
  \end{align*}
  where $\textnormal{Ber}(p)$ denotes a Bernoulli random variable with parameter $p$.

  On the other hand,
  the only ex-ante proportional  
  distribution 
  over the allocations $\numberedAllocation{1}$ 
  and $\numberedAllocation{2}$ is the uniform distribution (each has probability half).
  Denote this distribution by $D_{good}$ (as it has ``good'' ex-post fairness: it is EFX and MMS ex-post).
  This distribution has a sum-of-variances value of 
  $\Phi(D_{good}) = 
  \Var\left[(96 - 4) \cdot \textnormal{Ber}(1 / 2)\right] 
  + \Var\left[(62 - 38) \cdot 
  \textnormal{Ber}(1 / 2)\right] = 
  \frac{1}{4}(92^2 + 24^2) = 2260 > 2039 > \Phi(D_{min})$. 
  Thus, every ex-ante proportional sum‐of‐variances-minimizing distribution
  $D\in \mathcal{D}$  
  must either have the allocation 
  $\numberedAllocation{3}$ or the 
  allocation $\numberedAllocation{4}$ in its
  support. As both $\numberedAllocation{3}$ and $\numberedAllocation{4}$ 
  provide one of the agents an empty bundle, the claim follows.  
\end{proof}

  { 
    We can think of the 
    distribution $D_{min}$ as starting from the 
    ``balanced'' distribution $D_{good}$ 
    (which assigns 
    one good to each agent)
     and then shifting all 
    probability mass from allocation 
    $\numberedAllocation{2}$ towards allocations 
    $\numberedAllocation{1}$ 
    and $\numberedAllocation{4}$.
    The key point is that
    allocation $\numberedAllocation{1}$ gives agent 2 
    {a bundle of reasonable value (38),} 
    while 
    allocation $\numberedAllocation{4}$ gives agent 2 a 
    bundle of maximal value (100). As $\numberedAllocation{2}$ gives agent 2 a 
    value in between (62), 
    we can shift more probability mass to $\numberedAllocation{1}$
    rather than $\numberedAllocation{4}$
    while still keeping the distribution ex-ante 
    proportional.\footnote{ 
      As allocation $\numberedAllocation{1}$ 
      assigns a large bundle (96) to agent 1, 
      shifting $D_{min}$ towards $\numberedAllocation{1}$ 
      ensures that agent 1 receives her proportional 
      share ex-ante.
    }
    This is done without greatly increasing the imbalance between the agents’ bundle values - which we call their gaps.
    Specifically, this adjustment 
    lets us choose a skewed probability  
    $p > 0.8$ such that
    the allocation $\numberedAllocation{1}$
    has probability $p$
    and $\numberedAllocation{4}$ has the remaining probability.
    As the sum-of-variances depends  
    on the sum of squared gaps,
    the maximal gap dominates the expression.
    But, the larger gap increases only slightly (from $92$ to $96$),
    and so,
    the resulting increase in sum-of-variances is minor. 
    At the same time, the probabilities become much 
    more concentrated on a single allocation ($\numberedAllocation{1}$), 
    leading to an overall reduction in sum-of-variances, and 
    $\Phi(D_{min}) / \Phi(D_{good}) < 0.91 < 1$. 
    }
  
As hinted by denoting the distribution $D_{min}$ as such,
it can be shown that this distribution is the only 
ex-ante proportional sum‐of‐variances-minimizing distribution. 
We state this formally in the following proposition.

\begin{restatable}{proposition}{nonidenticaltwosovminimizing}
    \label{prop:nonidentical2-sov-minimizing}
    For \cref{example:lower-bound} (presented in 
    \cref{prop:nonidentical2-impossibility}), the distribution
    $D_{min}$
    which 
    receives the allocations $\numberedAllocation{1}$ 
    and $\numberedAllocation{4}$ with probabilities 
    $p_1 = 25 / 31$ and $p_4 = 6 / 31$, respectively,
    is the unique 
    ex-ante proportional
    sum-of-variances-minimizing distribution
    over the allocations
    $\numberedAllocation{1}, \numberedAllocation{2}, \numberedAllocation{3}, \numberedAllocation{4}$. 
\end{restatable}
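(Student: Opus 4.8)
The plan is to exploit the concavity of the sum-of-variances objective as a function of the probability vector and thereby reduce the problem to a finite vertex enumeration. First I would fix the enumeration of the four allocations of \cref{example:lower-bound} and write any candidate distribution as a probability vector $p=(p_1,p_2,p_3,p_4)$ with $p_j\ge 0$ and $\sum_j p_j=1$, so that $PROP_{\vec{v}}$ becomes the compact convex polytope obtained by intersecting the probability simplex over the four allocations with the two ex-ante proportionality half-spaces $\{\,96p_1+4p_2+100p_3\ge 50\,\}$ and $\{\,38p_1+62p_2+100p_4\ge 50\,\}$ (both agents have $\textnormal{PS}=50$). The key structural observation is that each agent's variance, written as $\sum_j p_j\,v_i(\numberedBundle{j}{i})^2-\bigl(\sum_j p_j\,v_i(\numberedBundle{j}{i})\bigr)^2$, is a linear function of $p$ minus the square of a linear form, hence \emph{concave} in $p$; summing over the two agents shows that $\Phi$ is concave in $p$ on $PROP_{\vec{v}}$.

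Next I would invoke the standard principle that a concave function on a nonempty compact convex polytope attains its minimum at a vertex, and then upgrade it to the exact uniqueness I need: if a single vertex $v^\ast$ is the \emph{strict} minimizer among all vertices, it is the unique global minimizer. This follows because any feasible point is a convex combination $\sum_t \lambda_t v_t$ of vertices, so concavity gives $\Phi\bigl(\sum_t \lambda_t v_t\bigr)\ge \sum_t \lambda_t \Phi(v_t)$; if any weight on a vertex $v_t\ne v^\ast$ is positive, the right-hand side strictly exceeds $\Phi(v^\ast)$, and hence $v^\ast$ is the only point attaining the minimum value.

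The computational heart is then to enumerate the vertices. I would first check that each of the four pure allocations $\numberedAllocation{1},\dots,\numberedAllocation{4}$ is infeasible (each violates exactly one proportionality constraint), so every vertex lies on a cutting plane and has support of size at least two. Scanning the six simplex edges for the points where a proportionality constraint becomes tight, I expect exactly four feasible vertices: the uniform distribution $D_{good}=(\tfrac12,\tfrac12,0,0)$ on the edge $\{\numberedAllocation{1},\numberedAllocation{2}\}$; the two distributions cut out on the edge $\{\numberedAllocation{1},\numberedAllocation{4}\}$ by the two constraints, namely $D_{min}=(\tfrac{25}{31},0,0,\tfrac{6}{31})$ and $V=(\tfrac{25}{48},0,0,\tfrac{23}{48})$; and the point $(0,0,\tfrac12,\tfrac12)$ on the edge $\{\numberedAllocation{3},\numberedAllocation{4}\}$, while the other three edges carry no feasible point. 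I would also verify that the candidate three-support vertices (one coordinate zero, both constraints tight) all degenerate back into these four. Finally I would evaluate $\Phi$ at each vertex, reusing the values $\Phi(D_{min})=\tfrac{150}{961}(96^2+62^2)<2039$ and $\Phi(D_{good})=2260$ already computed in the proof of \cref{prop:nonidentical2-impossibility}, and checking $\Phi(V)=2300+\tfrac{575}{2304}\cdot 62^2$ and $\Phi\bigl((0,0,\tfrac12,\tfrac12)\bigr)=5000$ are both strictly larger. Since $D_{min}$ is the strict vertex minimizer, the uniqueness principle yields that $D_{min}$ is the unique ex-ante proportional sum-of-variances-minimizing distribution.

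The main obstacle I anticipate is ensuring the vertex enumeration is genuinely complete rather than a list of convenient candidates: I must argue that every vertex arises as a constraint-tight point on a simplex edge (or degenerates to one), which rests on the infeasibility of all four pure allocations and on careful bookkeeping of which proportionality plane is active along each edge. A secondary subtlety is the uniqueness upgrade itself: concave minimization alone only guarantees \emph{some} vertex optimum, so it is the exact (non-numerical) strict comparison across all four vertices that rules out alternative optima, including non-vertex ones.
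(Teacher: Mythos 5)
Your proposal is correct, and its overall route is the one the paper takes: represent distributions as points of the polytope cut out by the probability simplex and the two proportionality half-spaces, use concavity of $\Phi$ to reduce the minimization to vertices, enumerate the vertices, and compare values. Two differences are worth noting. First, your uniqueness step is genuinely more elementary: the paper proves that $\Phi$ is \emph{strongly} concave (via a Hessian computation) and then invokes the convex-analysis theorem that a strongly concave function on a compact convex set attains its minimum only at extreme points; you need only plain concavity, since Jensen's inequality $\Phi\bigl(\sum_t \lambda_t v_t\bigr)\ge\sum_t \lambda_t\Phi(v_t)$ plus the strict comparison at the vertices already forces any point placing positive weight off $D_{min}$ to have strictly larger objective (this does implicitly use Minkowski's theorem that a compact polytope is the convex hull of its vertices, which you should cite). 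Second, your edge-by-edge bookkeeping is slightly tighter than the paper's case analysis: on the edge $p_1=p_4=0$ the paper evaluates $\Phi$ at the point $(0,50/96,46/96,0)$, which actually violates agent~2's proportionality constraint (her expected value there is $62\cdot 50/96<50$), whereas you correctly discard that edge as entirely infeasible; the paper's extra evaluation is harmless since the value exceeds $\Phi(D_{min})$ anyway, but your version avoids comparing against a point outside $\mathcal{P}$. Your vertex list --- $D_{good}=(\tfrac12,\tfrac12,0,0)$, $D_{min}=(\tfrac{25}{31},0,0,\tfrac{6}{31})$, $V=(\tfrac{25}{48},0,0,\tfrac{23}{48})$, and $(0,0,\tfrac12,\tfrac12)$, with the three-support candidates degenerating to $D_{good}$ or $(0,0,\tfrac12,\tfrac12)$ --- and your numerical comparisons agree with the paper's, so the enumeration-completeness concern you raise is resolved exactly as you outline: every vertex must have at least one zero coordinate (the two proportionality equalities together with the simplex equality give only three tight constraints in $\mathbb{R}^4$), pure allocations are infeasible, and the remaining cases are precisely your edge points and degenerate three-support points.
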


\textit{Proof deferred to \cref{sec:analyzing-sov-objective}.}

\section{Beyond Sum-of-Variances Minimization}\label{sec:beyond}

Up to now we have focused on the objective of sum-of-variances-minimization (subject to ex-ante proportionality). We have proven that such an optimization fails to ensure ex-post fairness. 
But one may wonder if selecting another objective function over the agents' variances would solve the problem. 
We next show that multiple other natural objective functions also fail to ensure ex-post fairness.  

Recall that for valuation vector $\vec{v}= (v_1,v_2, \ldots, v_n)$, we use $PROP_{\vec{v}}$ to denote the set of ex-ante proportional valuations. 
Let $\Psi_{\vec{v}}(\cdot)$ be an objective function that for a distribution $D\in PROP_{\vec{v}}$ outputs a non-negative value. Consider the problem of minimizing $\Psi_{\vec{v}}$ over the set of distributions $PROP_{\vec{v}}$, and let $\mathcal{D}_{\vec{v}}^{\Psi}$ denote the set of distributions that minimize $\Psi_{\vec{v}}$ over  $PROP_{\vec{v}}$. 
Our goal in this section is to generalize the negative result of \cref{prop:nonidentical2-impossibility} to other objective functions.

\begin{lemma}\label{lem:other-obj}
Consider an instance 
with two agents with additive valuations over two goods, 
where both agents have a non-zero valuation for each good, 
the valuations are normalized
and {there exists a good that both agents value 
strictly more than half of their total value}.
Then, there exists a  
unique ex-ante proportional distribution $D_{good}$ 
that is EFX and MMS ex-post 
({the distribution which flips a fair coin 
to decide whether agent 1 receives the first good 
and agent 2 the second, or vice versa}).
Moreover, if for objective $\Psi_{\vec{v}}$ it holds that 
$D_{good}\notin \mathcal{D}_{\vec{v}}^{\Psi}$,
then 
every distribution $D\in \mathcal{D}_{\vec{v}}^{\Psi}$ (an ex-ante proportional distribution that minimizes $\Psi_{\vec{v}}$) is supported on an allocation which provides one of the agents with an empty bundle.
Thus, every distribution $D\in \mathcal{D}_{\vec{v}}^{\Psi}$ 
is not EF1 ex-post and provides no constant-factor approximation to the MMS ex-post. 
\end{lemma}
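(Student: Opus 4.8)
The plan is to first pin down all four deterministic allocations and classify their ex-post fairness, then to use the ``majority good'' hypothesis to prove that the coin flip is the \emph{only} ex-ante proportional distribution that avoids the two bad allocations, and finally to convert this uniqueness into the stated conclusion via a one-line contradiction argument.

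First I would enumerate the four allocations exactly as in \cref{example:lower-bound}: $\numberedAllocation{1}=(\{a\},\{b\})$, $\numberedAllocation{2}=(\{b\},\{a\})$, $\numberedAllocation{3}=(\{a,b\},\emptyset)$, and $\numberedAllocation{4}=(\emptyset,\{a,b\})$. Since each good has positive value, for two agents one has $\MMS_i=\min(v_i(a),v_i(b))>0$, so the two singleton allocations $\numberedAllocation{1},\numberedAllocation{2}$ are EFX (removing the single good from the other agent's bundle leaves the empty set) and MMS-fair, whereas $\numberedAllocation{3},\numberedAllocation{4}$ each hand one agent an empty bundle and are therefore neither EF1 nor any positive fraction of that agent's (positive) MMS. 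Hence on this instance the EFX allocations coincide with the MMS allocations and equal $\{\numberedAllocation{1},\numberedAllocation{2}\}$, so a distribution is EFX (equivalently MMS) ex-post precisely when its support is contained in $\{\numberedAllocation{1},\numberedAllocation{2}\}$.

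The crux is the uniqueness of $D_{good}$ among ex-ante proportional distributions on $\{\numberedAllocation{1},\numberedAllocation{2}\}$, and this is exactly where the hypothesis that some good, say $b$, satisfies $v_1(b),v_2(b)>C/2$ (with $C=v_i(\mathcal{M})$ the normalized total) does the work. Writing a generic such distribution as $q\,\numberedAllocation{1}+(1-q)\,\numberedAllocation{2}$, I would compute the two expected values $\mu_1=q\,v_1(a)+(1-q)v_1(b)$ and $\mu_2=q\,v_2(b)+(1-q)v_2(a)$, both of which equal $C/2$ at $q=1/2$. Because $b$ is the majority good for both agents, $\mu_1$ is strictly decreasing in $q$ while $\mu_2$ is strictly increasing in $q$; thus the constraint $\mu_1\ge C/2$ forces $q\le 1/2$ and $\mu_2\ge C/2$ forces $q\ge 1/2$. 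The two proportionality constraints pull in opposite directions and pin down $q=1/2$, i.e.\ the fair coin flip $D_{good}$. This proves the first assertion: $D_{good}$ is the unique ex-ante proportional distribution that is EFX (and MMS) ex-post. I expect this opposing-constraints step to be the only genuinely load-bearing part; everything else is bookkeeping, since without the majority-good hypothesis a whole interval of $q$ could be proportional.

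Finally, for the ``moreover'' part I would argue by contradiction. Suppose $D_{good}\notin\mathcal{D}_{\vec{v}}^{\Psi}$ yet some minimizer $D\in\mathcal{D}_{\vec{v}}^{\Psi}$ places no probability on $\numberedAllocation{3}$ or $\numberedAllocation{4}$. Then $\mathrm{Supp}(D)\subseteq\{\numberedAllocation{1},\numberedAllocation{2}\}$, and since $\mathcal{D}_{\vec{v}}^{\Psi}\subseteq PROP_{\vec{v}}$ the distribution $D$ is ex-ante proportional; by the uniqueness just established, $D=D_{good}$, whence $D_{good}\in\mathcal{D}_{\vec{v}}^{\Psi}$, contradicting the hypothesis. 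Therefore every $D\in\mathcal{D}_{\vec{v}}^{\Psi}$ must have $\numberedAllocation{3}$ or $\numberedAllocation{4}$ in its support, and as each of these gives one agent an empty bundle, $D$ is not EF1 ex-post and, the deprived agent's MMS being strictly positive, $D$ provides no constant-factor MMS approximation ex-post. This is the claimed conclusion, and note it is objective-agnostic: any concrete objective $\Psi$ (maximal variance, sum of standard deviations, variance of variances, and so on) requires only the separate verification that $D_{good}$ fails to minimize it.
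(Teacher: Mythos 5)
Your proof is correct and follows essentially the same route as the paper: identify $\{\numberedAllocation{1},\numberedAllocation{2}\}$ as the only MMS/EFX allocations, use the majority-good hypothesis to pin down the fair coin flip as the unique ex-ante proportional distribution supported on them, and then conclude by contradiction. Your explicit monotonicity argument (that $\mu_1$ is strictly decreasing and $\mu_2$ strictly increasing in $q$, so the two proportionality constraints force $q=1/2$) is in fact a welcome elaboration of a step the paper asserts without computation.
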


\begin{proof}
  {
  We first prove that there exists a unique ex-ante proportional distribution $D_{good}$ that is EFX and MMS ex-post. 
  Notice that the only MMS allocations 
  are the two allocations which provide 
  one good to each agent. 
  As there exists a good which both players value 
  strictly more than half of their total value, 
  the only distribution which is  
  ex-ante proportional and ex-post MMS is $D_{good}$. 
  
  For the moreover part, 
  as there exists a good which both players value 
  strictly more than half of their total value,
  there exists no deterministic proportional allocation.
  Thus, $D_{good}$ is the sole ex-ante proportional
  distribution supported only on allocations which 
  provide both agents non-empty bundles. 
  But, as $D_{good} \not \in \mathcal{D}_{\vec{v}}^\Psi$, 
  every distribution $D\in \mathcal{D}_{\vec{v}}^\Psi$
  contains an allocation which provides one of the agents
  an empty bundle.\footnote{
  Note that for the claim to be meaningful we need to show that $\mathcal{D}^\Psi_{\vec{v}} \neq  \emptyset$.
  We later
  show that this holds 
  for every  objective we consider.}}
\end{proof}

{The main minimization objective we have studied so far was the sum-of-variances, which is equivalent to the average over the variances. Next, we consider four other objectives, based on variances and on standard-deviations:}
{
The maximal variance objective,
the ``variance of variances'' objective, 
``standard-deviation of standard-deviations'' objective
and the sum of standard-deviations objective.

{Observe that for each objective $\Psi_{\vec{v}}$ of the four, the set $\mathcal{D}^{\Psi}_{\vec{v}}$ of ex-ante proportional minimizers is not empty.
Indeed, each objective $\Psi_{\vec{v}}$ is continuous, and the set $PROP_{\vec{v}}$ is both compact and non-empty. 
Thus, the objective $\Psi_{\vec{v}}$ has a minimizer $D^\ast \in \mathcal{D}^{\Psi}_{\vec{v}}$ over the set 
$PROP_{\vec{v}}$.}

For the first three objectives, our proof will be  based on \cref{example:lower-bound}  which clearly satisfies the conditions of \cref{lem:other-obj}.
We will be using the ex-ante proportional distribution $D_{min}$ (presented in \cref{prop:nonidentical2-sov-minimizing}) to show that for each
objective $\Psi_{\vec{v}}$ of the three objectives, 
$\Psi_{\vec{v}}(D_{min})<\Psi_{\vec{v}}(D_{good})$. 
This clearly implies that $D_{good}$ does not minimize the objective $\Psi_{\vec{v}}$ over ex-ante proportional distributions, 
and we can apply the lemma to conclude that the support of any $\Psi_{\vec{v}}$ minimizing distribution is not ex-post fair.\footnote{
    Additionally, as $\mathcal{D}^\Psi_{\vec{v}} \neq \emptyset$
    there exists such distributions.
}
}

{
We next list
the variances of each agent under $D_{min}$ and $D_{good}$,
as they are useful in {computing the objectives we consider, and in proving the claim:} 
\begin{align*}
  &\Var_{(A_1, A_2)\sim D_{min}}[v_1(A_1)] = \frac{25}{31} \cdot \frac{6}{31} \cdot 96^2 < 1439,
  &\Var_{(A_1, A_2)\sim D_{min}}[v_2(A_2)] = \frac{25}{31} \cdot \frac{6}{31} \cdot 62^2 = 600, \\
  &\Var_{(A_1, A_2)\sim D_{good}}[v_1(A_1)] = \frac{1}{4} \cdot 92^2 = 2116,
  &\Var_{(A_1, A_2)\sim D_{good}}[v_2(A_2)] = \frac{1}{4} \cdot 24^2 = 144. \\
\end{align*}
Next, using \cref{lem:other-obj}, we show that 
each of the following optimization problems fails to ensure ex-post fairness: 

\begin{enumerate}

    \item \textbf{Minimizing the Largest Variance of Any Agent.}  
    The largest variance of any agent objective is defined as\footnote{ 
      As the square-root function  
      is monotonically increasing, 
      the result also holds when minimizing the largest standard deviation of any agent.
    }\textsuperscript{,}\footnote{ 
      One may also consider a lexicographic version of this objective, 
      where one first minimizes the largest variance,
      and subject to that, minimizes the second-largest variance, and so on.
      As the distribution $D_{min}$ already differs 
      from $D_{good}$ in the largest variance,
      the result also holds for this lexicographic version.
    }
    $\Psi_{\vec{v}}(D) = \max_{i\in [n]} \Var_{(A_1, A_2,\ldots, A_n)\sim D}[v_i(A_i)]$.
    Going back to \cref{example:lower-bound},
    the distribution $D_{good}$ achieves  a value
    $$\Psi_{\vec{v}}(D_{good}) = \max\{\Var_{{(A_1, A_2)}\sim D_{good}}[v_1({A_1})], \Var_{{(A_1, A_2)}\sim D_{good}}[v_2({A_2})]\} = 2116.$$
    While $D_{min}$ achieves a smaller value: 
    $$\Psi_{\vec{v}}(D_{min}) = \max\{\Var_{{(A_1, A_2)}\sim D}[v_1({A_1})], \Var_{{(A_1, A_2)}\sim D
    }[v_2({A_2})]\} < 1439< 2116 = \Psi_{\vec{v}}(D_{good}).$$
    We conclude that {$D_{good}$}   
    does not minimize this objective, and thus this objective also fails to provide EF1
    ex-post and any approximation to the MMS ex-post.
  
    \item \textbf{Minimizing the ``Variance of Variances''.}
    The ``Variance of Variances'' objective
    is a measure which aims to ``concentrate'' the variances of 
    all agents around their 
    mean.\footnote{{The \emph{expectation-of-the-variances} objective is equivalent to the sum-of-variances objective, so all our results for the later objective apply to the former. Here we focus on the variance of the variances.}}
    This can be thought of as defining the random variable 
    $X$ which receives each variance $\Var_{(A_1,A_2,\ldots, A_n)\sim D}[v_i(A_i)]$ 
    with probability $1/n$,  and then taking the variance of $X$.
    Thus, the ``Variance of Variances'' objective is defined as
    $\Psi_{\vec{v}}(D) = \frac{1}{n} \sum_{i=1}^n (\Var_{(A_1,A_2,\ldots, A_n)\sim D}[v_i(A_i)] - \mu_{var}(D))^2$,
    where $\mu_{var}(D) = \frac{1}{n} \sum_{i=1}^n \Var_{(A_1,A_2,\ldots, A_n)\sim D}[v_i(A_i)]$ is the average variance.
    Going back to \cref{example:lower-bound}, for the distribution $D_{good}$ 
    the random variable $X$ receives the values $2116$ and $144$ with probability $1/2$ each,
    i.e. $X \sim (2116 - 144) \cdot \textnormal{Ber}(1 / 2) + 144$. 
    Thus,
    $\Psi_{\vec{v}}(D_{good}) = \frac{1}{4} \left( 2116 - 144 \right)^2 = 972196$. 
    Using a similar computation, we 
    can see that for $D_{min}$ 
    we have the random variable 
    $X\sim 
    (\Var_{(A_1, A_2)\sim D_{min}}[v_1(A_1)] 
    - \Var_{(A_1, A_2)\sim D_{min}}[v_2(A_2)]) 
    \cdot \textnormal{Ber}(1 / 2) 
    + \Var_{(A_1, A_2)\sim D_{min}}[v_2(A_2)]$
    and $D_{min}$   
    achieves a smaller objective value of 
    $\Psi_{\vec{v}}(D_{min}) = 
    \frac{1}{4} (\Var_{(A_1, A_2)\sim D_{min}}[v_1(A_1)] 
    - \Var_{(A_1, A_2)\sim D_{min}}[v_2(A_2)])^2  
    < \frac{1}{4} (1439 - 600)^2 < 175981 < 972196= \Psi_{\vec{v}}(D_{good})$. 
    We conclude that {$D_{good}$}   
    does not minimize this objective, and thus this objective also fails to provide EF1
    ex-post and any constant-factor approximation to the MMS ex-post.
    \item \textbf{Minimizing the ``Standard Deviation of the Standard Deviations''.} 
    Similarly to the previous objective,
    this objective can be thought of as defining the random variable 
    $X$ which receives each standard deviation $\sigma_i(D)$ with probability $1/n$
    and then taking the standard deviation of $X$.
    Thus, the standard deviation of the standard deviations objective is defined as
    $\Psi_{\vec{v}}(D) = \sqrt{\frac{1}{n} \sum_{i=1}^n (\sigma_i(D) - \mu_{\sigma}(D))^2}$,
    where $\mu_{\sigma}(D) = \frac{1}{n} \sum_{i=1}^n \sigma_i(D)$ is
    the average standard deviation.
    Going back to \cref{example:lower-bound}, we get
    $\sigma_1(D_{good}) = 46$ and $\sigma_2(D_{good}) = 12$.
    Thus, computing {the expected value and the} 
    standard deviation 
    of the random variable $X$ which receives $\sigma_1(D_{good})$
    and $\sigma_2(D_{good})$ with probability $1/2$ each,
    we get {$\mu_{\sigma}(D) = (46+12)/2=29$ and
    $\Psi_{\vec{v}}(D_{good}) 
    = \sqrt{\frac{1}{2}((46 - 29)^2 + (12 - 29)^2)} = 17$, respectively.}
    On the other hand,
    $\sigma_1(D_{min}) = \sqrt{\Var_{(A_1, A_2)\sim D_{min}}[v_1(A_1)]} < {\sqrt{1439}}
    < 38$ and $\sigma_2(D_{min}) = \sqrt{\Var_{(A_1, A_2)\sim D_{min}}[v_2(A_2)]} {=\sqrt{600}} > 24$,
    and thus  
    {
    \begin{align*}
      \Psi_{\vec{v}}(D_{min}) 
      &= \sqrt{\frac{1}{2}((\sigma_1(D_{min}) - \mu_{\sigma}(D))^2 + (\sigma_2(D_{min}) - \mu_{\sigma}(D))^2)} \\
      &= \sqrt{\frac{1}{4}(\sigma_1(D_{min}) - \sigma_2(D_{min}))^2} \\ 
      &< \sqrt{\frac{1}{4}(38 - 24)^2} = 7 < 17 = \Psi_{\vec{v}}(D_{good}).
    \end{align*}}
    We conclude that {$D_{good}$}  
    does not minimize this objective, and thus this objective also fails to provide EF1
    ex-post and any approximation to the MMS ex-post.
    
    \item {\textbf{Minimizing the Sum-of-Standard-Deviations.}}
    The sum-of-standard-deviations objective is defined 
    as $\Psi_{\vec{v}}(D) = \sum_{i=1}^n \sigma_i(D)$, where
    $\sigma_i(D) = \sqrt{\Var_{(A_1, A_2,\ldots, A_n)\sim D}[v_i(A_i)]}$ 
    is the standard deviation of agent $i$.
    To show the failure 
    of the sum-of-standard-deviations 
    fairness ex-post,
    we provide the 
    following example:\footnote{We present this additional example as \cref{example:lower-bound} 
    does not serve as a counterexample for this objective. Indeed, in 
    \cref{example:lower-bound}, the distribution 
    $D_{good}$ is an ex-ante proportional 
    sum-of-standard-deviations-minimizing distribution.}

    \begin{example}\label{example:lower-bound-std}
    Consider a setting with two agents with the following additive valuations over the set $\mathcal{M} =\{a,b\}$ of two goods:
      \begin{align*}
        &v_1(a)=90, v_1(b)=10; \\
        &v_2(a)=55, v_2(b)=45.
      \end{align*}    
      In this setting there are four possible allocations of $\mathcal{M}$ into two bundles, 
      we enumerate them with the corresponding value-vectors:
      \begin{align*}
        \numberedAllocation{1} &=(\{a\},\{b\})\to(90,45), &  \numberedAllocation{2} &=(\{b\},\{a\})\to(10,55), \\
        \numberedAllocation{3} &=(\{a,b\},\emptyset)\to(100,0), &  \numberedAllocation{4} &=(\emptyset,\{a,b\}) \to(0,100). 
      \end{align*}
    \end{example}

  Note that the valuations are normalized  ($v_1(\mathcal{M})=v_2(\mathcal{M})=100$),
  both agents have a non-zero valuation for each good,
  and there exists an item ($a$) which 
  both agents value strictly more than 
  half of their total value 
  ($v_1(a) > v_1(\mathcal{M}) / 2$, $v_2(a) > v_2(\mathcal{M}) / 2$). 
  Thus, the conditions of \cref{lem:other-obj}
  are met by \cref{example:lower-bound-std}.

  Consider the distribution $F$  
  which receives the allocations 
  $\numberedAllocation{1}$ and 
  $\numberedAllocation{4}$ with 
  probability $p_1 = 10 / 11$ 
  and $p_4 = 1 / 11$, respectively.   
  This distribution is ex-ante proportional
  as 
  {$\mathbb{E}_{(A_1, A_2) \sim F}[v_1(A_1)] = 90 \cdot \frac{10}{11} > 81$}
  and 
  {$\mathbb{E}_{(A_1, A_2) \sim F}[v_2(A_2)]= 50$.}
  Moreover,
  \begin{align*}
    \Phi_{\vec{v}}(F) &= \sigma_1(F) + \sigma_2(F) \\ 
    &= \sqrt{\Var_{{(A_1, A_2)}\sim F}[v_1({A_1})]}
    + \sqrt{\Var_{{(A_1, A_2)}\sim F}[v_2({A_2})]} \\ 
    &= \sqrt{\Var[(90 - 0) \cdot \textnormal{Ber}(10 / 11)]}
    + \sqrt{\Var[(45 - 100) \cdot \textnormal{Ber}(10 / 11)]} \\
    &= \sqrt{\frac{1}{11} \cdot \frac{10}{11} \cdot 90^2}
    + \sqrt{\frac{1}{11} \cdot \frac{10}{11} \cdot 55^2} 
    < 42.
  \end{align*}
  where $\textnormal{Ber}(p)$ denotes a Bernoulli random variable with parameter $p$. 
  On the other hand,
  the only ex-ante proportional  
  distribution 
  over the allocations $\numberedAllocation{1}$ 
  and $\numberedAllocation{2}$ is the uniform distribution (each has probability half).
  Denote this distribution by $F_{good}$.
  This distribution has a sum-of-standard-deviations
  value of
  \begin{align*}
    \Phi(F_{good}) 
    &= \sigma_1(F_{good}) + \sigma_2(F_{good}) \\
    &= \sqrt{\Var[(90 - 10) \cdot \textnormal{Ber}(1 / 2)]}
    + \sqrt{\Var[(45 - 55) \cdot \textnormal{Ber}(1 / 2)]} \\
    &= \sqrt{\frac{1}{4} \cdot 80^2} + \sqrt{\frac{1}{4} \cdot 10^2} \\ 
    & = 45 > 42 = \Phi(F)
  \end{align*}
  {We thus see that $F_{good}$, the only ex-ante proportional distribution 
  for which 
  both $P_3$ and $P_4$ are not in the support,  
  is not a sum‐of‐standard-deviations-minimizing distribution, so
  $F_{good} \not \in \mathcal{D}_{\vec{v}}^\Psi$.} 
  By \cref{lem:other-obj} 
  we conclude that
  this objective fails
  to provide ex-post fairness as well.  
\end{enumerate}

\section{Conclusion}\label{sec:conclusion}

In this paper, we considered the intuitive approach 
of ``minimizing variance'' by considering 
distributions which explicitly minimize the
sum-of-variances of the agents, subject to 
ex-ante proportionality. 
When agents have identical additive valuations,
every allocation in the support is EFX, and therefore
provides every agent a constant fraction of her maximin share (but if there are more than two agents it does not guarantee the full MMS).
On the other hand, in stark contrast
to the suggested intuition, 
when agents have non-identical 
additive valuations, 
we demonstrated that such sum-of-variances
minimization may entirely fail to 
guarantee any meaningful ex-post fairness. Even in the minimal case 
of two agents and two goods, the resulting distribution can assign an 
empty bundle to an agent, violating EF1 and not providing any constant-factor approximation to the MMS.
This finding underscores the limits of total variance 
minimization as a tool for designing Best-of-Both-Worlds 
fair distributions.
We have also shown that similar negative results hold 
for other variances-based and standard-deviations-based 
minimization objectives. We leave open the problem of 
finding some natural variances-based or 
standard-deviations-based objective  for which 
positive results can be proven.

\section*{Acknowledgments}
This research was supported by the Israel Science Foundation (grant No. 301/24).
Moshe Babaioff's research is also supported by a Golda Meir Fellowship.

\bibliographystyle{plain}
\bibliography{references}

\newpage

\appendix

\section{Missing Proofs for Identical Agents}
\label{app:proofs}

The following simple claim will be useful in the proof of \cref{lem:diff-minimizing}.

\begin{restatable}[Sum‐of‐variances Formula]{lemma}{sumvarianceformula}
\label{lem:sum-variance-formula}
Let $D$ be a distribution over allocations 
of $\mathcal{M}$ 
to two agents
and $v$ be an additive valuation over $\mathcal{M}$. 
Then
\[ \sum_{i=1}^{2} \Var_{(A_1, A_2) \sim D}[v(A_i)] 
= 2\cdot \Var_{(A_1, A_2) \sim D}[v(A_1)] = 2\cdot \Var_{(A_1, A_2) \sim D}[v(A_2)]. \]
\end{restatable}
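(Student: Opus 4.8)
The plan is to exploit the fact that with only two agents and an additive valuation, the two bundle values always sum to a constant. Since an allocation $(A_1,A_2)$ is a partition of $\mathcal{M}$, we have $A_2=\mathcal{M}\setminus A_1$, and additivity gives $v(A_1)+v(A_2)=v(\mathcal{M})$ for \emph{every} allocation in the support of $D$. The crucial observation is that $v(\mathcal{M})$ does not depend on the allocation drawn from $D$; it is a fixed constant. Hence, as random variables over the draw $(A_1,A_2)\sim D$, we can write $v(A_2)=v(\mathcal{M})-v(A_1)$.

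First I would record this identity explicitly, and then apply the standard affine transformation rule for variance: for a random variable $X$ and constants $a,b$, one has $\Var[aX+b]=a^2\Var[X]$. Taking $X=v(A_1)$, $a=-1$, and $b=v(\mathcal{M})$ yields
\[
  \Var_{(A_1,A_2)\sim D}[v(A_2)]
  = \Var_{(A_1,A_2)\sim D}[v(\mathcal{M})-v(A_1)]
  = (-1)^2\,\Var_{(A_1,A_2)\sim D}[v(A_1)]
  = \Var_{(A_1,A_2)\sim D}[v(A_1)].
\]
That is, the two agents' variances coincide. This is the single substantive step, and it is where all the content lies; everything else is bookkeeping.

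Finally, I would conclude by summing: since the two summands are equal, $\sum_{i=1}^2\Var_{(A_1,A_2)\sim D}[v(A_i)] = \Var[v(A_1)]+\Var[v(A_2)] = 2\,\Var[v(A_1)]=2\,\Var[v(A_2)]$, which is exactly the claimed chain of equalities.

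There is essentially no hard part here: the lemma is a direct consequence of the constant-sum structure peculiar to the two-agent case. The only point requiring a moment's care is making sure the reader sees that $v(\mathcal{M})$ is a deterministic constant (independent of the realized allocation), so that subtracting it leaves the variance unchanged; this is what makes the two variances identical rather than merely related. This symmetry is precisely what is later used to reduce the sum-of-variances objective to a single agent's variance in the two-agent analysis.
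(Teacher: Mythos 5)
Your proof is correct and follows exactly the paper's argument: both use the constant-sum identity $v(A_1)+v(A_2)=v(\mathcal{M})$ together with the invariance of variance under the affine map $x\mapsto v(\mathcal{M})-x$ to conclude $\Var[v(A_1)]=\Var[v(A_2)]$, and then sum. No gaps; the approaches are identical.
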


\begin{proof}
Since $v(A_1)+v(A_2)= v(\mathcal{M})$ is a constant, 
$\Var_{(A_1,A_2)\sim D}[v(A_1)] = \Var_{(A_1,A_2)\sim D}[v(\mathcal{M}) - v(A_2)] = \Var_{(A_1,A_2)\sim D}[v(A_2)]$. Therefore,
\begin{align*}
  \sum_{i=1}^{2} \Var_{(A_1,A_2)\sim D}[v(A_i)] 
  &= \Var_{(A_1,A_2)\sim D}[v(A_1)] + \Var_{(A_1,A_2)\sim D}[v(A_2)] 
  = 2\,\Var_{(A_1,A_2)\sim D}[v(A_1)].
\end{align*}
An analogous equality holds for $v(A_2)$.
\end{proof}

\suppdiffmin*

\begin{proof} 
Fix $D\in \mathcal{D}$ to be some ex-ante proportional 
sum-of-variances-minimizing distribution. 
Assume that $D$ has support of size $k$, and recall that 
{
{$\numberedAllocation{j} = 
\left(\numberedBundle{j}{1}, \numberedBundle{j}{2}\right)$} }
denotes the $j$-th allocation in the support 
$ \textnormal{Supp}(D) = \left\{ 
  \numberedAllocation{j}
 \right\}_{j=1}^{k}$, 
and that we denote the probability that $D$ assigns to 
allocation  
{$\numberedAllocation{j}$}
by $p_j$.
Suppose, for contradiction, that $D$ assigns positive 
probability to an 
allocation 
{$\numberedAllocation{j{^\prime}}=
(\numberedBundle{j{^\prime}}{1}, \numberedBundle{j{^\prime}}{2})$} 
for some $1\leq j^{\prime}\leq k$ 
that is not difference minimizing. 
Thus, 
for some difference‐minimizing allocation $S=(S_1,S_2)$,
we have that
{
$|v( \numberedBundle{j{^\prime}}{1})-v( \numberedBundle{j{^\prime}}{2})| 
> |v(S_1)-v(S_2)|$}. 
Construct a distribution $D'$ that randomizes uniformly over $S=(S_1, S_2)$ and its permutation $(S_2,S_1)$.
Clearly, $D'$ is ex-ante proportional.
Computing the variance of agent $1$ we have that: 
  \begin{align*}
    \textnormal{Var}_{(A_1,A_2)\sim D}[v(A_1)]
    &= \sum_{j=1}^{k} p_j \cdot 
    \left(v\left(\numberedBundle{j}{1}\right) - \textnormal{PS}\right)^2 \\
    &= \sum_{j=1}^{k} p_j \cdot \left(v\left(\numberedBundle{j}{1}\right) - \frac{1}{2}\left[v\left(\numberedBundle{j}{1}\right) + v\left(\numberedBundle{j}{2}\right)\right]\right)^2 \\
    &= \frac{1}{4} \sum_{j=1}^{k} p_j \cdot 
    \left(v\left(\numberedBundle{j}{1}\right) 
    - v\left(\numberedBundle{j}{2}\right)\right)^2 \\
    &> \frac{1}{4} \sum_{j=1}^{k} p_j \cdot \left(v\left(S_1\right) - v\left(S_2\right)\right)^2 \\
    &= \frac{1}{4} \left(v\left(S_1\right) - v\left(S_2\right)\right)^2 \\
    &= \textnormal{Var}_{(A_1,A_2)\sim D'}[v(A_1)].
  \end{align*}
  Thus, by \cref{lem:sum-variance-formula}, 
  $\Phi(D) =  
  2  \cdot \Var_{(A_1,A_2)\sim D}[v(A_1)] > 2 \cdot \Var_{(A_1,A_2)\sim D'}[v(A_1)] 
  = \Phi(D')$, 
  contradicting the assumption that $D$ is sum-of-variances-minimizing 
  over the set of ex-ante proportional distributions for $(v,v)$.
\end{proof}

\diffminimizingmmsfair*

{This simple claim is well known \cite{Korf10}, we present its proof for completeness.}

\begin{proof}
  Let $S$ be a difference‐minimizing allocation of $\mathcal{M}$ 
  and assume w.l.o.g that $v(S_1) \leq v(S_2)$.
  Assume for contradiction that $v(S_1) < \MMS$.
  Now, let $P = (B_1, B_2)$ be an MMS-achieving allocation.
  Assume w.l.o.g that $v(B_1) \leq v(B_2)$.
  So $ \textnormal{MMS} = v(B_1)$.
  As $v(S_1) \leq v(S_2)$ and $S$ is not MMS-fair   
  we have that $v(S_1) < \textnormal{MMS} = v(B_1)$. 
  As $v(B_1) + v(B_2) = v(S_1) + v(S_2) = v(\mathcal{M})$, 
  we have that $v(S_2) > v(B_2) > v(B_1) > v(S_1)$. 
  Thus, $ v(S_2) - v(S_1) > v(B_2) - v(B_1)$
  in contradiction to the fact that $S$ is difference minimizing.
\end{proof}

\section{Missing Proofs for Two Non-Identical Agents}
\label{app:non-identical2-missing-proof}

\subsection{Analyzing the SoV Objective}
\label{sec:analyzing-sov-objective}

In this section we consider \cref{example:lower-bound} (presented in \cref{prop:nonidentical2-impossibility}) and show that the distribution $D_{min}$ presented in the proof
is indeed the only ex-ante proportional sum‐of‐variances-minimizing distribution.
We do so using standard tools from linear programming and convex analysis.

\begin{definition}
Let $\Delta^3$ be the space of all distributions  over $\numberedAllocation{1}, \numberedAllocation{2}, \numberedAllocation{3}, \numberedAllocation{4}$.
\[
\Delta^3 = \left\{ (p_1,p_2,p_3,p_4) \in \mathbb{R}^4 \;\middle|\; 
p_j \geq 0 \;\;\forall j\in [4],\;\; \sum_{j=1}^4 p_j = 1 \right\}.
\]

  Let $\mathcal{P}$ be the set of all possible distributions over 
  $\numberedAllocation{1}, \numberedAllocation{2}, \numberedAllocation{3}, \numberedAllocation{4}$
  which provide both agents at least 
  their proportional share ex-ante. 
  Formally, 
  \begin{align*}
    \mathcal{P} = \left\{ \vb{p} \in \Delta^3\ \middle|\  
    \sum_{j=1}^4 p_j \cdot v_1(\numberedBundle{j}{1}) \geq 50\ and\ 
    \sum_{j=1}^4 p_j \cdot v_2(\numberedBundle{j}{2}) \geq 50 \right\}.
  \end{align*}
  We think of $\vb{p}$ as the distribution which 
  receives allocations 
  $ \numberedAllocation{j} = 
  \left(\numberedBundle{j}{1}, \numberedBundle{j}{2}\right)$ with probability $p_j$. 

\end{definition}

\begin{restatable}{lemma}{phiconcave}
  \label{lem:phi-concave}
  The sum-of-variances function
  $\Phi: \mathcal{P} \to \mathbb{R}_{\geq 0}$ 
  is strongly concave and continuous.
\end{restatable}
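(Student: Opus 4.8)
The plan is to obtain an explicit closed form for $\Phi$ on $\mathcal{P}$ and read off both properties from it. Writing $x_j = v_1(\numberedBundle{j}{1})$ and $y_j = v_2(\numberedBundle{j}{2})$ for the two coordinates of the value-vector of $\numberedAllocation{j}$ (so $\mathbf{x}=(96,4,100,0)$ and $\mathbf{y}=(38,62,0,100)$ from \cref{example:lower-bound}), the expected values $\mu_1(\mathbf{p})=\sum_{j=1}^4 p_j x_j$ and $\mu_2(\mathbf{p})=\sum_{j=1}^4 p_j y_j$ are linear functionals of $\mathbf{p}$. Using $\Var[Z]=\mathbb{E}[Z^2]-\mathbb{E}[Z]^2$ for each agent, I would write
\[
  \Phi(\mathbf{p}) \;=\; \sum_{j=1}^4 p_j\,(x_j^2+y_j^2)\;-\;\mu_1(\mathbf{p})^2\;-\;\mu_2(\mathbf{p})^2,
\]
which exhibits $\Phi$ as a quadratic polynomial in $\mathbf{p}$.

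Continuity is then immediate: $\Phi$ is a polynomial, hence continuous on the compact set $\mathcal{P}$. For concavity I would argue structurally rather than compute. The first summand is linear in $\mathbf{p}$ (hence concave), while each term $-\mu_i(\mathbf{p})^2$ is the negation of the square of an affine function and is therefore concave; a sum of concave functions is concave, so $\Phi$ is concave on the whole simplex and in particular on $\mathcal{P}$.

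The delicate part, and the step I expect to be the main obstacle, is upgrading concavity to \emph{strong} concavity. Differentiating twice kills the linear term and leaves the constant Hessian $\nabla^2\Phi=-2(\mathbf{x}\mathbf{x}^{\top}+\mathbf{y}\mathbf{y}^{\top})$, so for every direction $\mathbf{d}$ tangent to the simplex (i.e.\ $\sum_j d_j=0$) one gets $\mathbf{d}^{\top}\nabla^2\Phi\,\mathbf{d}=-2\big[(\mathbf{x}\cdot\mathbf{d})^2+(\mathbf{y}\cdot\mathbf{d})^2\big]\le 0$. To obtain the uniformly negative bound that strong concavity demands, this quantity must vanish only at $\mathbf{d}=0$, i.e.\ the functionals $\mathbf{1},\mathbf{x},\mathbf{y}$ must leave no common orthogonal tangent direction. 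Here lies the catch: $\mathbf{x}\mathbf{x}^{\top}+\mathbf{y}\mathbf{y}^{\top}$ has rank at most $2$, so on the $3$-dimensional tangent space of $\Delta^3$ there is exactly a one-dimensional family of directions $\mathbf{d}^{\ast}$, namely those with $\mathbf{x}\cdot\mathbf{d}^{\ast}=\mathbf{y}\cdot\mathbf{d}^{\ast}=0$ (obtained by checking that $\mathbf{1},\mathbf{x},\mathbf{y}$ are linearly independent), along which the Hessian vanishes and $\Phi$ is merely affine. Consequently strong concavity can hold only after quotienting out $\mathbf{d}^{\ast}$, that is, on the $2$-dimensional subspace complementary to $\mathbf{d}^{\ast}$ on which the agents' expected values actually vary, where the restricted Hessian is negative definite. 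I would therefore either interpret the strong-concavity claim on that effective subspace, or, for the downstream uniqueness argument, combine plain concavity (which already forces any minimizer of $\Phi$ to a vertex of $\mathcal{P}$) with a direct check that $\Phi$ is strictly monotone along $\mathbf{d}^{\ast}$, which pins the minimizer down uniquely.
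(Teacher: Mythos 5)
Your derivation follows exactly the same route as the paper's own proof: the same closed form $\Phi(\mathbf{p}) = \mathbf{p}^{\top}\mathbf{s} - (\mathbf{p}^{\top}\mathbf{x})^2 - (\mathbf{p}^{\top}\mathbf{y})^2$, continuity because $\Phi$ is a polynomial, and concavity read off the constant Hessian $\nabla^2\Phi = -2\left(\mathbf{x}\mathbf{x}^{\top} + \mathbf{y}\mathbf{y}^{\top}\right)$. The divergence is in the last step, and there you are right and the paper is wrong: the paper asserts that $2\mathbf{x}\mathbf{x}^{\top}$ and $2\mathbf{y}\mathbf{y}^{\top}$ are positive-definite, but a rank-one Gram matrix in $\mathbb{R}^{4\times 4}$ is only positive \emph{semi}-definite, so their sum has rank at most $2$ and the Hessian annihilates a nontrivial direction even within the tangent space $\{\mathbf{d} : \sum_j d_j = 0\}$ of the simplex. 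Concretely, $\mathbf{d}^{\ast} = (1,1,-1,-1)$ satisfies $\mathbf{1}\cdot\mathbf{d}^{\ast} = \mathbf{x}\cdot\mathbf{d}^{\ast} = \mathbf{y}\cdot\mathbf{d}^{\ast} = 0$ (this is precisely the ``completing pairs'' structure $v(P^{(1)})+v(P^{(2)}) = v(P^{(3)})+v(P^{(4)}) = (100,100)$ used later in the paper), and since $\mathcal{P}$ has nonempty relative interior it contains segments in this direction along which $\Phi$ is affine. So \cref{lem:phi-concave} as stated is false: $\Phi$ is concave but not strongly (nor even strictly) concave on $\mathcal{P}$, and your proposal identifies a genuine error in the paper rather than containing a gap itself.

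Your proposed repair is also essentially the right one, with one caveat of phrasing: plain concavity guarantees that \emph{some} minimizer lies at an extreme point, not that \emph{every} minimizer does (a constant function is concave). Uniqueness still follows without strong concavity: write any non-extreme $\mathbf{q}\in\mathcal{P}$ as a proper convex combination $\sum_i \lambda_i E_i$ of extreme points and use
\begin{align*}
  \Phi(\mathbf{q}) \;\ge\; \sum_i \lambda_i \Phi(E_i) \;>\; \Phi(D_{min}),
\end{align*}
where the strict inequality holds because positive weight necessarily falls on some extreme point other than $D_{min}$, and \cref{prop:nonidentical2-sov-minimizing} shows every such extreme point has strictly larger objective value. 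Alternatively, your second suggestion also works: along the null direction $\mathbf{d}^{\ast}$ the function is affine with nonzero slope $\mathbf{s}\cdot\mathbf{d}^{\ast} = 10660 + 3860 - 10000 - 10000 = -5480$, so $\Phi$ has no flat minima in that direction. Either way, the lemma should be weakened to ``concave and continuous,'' and the downstream theorem invoked should be the statement that a concave function on a compact convex set attains its minimum at an extreme point; all conclusions the paper draws from this lemma survive that weakening.
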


\textit{Proof deferred to \cref{sec:missing-proofs-analyzing-sov}.}

\begin{restatable}{lemma}{compactconvex}
  \label{lem:compact-convex}
  The set $\mathcal{P}$ of ex-ante proportional distributions
  is compact and convex.  
\end{restatable}

\textit{Proof deferred to \cref{sec:missing-proofs-analyzing-sov}.}

\begin{definition}
    Let $\mathcal{C} \subseteq\mathbb{R}^n$
    be a convex set.
   An \emph{extreme point} $x\in \mathcal{C}$ 
   of $\mathcal{C}$ is a point such that 
   if $x = \lambda y + (1 - \lambda) z$ for some $y, z \in \mathcal{C}$ and $\lambda \in (0,1)$,
   then $y = z = x$.
\end{definition}

The following is a classic result 
from the theory of convex analysis.
It can be found in classic textbooks such as \cite{Rockafellar1970}. 

\begin{theorem}
  \label{thm:convex-convave-min}
  For a continuous and strongly concave function $f$ over a compact and convex set $\mathcal{C}$
  the global minimum will be attained at an extreme point of $\mathcal{C}$
  and no interior point of $\mathcal{C}$ can be a global minimum of $f$.
\end{theorem}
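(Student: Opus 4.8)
The plan is to prove the two assertions separately, leaning on three standard ingredients: the representation of a point of a compact convex set as a convex combination of its extreme points, Jensen's inequality for concave functions, and the defining inequality of strong concavity. First I would dispatch existence of a global minimizer: since $f$ is continuous and $\mathcal{C}$ is compact, the extreme value theorem guarantees that $f$ attains its minimum on $\mathcal{C}$; fix a minimizer $x^\ast$ and set $m = f(x^\ast) = \min_{x\in\mathcal{C}} f(x)$.

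To show the minimum is attained at an extreme point, I would invoke the Minkowski--Krein--Milman theorem: in $\mathbb{R}^n$ every point of a compact convex set is a finite convex combination of its extreme points. Writing $x^\ast = \sum_k \lambda_k e_k$ with extreme points $e_k$, weights $\lambda_k > 0$ summing to $1$, and using that strong concavity in particular implies concavity, Jensen's inequality gives $f(x^\ast) \ge \sum_k \lambda_k f(e_k) \ge \min_k f(e_k)$. Since every $e_k$ lies in $\mathcal{C}$ we also have $f(e_k) \ge m = f(x^\ast)$, and the two bounds force $\min_k f(e_k) = f(x^\ast)$. Hence some extreme point attains the global minimum, which is exactly the first claim.

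For the second assertion, I would use a symmetric two-point perturbation exploiting strong concavity. Suppose toward contradiction that an interior point $x_0$ is a global minimizer. Fix any direction $d \ne 0$; since $x_0$ is interior, both $x_0 + t d$ and $x_0 - t d$ lie in $\mathcal{C}$ for all sufficiently small $t > 0$. Applying the definition of $\mu$-strong concavity to these two points, whose midpoint is $x_0$, yields
\[
f(x_0) \;\ge\; \tfrac{1}{2} f(x_0 + t d) + \tfrac{1}{2} f(x_0 - t d) + \tfrac{\mu}{2}\, t^2 \|d\|^2 .
\]
Because $x_0$ is a global minimizer, $f(x_0 + td) \ge f(x_0)$ and $f(x_0 - td) \ge f(x_0)$, so the right-hand side is at least $f(x_0) + \tfrac{\mu}{2} t^2 \|d\|^2 > f(x_0)$, a contradiction. (Strict concavity alone would already suffice here, with the strict Jensen inequality replacing the explicit $\mu$-term.)

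I expect the only genuine obstacle to be invoking the extreme-point representation correctly; everything else is a direct application of Jensen's inequality and the strong-concavity inequality. Since the statement is cited as a textbook fact, I would present the argument as an assembly of these standard ingredients rather than reprove Krein--Milman from scratch, and I would note that the two parts together are precisely what \cref{prop:nonidentical2-sov-minimizing} needs, given that $\Phi$ is strongly concave (\cref{lem:phi-concave}) on the compact convex set $\mathcal{P}$ (\cref{lem:compact-convex}).
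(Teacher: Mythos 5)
Your proof is correct; note, however, that the paper never proves \cref{thm:convex-convave-min} at all --- it is quoted as a classical fact with a pointer to Rockafellar --- so your argument is necessarily a different route: a self-contained proof rather than a citation. Two observations on what your decomposition buys. First, it cleanly separates the hypotheses: the extreme-point part uses only plain concavity (Minkowski's theorem, that every point of a compact convex subset of $\mathbb{R}^n$ is a finite convex combination of extreme points, plus Jensen), i.e.\ it is the finite-dimensional Bauer minimum principle; strong concavity enters only when ruling out interior minimizers, where, as you note, strict concavity would already suffice. Second, two caveats are worth recording: (i) the finite convex-combination representation is a genuinely finite-dimensional fact --- in infinite dimensions Krein--Milman yields only the closed convex hull and one would invoke Bauer's maximum principle instead --- but this is harmless here, since the paper applies the theorem to $\mathcal{P} \subseteq \mathbb{R}^4$ (\cref{lem:phi-concave}, \cref{lem:compact-convex}, \cref{prop:nonidentical2-sov-minimizing}); (ii) in that very application $\mathcal{P}$ lies inside the hyperplane $\sum_{j} p_j = 1$, so its interior in $\mathbb{R}^4$ is empty and the second assertion of the theorem is vacuous as literally stated; your midpoint perturbation in fact proves the stronger, non-vacuous statement that no point of the \emph{relative} interior of $\mathcal{C}$ can be a global minimizer, since for such $x_0$ and any direction $d$ in the affine hull of $\mathcal{C}$ one has $x_0 \pm t d \in \mathcal{C}$ for all sufficiently small $t>0$, and the same contradiction follows.
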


\begin{definition}
    Consider a polyhedron $\mathcal{H} = \left\{ x \in \mathbb{R}^n | Ax \leq b \right\}$ in $\mathbb{R}^n$.
    A point $\vb{p} \in \mathcal{H}$ is called a \emph{vertex of $\mathcal{H}$} if 
    $\vb{p}$ is the solution to $n$ linearly independent constraints of $\mathcal{H}$.  
\end{definition}

A well-known result used in the field 
of linear programming is the following theorem.

\begin{theorem}
    \label{thm:extreme-vertex}
    Let $\mathcal{H}$ be a polyhedron in $\mathbb{R}^n$.
    Then, a point $\vb{p} \in \mathcal{H}$ is an extreme point of $\mathcal{H}$ if and only if
    it is a vertex of $\mathcal{H}$.
\end{theorem}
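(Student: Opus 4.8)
The plan is to prove the two implications separately, in both cases working with the set of constraints that are \emph{active} (tight) at the point in question. Write $\mathcal{H} = \{x \in \mathbb{R}^n : Ax \leq b\}$, let $a_i^\top x \leq b_i$ denote the $i$-th constraint, and for a feasible point $\vb{p}$ let $I(\vb{p}) = \{i : a_i^\top \vb{p} = b_i\}$ be its set of active constraints. The key quantity throughout is the rank of the family $\{a_i : i \in I(\vb{p})\}$, since by definition $\vb{p}$ is a vertex precisely when this rank equals $n$.

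For the direction \emph{vertex $\implies$ extreme point}, I would start from the assumption that $\vb{p}$ satisfies $n$ linearly independent active constraints, collected into an invertible $n \times n$ matrix $A'$ with $A'\vb{p} = b'$. Suppose $\vb{p} = \lambda y + (1-\lambda)z$ with $y, z \in \mathcal{H}$ and $\lambda \in (0,1)$. For each active row $a_i$, feasibility gives $a_i^\top y \leq b_i$ and $a_i^\top z \leq b_i$, while their convex combination equals $b_i$; since $\lambda, 1-\lambda > 0$, this forces both inequalities to be tight. Hence $A'y = b' = A'z$, and invertibility of $A'$ yields $y = z = \vb{p}$, so $\vb{p}$ is extreme.

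For the converse \emph{extreme point $\implies$ vertex}, I would argue by contradiction: assume the active constraints at $\vb{p}$ span a subspace of dimension strictly less than $n$. Then there is a nonzero direction $d$ with $a_i^\top d = 0$ for every $i \in I(\vb{p})$. I would then perturb $\vb{p}$ to $\vb{p} \pm \varepsilon d$ and verify feasibility: the active constraints remain tight, and for each inactive constraint the strict slack $b_i - a_i^\top \vb{p} > 0$ survives a sufficiently small $\varepsilon$. Taking $\varepsilon$ smaller than $\min_i (b_i - a_i^\top \vb{p})/|a_i^\top d|$ over the inactive indices $i$ with $a_i^\top d \neq 0$ keeps both points in $\mathcal{H}$, and writing $\vb{p} = \tfrac12(\vb{p}+\varepsilon d) + \tfrac12(\vb{p}-\varepsilon d)$ with $d \neq 0$ exhibits $\vb{p}$ as a nontrivial convex combination, contradicting extremality. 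Therefore the active constraints must span $\mathbb{R}^n$, giving $n$ linearly independent active constraints and making $\vb{p}$ a vertex.

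The main obstacle is the converse direction, specifically making the perturbation argument rigorous: one must guarantee that a \emph{single} $\varepsilon > 0$ simultaneously preserves feasibility of all constraints. This is exactly where the separation between active constraints (which are insensitive to motion along $d$) and inactive constraints (which carry strictly positive slack) is essential, and taking the minimum over the finitely many inactive constraints is what converts the pointwise slack into a uniform feasible step size.
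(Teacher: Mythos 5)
Your proof is correct, but there is nothing in the paper to compare it against: the paper states this result as ``a well-known result used in the field of linear programming'' and invokes it as a black box (just as it does with the concave-minimization theorem it attributes to Rockafellar's textbook), giving no proof of its own. What you have written is the standard textbook argument, and it is complete. The forward direction correctly forces tightness of every active constraint on both endpoints of any convex decomposition (since a convex combination with strictly positive weights of numbers $\leq b_i$ can equal $b_i$ only if both equal $b_i$) and then uses invertibility of the active system $A'x = b'$. The converse correctly exhibits a feasible segment $\vb{p} \pm \varepsilon d$ along a direction $d$ orthogonal to all active rows, with the uniform step size obtained by minimizing the slack-to-displacement ratio over the finitely many inactive constraints. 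Two small points worth making explicit if this were written out in full: first, in the converse, if every inactive constraint satisfies $a_i^\top d = 0$ the minimum defining $\varepsilon$ ranges over an empty set, in which case any $\varepsilon > 0$ works and the argument is unchanged; second, your reformulation of ``vertex'' as ``the active rows have rank $n$'' agrees with the paper's definition (``$\vb{p}$ is the solution to $n$ linearly independent constraints'') because $n$ linearly independent equations in $\mathbb{R}^n$ admit a unique solution, so $\vb{p}$ being \emph{a} solution of the tight subsystem is the same as being \emph{the} solution.
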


\begin{restatable}{corollary}{extremepoints}
  \label{cor:extreme-points}
  The extreme points of $\mathcal{P}$ satisfy one of the following: 
  \begin{itemize}
    \item At least one of the ex-ante proportional share constraints is tight
    and at least two of the $p_i$'s are zero. Thus, at most 
    two $p_i$'s are non-zero. 
    \item Both ex-ante proportional share constraints are 
    tight and at least one of the $p_i$'s is zero. 
    Thus, at most three $p_i$'s are non-zero.
  \end{itemize}
\end{restatable}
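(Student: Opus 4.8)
The plan is to combine \cref{thm:extreme-vertex} (extreme points are exactly vertices) with a rank count on the active constraints, and then rule out the single degenerate configuration that the corollary excludes. First I would write $\mathcal{P}$ explicitly as a polyhedron in $\mathbb{R}^4$: the four non-negativity constraints $p_j \ge 0$, the normalization $\sum_{j=1}^4 p_j = 1$ (encoded as the two inequalities $\sum_j p_j \le 1$ and $\sum_j p_j \ge 1$), and the two ex-ante proportionality constraints $96 p_1 + 4 p_2 + 100 p_3 \ge 50$ and $38 p_1 + 62 p_2 + 100 p_4 \ge 50$ read off from the value-vectors in \cref{example:lower-bound} (here the proportional share is $50$). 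The structural observation that drives everything is that the normalization is tight at every feasible point, yet its two inequality encodings are negatives of one another and hence together contribute only a single linearly independent functional.

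By \cref{thm:extreme-vertex}, an extreme point of $\mathcal{P}$ is a vertex, i.e.\ a point at which four linearly independent constraints are tight. Let $z$ denote the number of tight non-negativity constraints (the number of vanishing coordinates) and let $q \in \{0,1,2\}$ denote the number of tight proportionality constraints at such a point. Counting the distinct active functionals, the normalization supplies one, the zeros supply $z$, and the proportionality constraints supply at most $q$; hence the rank of the active set is at most $1 + z + q$. Since a vertex forces this rank to equal $4$, I obtain the necessary condition $z + q \ge 3$. (Note $z = 4$ is impossible, as the coordinates sum to $1$.)

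I would then split on $q$. If $q = 0$, the bound forces $z = 3$, i.e.\ a deterministic allocation putting probability $1$ on a single $\numberedAllocation{j}$; but each of the four value-vectors $(96,38)$, $(4,62)$, $(100,0)$, $(0,100)$ gives one agent strictly less than $50$, so no deterministic allocation lies in $\mathcal{P}$ and this case yields no vertex. If $q = 1$, then $z \ge 2$: at least one proportionality constraint is tight and at least two coordinates vanish, which is precisely the first case of the corollary. If $q = 2$, then $z \ge 1$: both proportionality constraints are tight and at least one coordinate vanishes, which is precisely the second case. This exhausts all vertices and establishes the dichotomy.

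I expect the main obstacle to be the bookkeeping around the always-active normalization constraint: one must argue carefully that, although it is tight everywhere and is written as two inequalities, it contributes only one unit of rank, so the remaining three units of freedom must be killed by the zero- and proportionality-constraints. The only other subtlety is the explicit exclusion of $q = 0$; this is immediate from the four value-vectors of \cref{example:lower-bound}, but it is exactly what makes the corollary's two-case (rather than three-case) statement correct.
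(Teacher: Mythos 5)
Your proposal is correct and follows essentially the same route as the paper's proof: invoke \cref{thm:extreme-vertex} to identify extreme points with vertices, observe that the normalization constraint is always tight and contributes exactly one unit of rank, and then rule out the configuration with three vanishing coordinates by noting that no single allocation gives both agents their proportional share. Your rank-counting bookkeeping (bounding the active rank by $1+z+q$ and handling possible degeneracy via inequalities) is somewhat more careful than the paper's assertion that exactly four constraints are tight, but it is the same argument in substance.
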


\textit{Proof deferred to \cref{sec:missing-proofs-analyzing-sov}.}

\nonidenticaltwosovminimizing*

\begin{proof}
  Due to \cref{thm:convex-convave-min},
  it is enough to show that all the extreme points $\vb{p}$ 
  of $\mathcal{P}$
  except $D_{min}$ 
  satisfy $\Phi(\vb{p}) > \Phi(D_{min})$. 
  Moreover, due to \cref{cor:extreme-points}, 
  we know that all extreme points of $\mathcal{P}$
  satisfy one of two structural cases presented in the corollary.
  We begin with the first case, 
  where at least two of the  
  constraints $p_j \geq 0$ for $j\in [4]$ are tight
  and one of the ex-ante proportional share constraints 
  is tight.
  We divide into four additional sub-cases 
  depending on which 
  of the constraints $p_j \geq 0$ for $j\in [4]$ 
  are tight.  
  {
  For the analysis of these sub-cases, 
  recall that $\Phi(D_{min}) < 2039$.} 
  \begin{itemize}
    \item  \textbf{If $p_2 = p_4 = 0$ or $p_1 = p_3 = 0$:} 
           In these two sub-cases, either only $p_1,p_3 \geq 0$, or 
           only $p_2,p_4 \geq 0$, respectively.
           Thus, in each of those sub-cases one of the agents does not receive her proportional share ex-ante.
           We conclude that there exists no extreme point 
           $\vb{p} \in \mathcal{P}$
           which satisfies 
           $p_2 = p_4 = 0$ or $p_1 = p_3 = 0$
           and is ex-ante proportional.
    \item \textbf{If $p_3 = p_4 = 0$:}
          Then, only $p_1,p_2$ can be non-zero, 
          and as each agent must receive 
          her proportional share ex-ante
          it must hold that
          $p_1 = p_2 = 1 / 2$.
          The sum-of-variances value is 
          $\Phi((1 / 2, 1 / 2, 0,0)) = 
          \Var\left[(96 - 4) 
          \cdot \textnormal{Ber}(1 / 2)\right]
          + \Var\left[(38 - 62) \cdot 
          \textnormal{Ber}(1 / 2)\right] = 
          \frac{1}{4}(92^2 + 24^2) = 2260 > \Phi(D_{min})$. 
           We conclude that all extreme points of 
           $\vb{p} \in \mathcal{P}$
           which satisfy $p_3 = p_4 = 0$
           achieve a greater sum-of-variances value than 
           $D_{min}$.
      \item {\textbf{If $p_2 = p_3 = 0$:}} 
            Then only $p_1,p_4$
            can be non-zero.
            Demanding that one of the ex-ante 
            proportional share constraints is tight,
            we have that either 
            $p_1 = 50 / 96$ or $p_1 = 25 / 31$.
            The latter is the distribution $D_{min}$,
            and it obviously achieves a smaller 
            sum-of-variances value than the former.
            We conclude that the only extreme point $\vb{p} \in \mathcal{P}\setminus \{D_{min}\}$ that satisfies $p_2 = p_3 = 0$ is one that  
            has a greater sum-of-variances value than 
            $D_{min}$.
      \item 
            {\textbf{If $p_1 = p_4 = 0$:}} 
            Then only $p_2,p_3$
            can be non-zero.
            By demanding 
            that one of the ex-ante proportional share constraints is tight,
            we have that $p_2 = 50 / 62$ or 
            $p_2 = 50 / 96$.
            If $p_2 = 50 / 62$, then the first agent doesn't receive her proportional share ex-ante,
            and for $p_1 = 50 / 96$ a sum-of-variances 
            value of
            $\Phi((0, 50 / 96, 46 / 96, 0)) = 
            \frac{50}{96} \cdot \frac{46}{96}((100 - 4)^2 + (62 - 0)^2) 
            > 3258 > \Phi(D_{min})$ is achieved. 
            {
              We conclude that all 
              extreme points of $\vb{p} \in \mathcal{P}$ 
              which satisfy
              $p_1 = p_4 = 0$
              achieve a 
              greater sum-of-variances value than $D_{min}$.}
        \item {\textbf{If $p_1 = p_2 = 0$:}  
              Then only $p_3,p_4$
              can be non-zero. 
              As each agent must receive 
              her proportional share ex-ante,
              we have that $p_3 = p_4 = 1 / 2$ and the sum-of-variances value is
              $\Phi((0, 0, 1 / 2, 1 / 2)) = \frac{1}{4} \cdot (100^2 + 100^2) = 5000 > \Phi(D_{min})$.
              We conclude that all extreme points of 
              $\vb{p} \in \mathcal{P}$ which satisfy 
              $p_1 = p_2 = 0$
              achieve a greater sum-of-variances value 
              than $D_{min}$.
              }
  \end{itemize}
  We proceed to the second case,
  where both ex-ante proportional share constraints are tight
  and at least one of the constraints 
  $p_j \geq 0$ for $j\in [4]$ is tight.
  Firstly, note that $(P_1, P_2)$ and 
  $(P_3, P_4)$ are \textit{completing pairs} of allocations
  in the sense that $P_1 + P_2 = P_3 + P_4 = (100,100)$.     
  Thus, setting $p_1 = p_2 = 1 / 2$
  or $p_3 = p_4 = 1 / 2$
  will provide both agents exactly their proportional share ex-ante.  
  Now, we divide into four sub-cases, 
  depending on which of the constraints
  $p_j \geq 0$ for $j\in [4]$
  is tight.{
  Let \emph{sub-case-$j$} for $j\in[4]$  be the case where $p_j = 0$. 
  Utilizing the fact that
  both ex-ante proportional share constraints are tight,
  for sub-case-$j$, 
  we derive a system of equations 
  for the variables
  $\vb{p} = (p_1, \ldots, p_4)$
  of the form $\sum_{l = 1}^4 p_{l} = 1, 
  \sum_{l = 1}^4 p_{l} 
  \cdot (v_1(\numberedBundle{l}{1}), 
  v_2(\numberedBundle{l}{2}))
  = (50,50)$,
  where $p_j = 0$.
  But, each sub-case contains exactly one 
  of the completing pairs of allocations\footnote{ 
    For $j = 1$ and $j = 2$ the completing pair is $(P_3, P_4)$, 
    while for $j = 3$ and $j = 4$ the completing pair is $(P_1, P_2)$.
  },
  and thus sub-case $j$'s system of equations 
  contains a solution of the form 
  $\vb{p} = (1 / 2, 1 / 2, 0, 0)$ or
  $\vb{p} = (0, 0, 1 / 2, 1 / 2)$.
  But, each sub-case's system of equations
  is linearly independent and thus has a unique solution.
  Thus, the only possible 
  extreme points $\vb{p} \in \mathcal{P}$ 
  which make both ex-ante proportional share constraints tight
  and at least one of the $p_j$'s zero
  are $(1 / 2, 1 / 2, 0,0)$
  and $(0, 0, 1 / 2, 1 / 2)$, 
  But, we have already shown that both these distributions
  have a sum-of-variances value larger than $\Phi(D_{min})$.
  Thus, overall we conclude that all extreme points $\vb{p} \in \mathcal{P}$ 
  except $D_{min}$ satisfy $\Phi(\vb{p}) > \Phi(D_{min})$.}
\end{proof}

\subsection{Missing Proofs For Analyzing the SoV Objective}
\label{sec:missing-proofs-analyzing-sov}

We introduce the following notations that will be useful in the proofs: 
 $(x_1, \ldots, x_n)^\intercal$ denotes the column vector 
obtained by transposing the row vector
$(x_1, \ldots, x_n)$.
For two vectors $\vb{x}$ and $\vb{y}$
of dimension $n$,
the expression $\vb{x}^\intercal \vb{y}$  
denotes their standard \emph{inner product}, i.e., the scalar
$\sum_{i = 1}^n x_iy_i$, and the expression $x \ast y$
denotes their \emph{Hadamard product}, 
a vector of dimension $n$ satisfying $(x \ast y)_i = x_i \cdot y_i$
for every $i \in [n]$.} 
We say that a matrix $M \in \mathbb{R}^{n \times n}$ is
\emph{positive-definite} if for all 
vectors $x \in \mathbb{R}^n{\setminus\{0\}}$,
$x^\intercal M x > 0$.
A matrix $M \in \mathbb{R}^{n \times n}$ is
\emph{negative-definite} if $-M$ is positive-definite.

Moreover, for a function $f: \mathbb{R}^n \to \mathbb{R}$,
the notation $\nabla^2 f$ denotes the \emph{Hessian matrix} of $f$,
satisfying $ \left(\nabla^2 f\right)_{ij} = \left( \frac{\partial^2 f}{\partial x_i \partial x_j} \right)$
for every $i,j \in [n]$.
A function $f:\mathbb{R}^n \to \mathbb{R}$
is called \emph{strongly concave} if there exists a constant $\mu > 0$ such that for all 
$x,y \in \mathbb{R}^n$ and $\lambda \in [0,1]$,
$f(\lambda x + (1-\lambda)y) \;\;\geq\;\; \lambda f(x) + (1-\lambda) f(y) + \tfrac{\mu}{2}\lambda(1-\lambda)\|x-y\|^2$.
Recall that a function $f:\mathbb{R}^n \to \mathbb{R}$ is \emph{strongly concave} 
if and only if its Hessian matrix is negative-definite everywhere.

\phiconcave*

\begin{proof}
{
    We begin by rewriting $\Phi$,
    \begin{align*}
      \Phi(\vb{p}) 
      &= \Var_{(A_1, A_2) \sim D}
      [v_1(A_1)] + \Var_{(A_1, A_2) \sim D}[v_2(A_2)] \\
      &= \mathbb{E}_{(A_1, A_2) \sim D}
      \left[v_1(A_1)^2\right] + 
      \mathbb{E}_{(A_1, A_2) \sim D}
      \left[v_2(A_2)^2\right] \\ 
      &- 
      \left(\mathbb{E}_{(A_1, A_2) \sim D}
      [v_1(A_1)]\right)^2 
      - \left(\mathbb{E}_{(A_1, A_2) \sim D}
      [v_2(A_2)]\right)^2 \\ 
      &= 
      \sum_{i=1}^{6} p_i \cdot v_1(\numberedBundle{i}{1})^2
      + \sum_{i=1}^{6} p_i \cdot v_2(\numberedBundle{i}{2})^2
      - \left(\sum_{i=1}^{6} p_i \cdot v_1(\numberedBundle{i}{1})\right)^2
      - \left(\sum_{i=1}^{6} p_i \cdot v_2(\numberedBundle{i}{2})\right)^2 \\
      &= {\vb{p}}^{\intercal}\vb{s} - ({\vb{p}}^\intercal {\vb{x}}_1)^2 - ({\vb{p}}^\intercal {\vb{x}}_2)^2,
    \end{align*}
    where $\vb{p} = (p_1, \ldots, p_4)^{\intercal}, 
    {\vb{x}}_1 = (v_{1}(\numberedBundle{1}{1}), \ldots, v_1(\numberedBundle{4}{1}))^{\intercal}, 
    {\vb{x}}_2 = (v_{2}(\numberedBundle{1}{2}), \ldots, v_1(\numberedBundle{4}{2}))^{\intercal}$ 
    and 
    ${\vb{s}} = \vb{x}_1 \ast \vb{x}_1 + \vb{x}_2 \ast \vb{x}_2$.}
    It is not difficult to see that 
    $ \nabla^2 (\vb{p}^\intercal \vb{x})^2 = 2 \vb{x} \vb{x}^\intercal$
    for any $\vb{x} \in \mathbb{R}^n$
    {
    and $ \nabla^2 ( \vb{p}^\intercal \vb{s}) = 0$.}
    But, $2 \vb{x}_1 \vb{x}_1^\intercal$ and $2 \vb{x}_2 \vb{x}_2^\intercal$ are both 
    positive-definite and thus so is their sum.
    Thus, 
    $\nabla^2 \Phi(p) = -2 (\vb{x}_1 \vb{x}^\intercal + \vb{x}_2 \vb{x}_2^\intercal)$ is negative-definite
    and so $\Phi$ is strongly concave.
    Continuity follows from the fact that $\Phi$ is polynomial.
\end{proof}

\compactconvex*

\begin{proof}
    The set $\mathcal{P}$ satisfies:
    \begin{align*}
      \mathcal{P} &= \Delta^3 \cap \left\{ \vb{p} \in \mathbb{R}^4 : \sum_{j=1}^4 p_{j} \cdot v_1(\numberedBundle{j}{1}) \geq 50 \right\} \cap
        \left\{ \vb{p} \in \mathbb{R}^4 : \sum_{j=1}^4  p_{j} \cdot v_2(\numberedBundle{j}{2}) \geq 50 \right\}.
    \end{align*}
    The set $\Delta^3$ is closed as the intersection of closed-halfspaces
    and the pre-image of $ \left\{ 1 \right\}$ 
    with respect to
    the continuous function $f(\vb{p}) = \sum_{j=1}^4 p_j$.
    The two other sets are closed as they are the pre-images of $[50, \infty)$ 
    of the continuous functions $f_1(\vb{p}) = \sum_{j=1}^4 p_{j} \cdot v_1\left(\numberedBundle{j}{1}\right)$ and
    $f_2(\vb{p}) = \sum_{j=1}^4 p_{j} \cdot v_2\left(\numberedBundle{j}{2}\right)$, respectively. 
    Thus, $\mathcal{P}$ is the intersection of a finite number of closed sets
    and thus is closed. 
   {
    Additionally, $\mathcal{P}$ 
    is clearly bounded as $\mathcal{P} \subseteq \Delta^3$
    and $\Delta^3$ is bounded.} Thus, by the Heine-Borel theorem, the set $\mathcal{P}$ is compact. 
    For convexity, notice that all constraints are linear.
\end{proof}

\extremepoints*

\begin{proof}
    It is not difficult to see that the set $\mathcal{P}$ is a polyhedron in $\mathbb{R}^4$
    and that all the constraints are linearly independent.
    Due to \cref{thm:extreme-vertex}
    we know that the extreme points of $\mathcal{P}$ are the vertices of $\mathcal{P}$.
    Thus, the extreme points are the points which make exactly four of the linear constraints tight.
    The constraint $\sum_j p_j = 1$ is always tight,
    and thus three remain. 
    All that is left is to show that 
    it is not possible that 
    three of the $p_j$'s are zero. 
    In this case, only one $p_j$ is non-zero 
    and thus must be equal to $1$. 
    But, no single allocation provides 
    the proportional share to both agents.  
\end{proof}
\end{document}